\newtheorem{obs}{Observation}
\newtheorem{theorem}{Theorem}
\newtheorem{lemma}{Lemma}
\newcommand{\charlotte}[1]{\textcolor{red}{{C:}#1}}
\newcommand{\sajjad}[1]{\textcolor{green}{{(Sajjad:} #1)}}
\title{Why Rumors Spread Fast in Social Networks, and How to Stop It\footnote{This paper appears in 32nd International Joint Conference on Artificial Intelligence (IJCAI-2023).}}
\author{
Ahad N. Zehmakan$^1$
\and
Charlotte Out$^2$\and
Sajjad Hesamipour Khelejan$^{3}$
\affiliations
$^1$ School of Computing, The Australian National University\\
$^2$ Department of Computer Science \& Technology, University of Cambridge\\
$^3$ School of Computer Science \& Statistics, Trinity College Dublin
\emails
ahadn.zehmakan@anu.edu.au,
ceo33@cam.ac.uk,
hesamips@tcd.ie
}
\begin{document}
\maketitle
\begin{abstract}
We study a rumor spreading model where individuals are connected via a network structure. Initially, only a small subset of the individuals are spreading a rumor. Each individual who is connected to a spreader, starts spreading the rumor with some probability as a function of their trust in the spreader, quantified by the Jaccard similarity index.
Furthermore, the probability that a spreader diffuses the rumor decreases over time until they fully lose their interest and stop spreading.

We focus on determining the graph parameters which govern the magnitude and pace that the rumor spreads in this model. We prove that for the rumor to spread to a sizable fraction of the individuals, the network needs to enjoy ``strong'' expansion properties and most nodes should be in ``well-connected'' communities. Both of these characteristics are, arguably, present in real-world social networks up to a certain degree, shedding light on the driving force behind the extremely fast spread of rumors in social networks.

Furthermore, we formulate a large range of countermeasures to cease the spread of a rumor. We introduce four fundamental criteria which a countermeasure ideally should possess. We evaluate all the proposed countermeasures by conducting experiments on real-world social networks such as Facebook and Twitter. We conclude that our novel decentralized countermeasures (which are executed by the individuals) generally outperform the previously studied centralized  ones (which need to be imposed by a third entity such as the government).
\end{abstract}
\vspace{-0.2cm}
\section{Introduction}

With the rapid development of the Internet, social media has become a convenient online platform for users to obtain information, express and exchange opinions and stay in touch with friends. However, online social networks also pave the road for the propagation of misinformation, particularly rumors (commonly defined as unverified information or deliberately falsified news). 
It is usually difficult for the public to recognize the falsehood of a rumor, especially if it is designed skillfully, cf.~\cite{vosoughi2018spread}. %Furthermore, the occurrence of emergent events (such as an earthquake) can intensify the formation and propagation of rumors even further due to the suddenness, urgency, high uncertainty, and destructiveness of such events.
The spread of rumors can mislead people to behave in irrational ways, which can cause a series of undesirable consequences, such as public panic, virtual assets losses, manipulation of the outcome of political events, and economic damages. %For instance, a rumor that Obama was injured in two explosions of the White House led to 10 billion USD losses in a few hours and caused the United States stock market to crash in a few minutes, cf.~\cite{peter2013bogus}. 
%In addition, it is usually difficult for the public to recognize the falsehood of a rumor, especially if it is designed skillfully, cf.~\cite{vosoughi2018spread}.
Consequently, there has been a growing demand and interest to gain insights into the rumor spreading dynamics and design powerful countermeasures to reduce the threats posed by rumors.

To shed some light on the fundamental characteristics and essential principles of rumor propagation phenomenon, scholars from a vast spectrum of backgrounds have introduced and studied various rumor spreading models, such as SIR model~\cite{zhao2012sihr}, Push-Pull protocol~\cite{giakkoupis2011tight}, DK model~\cite{daley1965stochastic} and the Independent Cascade (IC) model ~\cite{kempe2003maximizing}. In most of these models, the interactions and influence between the individuals are modelled using a graph structure, which represents a social network (SN). The state of the individuals (e.g., informed/uninformed) is updated following a rumor spreading rule. The updating rules are tailored to capture various properties observed in real-world scenarios, usually conceptualized by social scientists. 

One aspect of the rumor spreading dynamics which has attracted a substantial amount of attention is the design of effective countermeasures to stop or slow down the spread of rumors, e.g., blocking users, blocking connections, and spreading an ``anti-rumor'', cf.~\cite{he2015modeling}.

In the present paper, we introduce a rumor spreading model which inherits characteristics of the IC, Push-Pull, and SIR model and additionally captures the well-established sociological concepts of the impact of trust and forgetting mechanism. In an attempt at fighting rumor spreading, we study six countermeasures. We establish four essential criteria that a good countermeasure should possess
% (namely, effectiveness, easy to apply, inline with freedom of speech, not too intrusive)
and evaluate the proposed countermeasures on those. We observe that the decentralized countermeasures perform generally better.

%\textbf{Outline.} In the rest of this section, we provide some preliminaries, give a summary of our contributions, and overview the relevant prior work. In Section~\ref{why-section}, we give our findings on the extent and speed that rumors spread on syntactic graph structures (e.g., Erd\H{o}s-R\'{e}nyi random graph) and real-world graphs (e.g., Twitter), which permit us to characterize the graph parameters leading the spread of rumors. Our proposed countermeasures and their comparison are provided in Section~\ref{how-section}.

\subsection{Preliminaries}
\label{Preliminaries}

Let $G=\left(V,E\right)$ be a simple connected undirected graph, where $n:=|V|$ and $m:=|E|$.  
For a node $v\in V$, $N\left(v\right):=\{v'\in V: \{v',v\} \in E\}$ is the \emph{neighborhood} of $v$. Furthermore, $\hat{N}(v):=N(v)\cup \{v\}$ is the \textit{closed neighborhood} of $v$. Let $d\left(v\right):=|N\left(v\right)|$ be the \emph{degree} of $v$ in $G$. We also define $d_A(v):=|N(v)\cap A|$ for a set $A\subseteq V$. Furthermore, let $\partial(A):= \{v\in V\setminus A : \{v',v\}\in E, v' \in A\}$ be the \textit{node boundary} of $A\subset V$.

We define a \emph{coloring} to be a function $\mathcal{C}:V\rightarrow\{r,u,o\}$, where $r$, $u$, and $o$ represent \textit{red}, \textit{uncolored}, and \textit{orange} respectively. For a node $v\in V$, the set $N_a^{\mathcal{C}}\left(v\right):=\{v'\in N\left(v\right):\mathcal{C}\left(v'\right)=a\}$ includes the neighbors of $v$ which have color $a\in\{r,u, o\}$ in coloring $\mathcal{C}$.

\paragraph{Rumor Spreading Model.}
Consider an initial coloring $\mathcal{C}_0$ of a graph $G$. In each round, all nodes simultaneously update their color according to the following updating rule:
\vspace{0.2cm}

$\mathcal{C}_t(v)$ =
$\begin{cases}
o & \text{if } \mathcal{C}_{t-1}(v)=o \\
r & \text{if } \mathcal{C}_{t-1}(v)=r \text{ and } \mathcal{J}_t(v) < k \\
o & \text{if } \mathcal{C}_{t-1}(v)=r \text{ and } \mathcal{J}_t(v) = k \\
u & \text{if } \mathcal{C}_{t-1}(v)=u \text{ w.p. } p^*(v)\\
r & \text{if } \mathcal{C}_{t-1}(v)=u \text{ w.p. }1-p^*(v)
\end{cases}$

where $\mathcal{C}_t\left(v\right)$ is the color of node $v$ in the $t$-th round, integer $k$ is a model parameter, $\mathcal{J}_t(v)$ is the number of rounds $v$ has been red until round $t$, $\mathcal{S}(v,v^{\prime}):=|\hat{N}(v)\cap \hat{N}(v')|/|N(v)\cup N(v^{\prime})|$ for $v,v^{\prime}\in V$, and $p^*(v):= \prod_{v^{\prime}\in N_{r}^{\mathcal{C}_{t-1}}(v)} \left(1-\frac{S(v,v^{\prime})}{2^{\mathcal{J}_{t}(v')}}\right)$. A red node corresponds to an individual who is \emph{informed} of the rumor. An informed node stops spreading the rumor after $k$ rounds and turns orange (\emph{uninterested}), which it remains forever. An uncolored node corresponds to an \emph{uninformed} individual. If an uninformed (uncolored) node $v$ is adjacent to an informed (red) node $v'$, then $v'$ turns $v$ into red w.p. $S(v,v^{\prime})/2^{\mathcal{J}_{t}(v')}$ independently. Thus, $v$ becomes red in the next round w.p. $1-p^*$ and remains uncolored w.p. $p^*$. The coefficient $1/2^{\mathcal{J}(v^{\prime})}$ corresponds to the probability that $v'$ spreads the rumor and $\mathcal{S}(v,v')$ is the probability that $v$ accepts it. The value of $1/2^{\mathcal{J}(v^{\prime})}$ accounts for the fact that $v'$ might not necessarily spread the rumor w.p. 1 and the probability decreases exponentially in the number of rounds $v'$ has been informed of the rumor, reflecting the fact that an individual loses interest in a rumor over time, cf.~\cite{zhao2013rumor}. The coefficient $\mathcal{S}(v,v')$ (the Jaccard index) which measures the similarity between two nodes reflects the fact that people are more likely to accept information from their trusted connections ~\cite{figeac2021behavioral}.
%Furthermore, it is known~\cite{figeac2021behavioral} that people are more likely to accept information from their trusted connections, and trust is in turn observed to be closely correlated with homophily. To account for this phenomenon, we rely on the Jaccard index, i.e., $\mathcal{S}(v,v')$, a graph parameter designed to measure the similarity between two nodes. %(This is also aligned with the experimental findings of prior work given in Section~\ref{prior-work}.) 
In the numerator, we use $\hat{N}(v)$. This is to ensure that for two adjacent nodes $v,u$, $S(v,u)$ (the accepting probability) is not zero. We could analogously define $S(v,u)=(|N(v)\cap N(u)|+2)/|N(v)\cup N(u)|$ since we are always concerned about adjacent nodes. We note that we can also view $S(v,v^{\prime})$ as the weight of the edge $\{v,v'\}$.%and thus the weight of this edge or the tie strength increases in the number of common neighbors between $v$ and $v'$.

Our model is different from the IC model in two ways: In the IC model (i) $k$ is always set to 1 (i.e., a red node becomes orange after one round) (ii) the weights are usually assigned randomly.

Starting from any initial coloring, the process eventually reaches a \emph{fixed} coloring where all nodes are orange or uncolored. If the process reaches a coloring with a constant fraction of orange nodes, say $10\%$, then we say that the rumor \textit{spreads}, and it does not otherwise. (There is nothing unique about $10\%$ and our results hold for similar fixed values.)

\paragraph{Graphs.} 
Let $\mathcal{G}_{n,p}$ denote the Erd\H{o}s-R\'{e}nyi (ER) random graph, which is the random graph on $n$ nodes, where each edge is present independently w.p. $p$. For integers $n$ and $r$, we define the ($n$,$r$)-\textit{flower} graph in the following way. Consider a cycle $C_N=v_1,\cdots, v_N$ for $N=n/r$. For each node $v_i$, add a distinct clique of size $r-1$ to the graph and add an edge between $v_i$ and every node in the clique. We refer to each node $v_i$ and its clique as a \textit{super node} and $v_i$ is called the \textit{boundary} node of the super node. We are particularly interested in the case of $r=\log^2(n)$, which is simply called the $n$-\textit{flower} graph. (Note that ($n$, $1$)-flower graph is simply a cycle graph with $n$ nodes.)

To measure the expansion of a graph, we consider an algebraic characterization of expansion. Let $\lambda(G)$ be the second-largest absolute eigenvalue of the adjacency matrix of $G$. Small values of $\lambda(G)$ imply that $G$ has strong expansion properties (i.e., is well-connected). For integers $n,d$, we define the ($n$, $d$)-\emph{moderate expander} graph in the following way, where we always assume that $n$ is ``significantly'' larger than $d$. Let $H$ be a $N$-node, $D$-regular graph such that $\lambda(H) \leq C\sqrt{D}$, where $N = \frac{n}{\log^2(n)}$ and $D = d\cdot \log^2(n)$ and $C$ is a positive constant. Replace every node $x$ in $H$ with a clique of size $\log^2 n$ and then evenly distribute the $D$ edges of $x$ among these $\log^2 n$ nodes. The obtained $n$-node ($\log^2 n+d-1$)-regular graph is an ($n$,$d$)-moderate expander, which is denoted by $\mathcal{M}_{n,d}$. Similar to the $(n,r)$-\emph{flower} graph, the set of $\log^2 n$ nodes in each of the $N$ cliques is called a \textit{super node}.
% Furthermore, we denote the super node which contains a node $v$ of $G$ by $X_v$.
(Note that moderate expanders are not meant to mimic real-world SNs. They are solely designed to maximize the spread of rumors and are objects of theoretical interest.) 
\paragraph{Experimental Setup.}
\label{experimental setup}
For our experiments, we rely on publicly available graph data from~\cite{snapnets}. Our experiments were conducted on the following SNs: Twitter ($81306$ nodes and $1342310$ edges), Facebook ($4039$ nodes and $88234$ edges), Google+ ($107614$ nodes and $13673453$ edges), Twitch Germany ($9498$ nodes and $153138$ edges), and Twitch France ($6549$ nodes and $122666$ edges). We use shorthand TW, FB, G+, T-GE, T-FR, respectively. We also conducted experiments on ER random graph and Hyperbolic random graph (HRG). %which are known to have fundamental properties observed in real-world networks such as small diameter, clustering property, and power law degree distribution. See~\cite{krioukov2010hyperbolic} for more details.%HRG is a synthetic graph model to simulate real-world SNs. 
The parameters in these graphs were set such that the (expected) number of nodes/edges is comparable to the ones in the aforementioned real-world networks. For HRG, one also needs to provide the exponent of the power-law degree distribution $\beta$ and the temperature $T$ as the input parameters. We set $\beta= 2.5$ and $T = 0.6$. We used the algorithm of~\cite{staudt2015networkit} for the generation of HRG random graphs. Furthermore, the experiments which required random choice of edges or colors were executed 100 times
% (except for the counter measure 2 experiments, which were executed 10 times)
and then the average output was considered. The standard deviations are reported in Appendix \ref{appendix-plots}. The code for the experiments is available at \url{https://github.com/charlotteout/RumourSpreading}.

\paragraph{Assumptions.} All logarithms are to base $e$, unless pointed out otherwise. We let $n$ tend to infinity and say an event $\mathcal{E}$ happens with high probability (w.h.p.) if it occurs w.p. $1-o(1)$. We always assume that initially one randomly chosen node is red, and all other nodes are uncolored, otherwise, it is stated explicitly.
% (The random choice is not very essential since most graphs we study are somewhat symmetric, i.e., close to being vertex-transitive).
Furthermore, we suppose the parameter $k$ is a small integer, say $k=5$, but our results would hold for any constant value of $k$. %As mentioned, we say a rumor spreads if at least $10\%$ of nodes become red. (This is for simplicity, and our results would work for similar percentages.)

\subsection{Our Contribution}
We study a rumor spreading model which captures fundamental characteristics  such as the randomized spreading mechanism and various agent types as introduced in the IC, Push-Pull and SIR model, as well as sociological concepts such as the impact of homophily on trust ~\cite{granovetter1973strength}, formulated by the Jaccard index, and the forgetting mechanism~\cite{zhao2013rumor}.
%in IC, Push-Pull, and SIR model such as the randomized spreading mechanism and various agent types, but also incorporates two essential concepts from social sciences, namely, the impact of homophily on trust ~\cite{granovetter1973strength}, captured by the Jaccard index, and the forgetting mechanism~\cite{zhao2013rumor}, modelled by decrease in spreading probability over time.

Firstly, we address the question: What are the graph structures for which the rumor spreads (in other words, what graph parameters govern the spread of rumors)? It has previously been argued that that information disseminates quickly when the graph has strong expansion properties (i.e., is well-connected),
cf.~\cite{SauerwaldVertexexpansion}. 
%According to the rich literature on the connection between information diffusion and network structure, (mis)-information is speculated to travel fast on graphs with strong expansion properties (i.e., well-connected graphs), cf.~\cite{SauerwaldVertexexpansion}. 
However, for our model expansion is not solely sufficient for a rumor to spread, especially if the graph is sparse which is usually the case in the real-world SNs. In particular, we prove that in our model on the ER random graph $\mathcal{G}_{n,p}$ (which enjoys strong expansion properties, cf.~\cite{le2017concentration}) for $p$ sufficiently smaller than $1/\sqrt{n}$, the rumor does not spread with a constant probability.

%While this holds true for some models, we prove that in our model on the ER random graph $\mathcal{G}_{n,p}$ (which enjoys strong expansion properties, cf.~\cite{le2017concentration}) for $p$ sufficiently smaller than $1/\sqrt{n}$, the rumor does not spread with a constant probability (we actually prove a more comprehensive result which states that the process exhibits a threshold behavior). Hence, in our model, expansion is not solely sufficient for a rumor to spread, especially if the graph is sparse which is usually the case in the real-world SNs.

Additionally, we show that an abundance of very well-connected local communities (which result in  large values of $\mathcal{S}(v,v')$ for adjacent nodes $v,v'$) alone also cannot guarantee extensive spread of rumors. In particular, we prove that on an ($n$, $r$)-flower graph, where $\mathcal{S}(v,v')=1$ for almost every two adjacent nodes $v,v'$, the rumor does not spread w.h.p. for $r\le n^{1-\epsilon}$ and $\epsilon>0$ (even when we start with $o(\log n)$ red nodes).

However, we show that the combination of these two properties guarantees an extremely fast spread of rumors. More precisely, we prove for even very sparse moderate expander graphs, the rumor spreads in logarithmically many rounds. Roughly speaking, the strong local communities help the rumor to spread quickly inside a community once it reaches a node in that community and expansion ensures that it breaks out into other communities invasively. (We emphasize that the average degree of moderate expanders in this set-up is in the order of $\log^2 n$, which is much smaller than the average degree of $\sqrt{n}$ required in ER graphs for spreading.)

A natural question to ask is whether the rumor spreads on real-world SNs in our model. Our experiments on real-world graph data such as Twitter and Facebook demonstrate that the rumor indeed spreads to a very large body of the network in a short period of time. (A visualization of the process on the Facebook SN is given in Appendix~\ref{appendix-visualize}.) While the social graphs which emerge in the real world do not have the expansion and community structure tailored for the moderate expanders, they still enjoy a certain level of expansion, and well-connected communities are present in abundance. Note that this is an indication that our model is more realistic than previous models such as Push-Pull models, which advocate strong expansion properties as necessary and sufficient condition for fast spread of rumors, as we know that in real life rumors spread very fast in real-world SNs, and they are not strong expanders.

%One of the main motivations for gaining insight into the fundamental principles of rumor spreading is to design countermeasures which successfully stop rumors from going viral. 
Moreover, we formulate and investigate several countermeasures. Some of them (e.g., blocking nodes and edges) need to be implemented by a third entity such as the government, and we refer to them as centralized countermeasures. On the other hand, the decentralized ones are executed by the members of the network. %e.g., by fact checking information or spreading unverified information less aggressively. 
It turns out that the proposed decentralized countermeasures not only enjoy several desirable criteria such as not interfering with freedom of expression and not being too intrusive, but also significantly outperform the centralized ones in stopping the spread of the rumor according to our experiments. The prior work has focused on the development of centralized countermeasures, see Section~\ref{prior-work} (which are also implemented in practice up to some degree, e.g., by blocking accounts). Our work aspires to send out the message that the focus should be shifted towards the development of decentralized countermeasures, which can be achieved for instance through educating the members rather than forceful actions of a third entity.

\subsection{Prior Work}\label{prior-work}

A plethora of rumor spreading models have been developed and studied in recent years, cf.~\cite{n2020rumor,zehmakan2023random}. Here, we focus on the most fundamental and relevant models, which have inspired our work.

\paragraph{Push-Pull Models.} In this set-up, each node is either red or uncolored. In each round, every red node makes a randomly chosen neighbor red (Push model), or every uncolored node adopts the color of a randomly chosen neighbor (Pull model), or both (Push-Pull model). Since there is no forgetting mechanism in place, all nodes eventually become red (i.e., the rumor spreads). Thus, a natural question is how long this takes. For the Push model, the spreading time is known~\cite{feige1990randomized} to be $\mathcal{O}(\Delta \cdot (\Lambda + \log(n))$, where $\Delta$ and $\Lambda$ are the maximum degree and diameter of the underlying graph. For the Push-Pull model, after a long line of research, the bound $\mathcal{O}\left(\Phi^{-1} \log(n)\right)$, for $\Phi$ being the conductance of the graph, was proven~\cite{SauerwaldVertexexpansion}.

\paragraph{Independent Cascade (IC) Model.}
In the IC model~\cite{goldenberg2001talk}, in each round every red node $v$ makes an uncolored node $u$ in its neighborhood red w.p. $p_{vu}$. A red node becomes orange after one round, which is similar to setting $k=1$ in our model. However, in the IC model, the probabilities $p_{vu}$ are chosen uniformly at random. Motivated by viral marketing, the main focus in this model is developing algorithms for finding subsets of nodes that maximize the spread of the red color, mostly exploiting monotonicity and submodularity properties (cf.~\cite{mossel2007submodularity,chen2011influence}).

% \paragraph{SIR Model.} In the SIR based models, such as the original SIR model \cite{kermack1927contribution} originating from epidemiology and the DK-model \cite{daley1965stochastic} specific to rumor spreading, nodes can be in three states; susceptible/ignorant (uncolored), infectious/active (red) and recovered/stifler (orange). These models also include a forgetting mechanism. In the SIR model a red node becomes orange after one round and  in the DK-model a red node becomes orange if it gets contacted by another red node. Unlike the IC and Push-Pull models where the interactions between idividuals are modeled using a graph structure, in the SIR models it is usually assumed that the homogeneous mixing condition holds.

\paragraph{Weighted Connections.} 
Recall that in the IC model (and other similar models) weights are assigned to the edges randomly. As this is not entirely realistic, it would be relevant to introduce meaningful weight assignment mechanisms. Using the communication information of individuals on various real-world networks,~\cite{onnela2007structure} and~\cite{goyal2010learning} observed that there is a strong correlation between the number of shared friends of two individuals and their level of communication. Consequently, they proposed the usage of similarity measures, such as Jaccard-like parameters, to approximate the weights of connections between nodes. This is also aligned with the well-studied strength of weak ties hypothesis~\cite{granovetter1973strength}.
% Building on these results,~\cite{cheng2013epidemic} studied a rumor spreading model where the spread probability between two nodes takes a normalized similarity measure into account (such that in expectation one node in the neighborhood of a red node becomes red) in addition to immunity and decay probability and investigated the model on some graph data experimentally.
This line of research has inspired the choice of Jaccard index in our model.

\paragraph{Countermeasures.} A large part of the research efforts for developing countermeasures is concentrated around 
 blocking nodes and edges. However, since in most models finding the most ``influential'' nodes/edges is NP-hard, cf.~\cite{kempe2003maximizing}, the focus has been on approximate blocking strategies, which use structural properties. For nodes, various algorithms such as blocking nodes with the highest degree, betweenness, and closeness have been investigated, cf.~\cite{he2015modeling,wang2015maximizing,yu2008finding}. Furthermore, for different greedy-based edge blocking strategies to minimize the spread in the IC model, see~\cite{kimura2008minimizing,yan2019rumor}. Other studied countermeasures are spreading the truth as an anti-rumor, cf.~\cite{tripathy2010study,ding2020efficient}, inoculation strategies (which rest on the idea that if people are forewarned that they might be misinformed, they become more immune), cf.~\cite{lewandowsky2021countering}, and accuracy flags, cf.~\cite{gausen2021can}. For more results on countermeasures also see~\cite{coroexploiting,bredereck2021maximizing,zhengmfan,zehmakan2019spread,qian2018neural,zehmakan2019tight,ma2016detecting,zehmakan2021majority}.

\section{When Does a Rumor Spread?}\label{why-section}
\subsection{Erd\H{o}s-R\'{e}nyi Random Graph}
\begin{theorem}\label{er-thm}
Consider the coloring where only one node is red (the rest is uncolored) on $\mathcal{G}_{n,p}$ with $p \leq \frac{1}{n^{\frac{1}{2} + \epsilon}}$ for any constant $\epsilon > 0$. The rumor does not spread with a constant probability.
\end{theorem}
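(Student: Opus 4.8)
The plan is to isolate a single constant-probability event on which the process halts almost immediately: the event that the initial red node infects none of its neighbours. Let $v_0$ be the unique red node (all others uncolored). If $v_0$ fails to turn any neighbour red during its $k$ active rounds, then no node other than $v_0$ ever becomes red, the process freezes with exactly one orange node, and since $1\ll 0.1n$ the rumor does not spread. Hence it suffices to prove that $\Pr[v_0 \text{ infects no neighbour}]\ge c$ for some constant $c>0$ independent of $n$.

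To estimate this probability I would first control the weights $S(v_0,v)$ for $v\in N(v_0)$. For adjacent $v_0,v$ one has $|\hat N(v_0)\cap \hat N(v)| = 2 + |N(v_0)\cap N(v)|$ and $|N(v_0)\cup N(v)|\ge d(v_0)$, so $S(v_0,v)\le (2+|N(v_0)\cap N(v)|)/d(v_0)$. Summing over $v\in N(v_0)$ and noting that $\sum_{v\in N(v_0)}|N(v_0)\cap N(v)|$ equals twice the number of edges inside $N(v_0)$ (i.e.\ twice the triangles through $v_0$, call it $t(v_0)$) yields $\sum_{v\in N(v_0)} S(v_0,v)\le 2 + 2t(v_0)/d(v_0)$. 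In $\mathcal{G}_{n,p}$ with $p\le n^{-1/2-\epsilon}$ the relevant quantities concentrate: $d(v_0)$ is of order $np$, while $\mathbb{E}[t(v_0)]\approx \binom{np}{2}p = O(n^2p^3)=O(n^{1/2-3\epsilon})$, so $t(v_0)/d(v_0)=O(n^{-2\epsilon})=o(1)$ (indeed the per-edge common-neighbour count already has expectation $\approx np^2=o(1)$). Thus, on an event of probability $1-o(1)$, $\sum_{v\in N(v_0)} S(v_0,v)\le 3$ and every individual weight is $o(1)$.

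On this structural event I would bound the desired probability directly. Since the transmissions along distinct edges and across the (at most $k$) rounds are independent, $\Pr[v_0 \text{ infects nobody}\mid G] = \prod_{v\in N(v_0)}\prod_{j}\bigl(1 - S(v_0,v)/2^{j}\bigr)$. Using the Weierstrass bound $\prod_j(1-S/2^j)\ge 1-2S$ (as $\sum_{j\ge 0}2^{-j}=2$) and then $1-2S\ge e^{-4S}$, valid because every $S(v_0,v)=o(1)$, this is at least $\exp\bigl(-4\sum_{v\in N(v_0)} S(v_0,v)\bigr)\ge e^{-12}$, a positive constant. Combining with the $1-o(1)$ probability of the structural event gives $\Pr[\text{rumor does not spread}]\ge (1-o(1))\,e^{-12}=\Omega(1)$.

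The quantitatively delicate point — the heart of the argument — is the structural estimate $\sum_{v\in N(v_0)} S(v_0,v)=O(1)$, i.e.\ that the Jaccard weights on the edges at $v_0$ are small. This is precisely where $p\le n^{-1/2-\epsilon}$ is used: it forces the expected number of common neighbours of two adjacent nodes, $\approx np^2$, to be $o(1)$, so the numerator of the Jaccard index collapses to its baseline value $2$ while the denominator stays of order $np$; each weight is then $\approx 1/(np)$ and the $\approx np$ of them sum to $O(1)$. For completeness I would treat the degenerate ranges of $p$ separately: below the giant-component threshold the component of $v_0$ has $o(n)$ nodes and the rumor cannot spread deterministically, so only the regime $np\to\infty$ (where degrees and triangle counts concentrate) requires the probabilistic argument above. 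I would also remark that the same weight bound shows the whole infection process is dominated by a Galton--Watson process of bounded mean, giving an alternative route to a constant extinction probability, though the ``$v_0$ infects nobody'' event already suffices.
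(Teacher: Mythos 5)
Your overall strategy is the same as the paper's: isolate the constant-probability event that the seed $v_0$ infects none of its neighbours during its $k$ active rounds, and establish it via a with-high-probability structural bound showing that the Jaccard weights incident to $v_0$ are collectively $O(1)$ because common neighbourhoods are tiny when $p\le n^{-1/2-\epsilon}$. Your bookkeeping differs slightly and pleasantly — you control $\sum_{v\in N(v_0)}S(v_0,v)\le 2+2t(v_0)/d(v_0)$ via the triangle count at $v_0$, whereas the paper shows that w.h.p.\ \emph{every} pair of nodes has at most $s=\lceil 1/\epsilon\rceil+1$ common neighbours and then bounds each weight by $(s+2)/d(v_0)$ — but the two computations are morally identical first-moment arguments.

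There is, however, a genuine (if easily repaired) gap in your case analysis over $p$. Your main chain of inequalities needs every individual weight $S(v_0,v)$ to be $o(1)$ (for the step $1-2S\ge e^{-4S}$, which already fails around $S\approx 0.4$ and is vacuous at $S=1/2$), and this requires $d(v_0)\to\infty$, i.e.\ $np\to\infty$. Your fallback for small $p$ invokes the giant-component threshold, but that only covers $np<1$; the regime $1\le np=\Theta(1)$ (e.g.\ $p=5/n$) is covered by neither branch: there is a linear-size component, $d(v_0)=O(1)$ with constant probability, and two adjacent degree-one nodes have $S=1$, making your Weierstrass bound negative. The fix is one line — for $np=O(1)$ the seed is isolated with probability $(1-p)^{n-1}=\Omega(1)$, or alternatively each neighbour independently resists with probability at least the absolute constant $\prod_{j=1}^{k}(1-2^{-j})$ and $d(v_0)=O(1)$ with constant probability — but as written the argument does not cover the full range of $p$ in the statement. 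The paper sidesteps this entirely by conditioning on $d(v_0)=d$ and proving the lower bound $\bigl(1-\min\bigl(\tfrac{s+2}{2d},\tfrac12\bigr)\bigr)^{d}\ge C$ uniformly over all $d$ via a two-case analysis; you may want to adopt that uniformization rather than splitting on the growth rate of $np$.
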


\begin{proof}
Define $s:=\lceil1/\epsilon\rceil+1$.
For a pair of distinct nodes $v$ and $u$, the probability that the inequality $|N(v)\cap N(u)|\ge s$ holds is upper-bounded by $n-2 \choose s$ $p^{2s}$. Let $X$ be the number of pairs which satisfy the above inequality. Then, we have $\mathbb{E}[X] \le \binom{n}{2} \binom{n-2}{s}p^{2s}\leq n^{s+2}p^{2s} \leq \frac{n^{s+2}}{n^{s+2s\epsilon}} = o(1)$, where we used that $p\le 1/n^{\frac{1}{2}+\epsilon}$ and $s\epsilon >1$, respectively. Hence, by Markov's inequality (Lemma~\ref{Markov} in Appendix~\ref{appendix-ineq}), $\Pr[\mathcal{A}]=\Pr[X \geq 1] \le o(1)$, where $\mathcal{A}$ is the event that $X\ge 1$ (and $\bar{\mathcal{A}}$ is the complement of $\mathcal{A}$).

\noindent Let $v$ be the only node which is colored red in $\mathcal{C}_0$. For each node $u \in N(v)$, we have $\Pr[\mathcal{C}_1(u) = r| d(v) = d \land \bar{\mathcal{A}}] = \frac{|\hat{N}(v) \cap \hat{N}(u)|}{2|N(v) \cup N(u)|} \leq \min\left(\frac{s+2}{2d}, \frac{1}{2}\right)$. For $(s+2)/(2d)$, we used that $|\hat{N}(v)\cap \hat{N}(u)|\le|N(v)\cap N(u)|+2\le s+2$ and $|N(v)\cup N(u)|\ge d(v)=d$. The upper bound of $1/2$ holds because $|\hat{N}(v)\cap \hat{N}(u)|\le|N(v)\cup N(u)|$.

\noindent Let $\mathcal{E}_i$, for $1\le i\le k$, denote the event that $v$ does not make any of its neighbors red in the $i$-th round. Then, $ \Pr[\mathcal{E}_1 |d(v) = d \land \bar{\mathcal{A}}] \geq \left(1 - \min\left(\frac{s+2}{2d},\frac{1}{2}\right) \right)^d$. If $\frac{s+2}{2d} < 1/2$ then $\left(1 - \frac{s+2}{2d} \right)^d \geq \left(\frac{1}{4}\right)^{(s+2)/2}$ (which gives a constant lower bound) using the estimate $(1-x) \geq \left(\frac{1}{4}\right)^{x}$ for $x < 1/2$. If $\frac{s+2}{2d}\geq 1/2$, then $d \leq s+2$, which implies that $(1/2)^d$ is a constant. Therefore, in both cases, we can lower bound $\left(1 - \min\left(\frac{s+2}{2d},\frac{1}{2}\right)\right)^d$ with some constant $C>0$.

\begin{align*}
    \Pr[\mathcal{E}_1] & =  \Pr[\mathcal{\bar{\mathcal{A}}}] \cdot \Pr[\mathcal{E}_1|\bar{\mathcal{A}}] + \Pr[\mathcal{\mathcal{A}}] \cdot \Pr[\mathcal{E}_1|\mathcal{A}] \ge \\ & \Pr[\mathcal{\bar{\mathcal{A}}}]\cdot \sum_{d=0}^{n-1}\Pr[\mathcal{E}_1|d(v) = d \land \bar{\mathcal{A}}]\cdot \Pr[d(v) = d]
    \ge \\&\Pr[\mathcal{\bar{\mathcal{A}}}]\cdot \sum_{d=0}^{n-1}C\cdot \Pr[d(v) = d] = (1-o(1))\cdot C  \geq \frac{C}{2}.
\end{align*}

\noindent With a similar argument, we can prove that $\Pr[\mathcal{E}_i|\mathcal{E}_{i-1}\land \cdots \land \mathcal{E}_1]\ge C/2$ for $2\le i\le k$. Thus, we have $\Pr[\mathcal{E}_1 \land \cdots \land \mathcal{E}_k] =  \Pr[\mathcal{E}_k | \mathcal{E}_{k-1} \land \dots \land \mathcal{E}_1] \cdots \Pr[\mathcal{E}_2|\mathcal{E}_1]\cdot\Pr[\mathcal{E}_1] \geq \left(C/2\right)^k$.
This implies that w.p. at least $(C/2)^k=(C/2)^5$, no node becomes red during the first $k$ rounds. In that case, the process ends with one orange node and $n-1$ uncolored nodes in $k$ rounds. This bound on $p$ turns out to be tight, please refer to Appendix~\ref{appendix-ER-tightness} for a full proof.
\end{proof}

%\paragraph{Tightness.} If $p \geq \frac{1}{n^{\frac{1}{2} - \epsilon}}$, then after one round there will be $\Omega(n^{2\epsilon})$ red nodes w.h.p. (and then the rumor continues to spread). This implies that the bound on $p$ is tight. Furthermore, if we replace ``constant probability'' with ``w.h.p.'', the statement of the theorem is no longer true. Please refer to Appendix~\ref{appendix-ER-tightness} for a full proof.

\subsection{Flower Graph}
A super node whose all nodes are uncolored is called \textit{uncolored} and \textit{colored} otherwise. And it is said to be \textit{red} if all its nodes are red.
\begin{theorem}
\label{flower-thm}
\label{(n,r)-cycle proof}
Consider an ($n$,$r$)-flower graph for $r\le n^{1-\epsilon}$ and constant $\epsilon>0$. If initially there are $s(n)=o(\log n)$ red super nodes (and the rest is uncolored), the rumor does not spread w.h.p.
\end{theorem}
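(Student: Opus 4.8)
The plan is to exploit the fact that the only channel through which the rumor can pass between two distinct super nodes is the single cycle edge joining their boundary nodes, and that this channel is extremely ``lossy''. First I would record two structural facts. Every non-boundary (clique) node of a super node has all of its neighbors inside the same super node, so it can only be colored by a node of its own super node; consequently an initially uncolored super node can become colored only after its boundary node $v_i$ is colored, and $v_i$ in turn can be colored only by one of its cycle-neighbors $v_{i-1},v_{i+1}$. Hence the inter-community spread is entirely governed by the boundary nodes along the cycle $C_N$. Second, I would compute the relevant Jaccard index for two adjacent boundary nodes: one has $\hat{N}(v_i)\cap \hat{N}(v_{i+1})=\{v_i,v_{i+1}\}$ and $|N(v_i)\cup N(v_{i+1})|=2r+2$, so $\mathcal{S}(v_i,v_{i+1})=\frac{2}{2r+2}=\frac{1}{r+1}$, which is tiny.

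Next I would bound the probability that the rumor ever crosses a fixed cycle edge $\{v_i,v_{i+1}\}$. A red boundary node spreads for at most $k$ rounds, and in the round in which it has been red for $j$ rounds it colors a fixed uncolored neighbor with probability $\mathcal{S}(v_i,v_{i+1})/2^{j}$; since (as confirmed by the indexing used in Theorem~\ref{er-thm}) the active exponents are $j=1,\dots,k$ and each attempt is an independent coin flip, the total probability that $v_i$ ever colors $v_{i+1}$ over its whole lifetime is at most $\sum_{j\ge 1}\mathcal{S}(v_i,v_{i+1})/2^{j}\le \mathcal{S}(v_i,v_{i+1})=\frac{1}{r+1}\le \frac1r=:q$. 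The same bound holds in the reverse direction, so the rumor crosses any given cycle edge with probability at most $q$, and this holds conditionally on the whole history up to that point because the coins are fresh.

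I would then bound the spread by a percolation-type argument on the cycle. Since the infection attempts use fresh independent coins, the conditional probability that the rumor crosses the $\ell$-th edge of a path, given that it already crossed the previous $\ell-1$ edges, is still at most $q$; chaining these conditional bounds gives that the rumor reaches a super node at cycle-distance $\ell$ from any fixed initial red super node with probability at most $q^{\ell}$. Summing over the (at most two) super nodes at each distance and over the $s(n)$ initial red super nodes, the expected number of colored super nodes is at most $s(n)\bigl(1+2\sum_{\ell\ge1}q^{\ell}\bigr)=(1+o(1))\,s(n)=o(\log n)$. Each super node contains $r$ nodes, so if $Z$ denotes the number of nodes ever colored (red or orange), then $\mathbb{E}[Z]\le (1+o(1))\,r\,s(n)$. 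Using $r\le n^{1-\epsilon}$ and $s(n)=o(\log n)$ this is $o(n^{1-\epsilon}\log n)=o(n)$, so Markov's inequality yields $\Pr[Z\ge n/10]\le \mathbb{E}[Z]/(n/10)=o(1)$; since the orange nodes form a subset of $Z$, the rumor does not spread w.h.p.

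The delicate point is the percolation reduction: the process is dynamic and time-dependent (the crossing probability of an edge depends on when and for how long its endpoint has been red, and a boundary node may be attacked from both sides), so it is not literally independent bond percolation. What makes the argument go through is that every individual infection attempt is an independent Bernoulli trial, which lets me upper bound the conditional crossing probability of each edge by the single lifetime bound $q$ uniformly over the past and then multiply these bounds along a path. I expect making this chaining rigorous -- together with the bookkeeping that a super node contributes at most its $r$ nodes whenever any one of them is colored -- to be the main technical obstacle; the similarity computation and the final Markov step are routine.
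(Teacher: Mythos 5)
Your argument is correct in its essentials, but it is a genuinely different route from the paper's. The paper reasons in \emph{phases} of $k$ rounds: it partitions the super nodes into maximal reddish and uncolored paths, notes that the frontier consists of at most $2s(n)$ boundary nodes each of which survives an entire phase uncolored with constant probability $(3/4)^k$ (using only the crude bound $\mathcal{S}\le 1/2$), deduces that within $t^*=(1/C)^{2s(n)}\log n=n^{o(1)}\log n$ phases some phase extinguishes the whole frontier w.h.p., and separately bounds the growth of the reddish region by $O(s(n))$ super nodes per round, for $O(n/\log n)$ colored nodes in total. You instead exploit the quantitative smallness of the inter-super-node similarity $\mathcal{S}(v_i,v_{i+1})=1/(r+1)$ in a first-moment/percolation bound: each directed cycle edge is ever crossed with probability at most $q\le 1/2$, so the expected reach per source is $O(1)$ super nodes, $\mathbb{E}[Z]=O(r\,s(n))=o(n)$, and Markov finishes. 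Your route buys a much sharper quantitative conclusion and in fact a stronger theorem (it tolerates $s(n)=o(n^{\epsilon})$ initial red super nodes, not merely $o(\log n)$); its cost is that the percolation reduction must be made rigorous. The clean way is deferred decisions: for every directed cycle edge and every age $1\le j\le k$ pre-flip an independent coin of bias $\mathcal{S}/2^{j}$, set $Y_e=1$ iff some coin of the directed edge $e$ succeeds (so $\Pr[Y_e=1]\le\mathcal{S}$ and the $Y_e$ are mutually independent), and observe that tracing the infection chain of any newly colored boundary node back to a source yields a directed cycle path all of whose edges satisfy $Y_e=1$; this turns your conditional chaining into a product over genuinely independent events. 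Two small repairs: keep $q=1/(r+1)\le 1/2$ rather than relaxing to $1/r$ (for $r=1$, the plain cycle, $1/r=1$ and your geometric series diverges), and when summing over targets account for the long way around the cycle, which contributes a negligible extra $q^{N-\ell}$ per target.
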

\textit{Proof Sketch.}
 A path of super nodes is a sequence of super nodes which form a path in the cycle obtained from collapsing each super node into a node. A path is \textit{uncolored} if all its super nodes are uncolored. In a \textit{reddish path}, there are no two adjacent uncolored super nodes and the endpoints are colored. We note that for any coloring of the ($n$,$r$)-flower graph, there is a set of maximal uncolored and reddish paths which partition the nodes in the graph.

Define a \textit{phase} to be a sequence of $k$ rounds. Let $\mathcal{C}$ be the coloring at the beginning of phase $i$.
% Consider all the endpoints of the maximal reddish paths in the aforementioned partitioning. Among these endpoints (super nodes) consider the ones which include at least one red node. Let $R$ be the set of all non-orange boundary nodes of these endpoints. Furthermore,
Consider all the endpoints of the uncolored paths in the aforementioned partitioning and define $U$ to be their boundary nodes. Let $\mathcal{E}_i$ be the event that no node in $U$ becomes red during the whole phase.

We observe that if the event $\mathcal{E}_i$ occurs, then all boundary nodes of the reddish paths endpoints become orange. Thus, all nodes which are not on any reddish path remain uncolored forever. Let us define $t^*:= (1/C)^{2s(n)} \log(n)$, for a suitably chosen constant $0<C<1$, then with some relatively straightforward calculations, we can show that $\Pr[\land^{t^*}_{i=1}\bar{\mathcal{E}}_{i}] \le \frac{1}{n}$. Thus, w.h.p. after at most $t^*$ phases (i.e., $kt^*$ rounds), we reach a coloring where all nodes which are not on any reddish path remain uncolored forever. Furthermore, we claim that the number of nodes on the reddish paths during the first $kt^*$ rounds is sub-linear. Hence, the rumor does not spread w.h.p. A full proof is given in Appendix~\ref{appendix-flower}. \qed

\subsection{Moderate Expander}\label{me-section}
\begin{theorem}\label{me-thm}
\label{rumorspreadsonmoderateexpander}
Consider an ($n$, $d$)-moderate expander $\mathcal{M}_{n,d}$ with $d=\omega(1)$. If initially there is a red node (and the rest are uncolored), then the rumor spreads w.h.p. in $\mathcal{O}(\log_d n)$ rounds.
\end{theorem}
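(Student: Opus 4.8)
\textit{Proof sketch.} The plan is to decompose the argument into a \emph{local ignition} step inside each super node and a \emph{global expansion} step across super nodes, tracking the set $A_i$ of super nodes that have been fully red by the end of phase $i$, where a phase is a block of $\mathcal{O}(1)$ rounds.

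First I would establish local ignition: once a single node of a super node turns red, the whole super node becomes red within two rounds w.h.p. For two nodes $v,v'$ inside the same clique we have $|\hat{N}(v)\cap \hat{N}(v')|\ge \log^2 n$ while $|N(v)\cup N(v')|\le \log^2 n + 2d$, so $S(v,v')\ge \log^2 n/(\log^2 n+2d)=\Omega(1)$ in the regime $d=\mathcal{O}(\log^2 n)$ of interest (where, as noted, the average degree is $\Theta(\log^2 n)$). Hence a freshly red node infects each clique-neighbour with probability $\Omega(1)$, and a Chernoff bound gives $\Omega(\log^2 n)$ red clique nodes after one round with failure probability $e^{-\Omega(\log^2 n)}=n^{-\omega(1)}$. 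In the next round every still-uncolored clique node sees $\Omega(\log^2 n)$ red neighbours, so its staying-uncolored probability satisfies $p^*(v)\le (1-\Omega(1))^{\Omega(\log^2 n)}=n^{-\omega(1)}$, and the clique is fully red. Since $k=5>2$, it stays red for at least three further rounds during which it spreads, and a union bound over the at most $n$ super nodes ignited during the process keeps the total failure probability $o(1)$.

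Next comes the heart of the proof, the global step. Let $F_i=A_i\setminus A_{i-1}$ be the super nodes ignited in phase $i$ (still red, hence actively spreading). Viewing each super node as a vertex of the $D$-regular expander $H$ with $\lambda(H)\le C\sqrt{D}$ and $D=d\log^2 n$, I would bound the cross-edges leaving $F_i$: by the expander mixing lemma the number of edges inside $A_i$ is at most $\tfrac{D|A_i|^2}{2N}+\tfrac{\lambda(H)|A_i|}{2}$, so at least $\tfrac{D}{2}|F_i|$ cross-edges run from $F_i$ to uncolored super nodes as long as $|A_i|\le \delta N$ for a small constant $\delta$. For a cross-adjacent pair $v,v'$ one has $|\hat{N}(v)\cap\hat{N}(v')|\ge 2$ and $|N(v)\cup N(v')|=\mathcal{O}(\log^2 n)$, so $S(v,v')=\Omega(1/\log^2 n)$; each cross-edge from a red endpoint therefore ignites its far endpoint with probability $\Omega(1/\log^2 n)$, independently across target nodes. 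Summing, the expected number of newly ignited targets is $\Omega\!\left(\tfrac{D|F_i|}{\log^2 n}\right)=\Omega(d|F_i|)$, and since these targets are spread over the $\Omega(\min(D|F_i|,N))$ boundary super nodes guaranteed by the small-set vertex expansion of $H$ (a Tanner-type consequence of $\lambda(H)\le C\sqrt{D}$), collisions are rare and $\Omega(d|F_i|)$ \emph{distinct} super nodes each receive a red node; a Chernoff bound yields $|F_{i+1}|\ge \Omega(d)\,|F_i|$ with failure probability $e^{-\Omega(d|F_i|)}$.

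Finally I would assemble the phases. Starting from $|F_1|=1$, the recursion $|F_{i+1}|\ge \Omega(d)|F_i|$ reaches $|A_i|\ge \delta N$ after $\mathcal{O}(\log_d N)=\mathcal{O}(\log_d n)$ phases, using $N=n/\log^2 n$ and $d=\omega(1)$. The sum of the per-phase failures $e^{-\Omega(d|F_i|)}$ is dominated by the first term $e^{-\Omega(d)}=o(1)$ (here $d=\omega(1)$ is essential), so the cascade succeeds w.h.p.; once $\delta N$ super nodes have been red, at least $\delta N\log^2 n=\delta n$ nodes eventually turn orange, a constant fraction, so the rumor spreads, and the round count is $\mathcal{O}(1)$ per phase times $\mathcal{O}(\log_d n)$ phases. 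The main obstacle is precisely the global step: the expander mixing lemma is too weak for the small sets appearing early on, so one must invoke small-set vertex expansion and verify that the tiny cross-similarity $\Theta(1/\log^2 n)$ is exactly compensated by the $\log^2 n$ factor in the cross-degree $D$, netting the clean multiplicative growth factor $\Theta(d)$.
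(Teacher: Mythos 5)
Your proposal is correct and follows essentially the same route as the paper: a local clique-ignition lemma (the paper's Lemma~\ref{spreadfastEstar}), a growth step showing the set of red super nodes multiplies by $\Theta(d)$ every $\mathcal{O}(1)$ rounds via a vertex-expansion consequence of $\lambda(H)\le C\sqrt{D}$ (the paper's Lemma~\ref{constantboundary} combined with Lemma~\ref{me-lemma}, which uses the single-edge-between-super-nodes observation to get distinct targets exactly where you invoke small-set vertex expansion to rule out collisions), and a geometric sum of failure probabilities dominated by $e^{-\Omega(d)}=o(1)$. The only cosmetic difference is bookkeeping: you track freshly ignited super nodes $F_i=A_i\setminus A_{i-1}$ where the paper distinguishes ``strong red'' from ``weak red'' super nodes and imposes $w_t=\mathcal{O}(s_t/d)$, which plays the same role as your observation that $|A_i|\le(1+o(1))|F_i|$ under geometric growth.
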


Similar to a flower graph, we call a super node $x$ \textit{uncolored} if all its nodes are uncolored. We say $x$ is \textit{strong red} if every node in it has become red at most three rounds before. A super node is \textit{weak red} if it is neither strong red nor uncolored. Let $u_t$, $s_t$ and $w_t$ denote the number of uncolored, strong red, and weak red super nodes in the $t$-th round.

Recall that if we contract all $N=n/\log^2 (n)$ super nodes in $\mathcal{M}_{n,d}$, we obtain a $D$-regular graph for $D=d\log^2(n)$. In Lemma~\ref{spreadfastEstar} (proven in Appendix~\ref{appendix-lemma-clique}), we state that if a node in one of these super nodes is red, then the super node becomes red in 2 rounds. Then, in Lemma~\ref{me-lemma}, we show that the number of strong red super nodes increases by roughly a $d$ factor after every three rounds. Repeated application of Lemma~\ref{me-lemma} implies that the rumor spreads in $\mathcal{O}(\log_d n)$ rounds. (A more detailed discussion is given in Appendix~\ref{appendix-me-thm}, where we also argue that the bound $d=\omega(1)$ is necessary, i.e., the statement does not hold for constant $d$).

\begin{lemma}
\label{spreadfastEstar}
Consider a graph $G=(V,E)$ where nodes in $\mathcal{K}\subseteq V$ form a clique, $\kappa := |\mathcal{K}|\ge \log^2 n$, and for every $v\in \mathcal{K}$ $d(v)\le 2\kappa$. If $\mathcal{C}_t(v)=r$ for some $v\in \mathcal{K}$ and all other nodes in $\mathcal{K}$ are uncolored, then there is no uncolored node in $\mathcal{K}$ in round $t+2$ w.p. $1 - o(1/n)$.
\end{lemma}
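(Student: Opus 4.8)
The plan is to track the clique $\mathcal{K}$ over the two rounds $t+1$ and $t+2$ and show that the single red node first infects a constant fraction of $\mathcal{K}$, after which every remaining uncolored node is infected almost surely. The first ingredient I would establish is a constant lower bound on the acceptance probability $S(u,v)$ for any two nodes $u,v\in\mathcal{K}$. Since $\mathcal{K}$ is a clique containing both $u$ and $v$, we have $\mathcal{K}\subseteq\hat{N}(u)\cap\hat{N}(v)$, so the numerator satisfies $|\hat{N}(u)\cap\hat{N}(v)|\ge\kappa$; the degree bound $d(u),d(v)\le 2\kappa$ gives $|N(u)\cup N(v)|\le 4\kappa$, hence $S(u,v)\ge \kappa/(4\kappa)=1/4$. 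Moreover, because $k$ is a constant and the relevant nodes have been red for only a bounded number of rounds over this short horizon, the forgetting factor $1/2^{\mathcal{J}}$ is bounded below by a positive constant (a red node at round $t$ has $\mathcal{J}\le k$). Combining these, every infection attempt between two clique nodes succeeds with probability at least a constant $q>0$, for instance $q=(1/4)/2^{k}$.

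In round $t+1$, every uncolored node $u\in\mathcal{K}$ is adjacent to the red node $v$ by the clique property (and possibly to further red nodes outside $\mathcal{K}$, which only help), so $u$ turns red independently with probability at least $q$. The number of newly red nodes thus stochastically dominates a binomial with $\kappa-1$ trials and success probability $q$. Since $\kappa\ge\log^2 n$, a Chernoff bound shows that with probability $1-o(1/n)$ at least $c\kappa$ nodes of $\mathcal{K}$ are red after round $t+1$, for a constant $c\approx q/2$; denote this set by $R$. Crucially, all nodes of $R$ have just become red, so their $\mathcal{J}$-values are bounded by a constant and they remain active spreaders in the next round.

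In round $t+2$, consider any node $u$ that is still uncolored. Because $\mathcal{K}$ is a clique, all $|R|\ge c\kappa$ freshly red nodes are neighbors of $u$, and each infects $u$ independently with probability at least some constant $q'>0$. Hence the probability that $u$ stays uncolored is at most $p^*(u)\le (1-q')^{c\kappa}\le e^{-q'c\kappa}=e^{-\Omega(\log^2 n)}=o(1/n^2)$. A union bound over the at most $\kappa\le n$ uncolored nodes shows that, conditioned on the event $|R|\ge c\kappa$, every node of $\mathcal{K}$ is colored in round $t+2$ with probability $1-o(1/n)$. Combining this with the round-$(t+1)$ event by a union bound over the two failure probabilities yields the claimed $1-o(1/n)$.

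The main obstacle is the first round: we must amplify a single red node into $\Theta(\kappa)$ red nodes with failure probability $o(1/n)$, which is exactly where the concentration (Chernoff) step and the hypothesis $\kappa\ge\log^2 n$ enter, and where one must argue carefully that the per-node infection attempts are mutually independent and that the forgetting factor $1/2^{\mathcal{J}}$ stays constant over the horizon. Once a linear-sized red set $R$ is secured, the second round is comparatively routine, since each remaining uncolored node faces $\Omega(\log^2 n)$ independent infection attempts, which comfortably overwhelms the union bound.
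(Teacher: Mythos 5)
Your proposal is correct and follows essentially the same two-stage argument as the paper: bound the per-edge infection probability inside the clique below by a constant (using $|\hat{N}(u)\cap\hat{N}(v)|\ge\kappa$, $|N(u)\cup N(v)|\le 4\kappa$, and the forgetting factor at most $2^{-k}$), apply a Chernoff bound to get $\Theta(\kappa)=\Omega(\log^2 n)$ red nodes after round $t+1$ with failure probability $o(1/n)$, and then union-bound over the remaining uncolored nodes, each of which survives round $t+2$ with probability at most $(1-q')^{\Omega(\log^2 n)}=o(1/n^2)$. The paper's proof differs only in carrying explicit constants (e.g., $q^*\ge 1/128$ and the threshold $\log^2 n/260$) rather than generic $q,c$.
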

To prove Lemma~\ref{me-lemma}, we need Lemma~\ref{constantboundary} and Observation~\ref{obvs1}. The proof of Lemma~\ref{constantboundary} is given in Appendix~\ref{appendix-lemma-constantboundary}, which relies on the expander mixing lemma, cf.~\cite{friedman2003proof_ALONEIGVALUE}.
\begin{lemma}
\label{constantboundary}
Consider an $N$-node $D$-regular graph $G$, where $\lambda \leq C\sqrt{D}$, for some constant $C>0$, and $D=\omega(1)$. If a node set $A$ is of size at most $\frac{N}{10}$, then there is some constant $C'>0$ such that $|\partial(A)| \geq \min\left(2N/5, |A|C'D \right)$.
\end{lemma}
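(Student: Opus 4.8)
The plan is to combine a lower bound on the number of edges leaving $A$ with an upper bound on the number of edges that $A$ can send into its boundary, both extracted from the expander mixing lemma. Throughout I write $a := |A|$, $B := \partial(A)$, and $b := |B|$, and for vertex sets $S,T$ I let $e(S,T) := \sum_{u\in S,\, v\in T} A_{uv}$, so that $e(S,S)$ counts each internal edge of $S$ twice and, for disjoint $S,T$, $e(S,T)$ is the number of edges between them. In this notation the expander mixing lemma states that $\left| e(S,T) - \frac{D|S||T|}{N} \right| \le \lambda \sqrt{|S||T|}$.

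First I would bound the number of edges leaving $A$. Since $\sum_{v\in A} d(v) = Da = e(A,A) + e(A, V\setminus A)$, and the mixing lemma with $S=T=A$ gives $e(A,A) \le \frac{Da^2}{N} + \lambda a$, I obtain $e(A, V\setminus A) \ge Da\left(1 - a/N\right) - \lambda a$. Using $a \le N/10$ and $\lambda \le C\sqrt{D} = o(D)$ (valid because $D=\omega(1)$), for large $N$ the bracketed factor is at least $\tfrac{9}{10}D - \lambda \ge \tfrac{4}{5}D$, so $e(A, V\setminus A) \ge \tfrac{4}{5}Da$. The key structural point is that every edge leaving $A$ lands in $B=\partial(A)$, whence $e(A,B) = e(A, V\setminus A) \ge \tfrac{4}{5}Da$.

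Next I would upper bound $e(A,B)$ in terms of $b$. Applying the mixing lemma to the disjoint pair $S=A$, $T=B$ gives $e(A,B) \le \frac{Dab}{N} + \lambda\sqrt{ab}$. Combining with the lower bound and dividing by $a$ yields $\tfrac{4}{5}D \le \frac{Db}{N} + \lambda\sqrt{b/a}$. A short case analysis then finishes the proof. If $b \ge 2N/5$, there is nothing to show since $b \ge \min(2N/5, C'Da)$ trivially; otherwise $b < 2N/5$ forces $\frac{Db}{N} < \tfrac{2}{5}D$, so $\tfrac{2}{5}D \le \lambda\sqrt{b/a} \le C\sqrt{D}\sqrt{b/a}$, which rearranges to $b \ge \frac{4}{25C^2}Da$. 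Setting $C' := \tfrac{4}{25C^2}$ gives $b \ge C'Da \ge \min(2N/5, C'Da)$ in this case as well.

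The arithmetic with the explicit constants ($\tfrac{9}{10}$, $\tfrac{4}{5}$, $\tfrac{2}{5}$) is routine and I have kept it loose on purpose. The only delicate step is the threshold where I use $\lambda = o(D)$ to absorb the additive $\lambda a$ term while retaining a constant fraction of the $Da$ edges leaving $A$; this is precisely where the hypothesis $D=\omega(1)$ (rather than constant $D$) is indispensable, and it is the conceptual heart of the argument. The two-sided deployment of the mixing lemma — a lower bound via $e(A,A)$ and an upper bound via $e(A,B)$ — together with the final case split are the structural ideas, and everything else is bookkeeping.
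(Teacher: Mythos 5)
Your proof is correct and follows essentially the same route as the paper's: both apply the expander mixing lemma twice --- once to lower-bound $e(A,V\setminus A)$ and once to upper-bound $e(A,\partial(A))$ --- and then finish with a case split on which term of the resulting inequality dominates. The only cosmetic difference is that the paper routes the argument through an intermediate general-purpose bound on $|\partial(A)|$ for arbitrary $A$ (its Lemma~\ref{nodesonboundary}) before specializing to $|A|\le N/10$ and $\lambda\le C\sqrt{D}$, whereas you substitute the hypotheses immediately and split on whether $|\partial(A)|\ge 2N/5$.
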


\begin{obs}
\label{obvs1}
Let $x$ and $y$ be two distinct super nodes in a moderate expander graph. Then, there is at most one edge between $x$ and $y$, by construction.
\end{obs}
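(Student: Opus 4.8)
The plan is to prove the claim directly by unwinding the definition of the moderate expander $\mathcal{M}_{n,d}$. Recall that $\mathcal{M}_{n,d}$ is obtained from the $N$-node $D$-regular graph $H$ by replacing each node with a clique of size $\log^2 n$ and then evenly distributing the $D$ incident edges among that clique's nodes. Hence every super node of $\mathcal{M}_{n,d}$ corresponds bijectively to a node of $H$, and I would begin by fixing this correspondence, writing $x$ and $y$ for both the two distinct super nodes and their associated nodes in $H$.

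Next I would classify the edges of $\mathcal{M}_{n,d}$ into two types. The \emph{intra}-super-node edges are the clique edges; each lies entirely inside a single super node and therefore contributes nothing to the count of edges running between the distinct super nodes $x$ and $y$. The remaining edges are exactly the \emph{inter}-super-node edges produced by the distribution step, and each such edge is the image of a unique edge of $H$, its two endpoints being placed inside the two cliques corresponding to the endpoints of that edge in $H$. Consequently the number of edges of $\mathcal{M}_{n,d}$ between super node $x$ and super node $y$ equals the number of edges of $H$ between node $x$ and node $y$.

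The final step invokes the standing convention of the paper that all graphs under consideration are simple, so $H$ contains at most one edge between any two distinct nodes. Combining this with the previous paragraph yields at most one edge between $x$ and $y$ in $\mathcal{M}_{n,d}$, which is precisely the claim.

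There is essentially no obstacle here: the statement is a structural consequence of the construction rather than a probabilistic or combinatorial estimate, which is why the observation is flagged as holding ``by construction.'' The only point requiring a little care is making the bijection between the inter-super-node edges of $\mathcal{M}_{n,d}$ and the edges of $H$ explicit, so that the ``evenly distribute'' step is seen neither to merge two edges of $H$ into one nor to split a single edge of $H$ into several parallel edges between $x$ and $y$.
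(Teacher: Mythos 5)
Your proof is correct and matches the paper exactly: the paper offers no separate argument beyond the phrase ``by construction,'' and your unwinding of the definition---inter-super-node edges of $\mathcal{M}_{n,d}$ are in bijection with edges of the simple graph $H$, so two distinct super nodes inherit at most one edge---is precisely the reasoning the paper implicitly invokes. Your explicit note that the distribution step neither merges nor splits edges of $H$ is the one detail worth making precise, and you handled it properly.
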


\begin{lemma}\label{me-lemma}
Consider an ($n$, $d$)-moderate expander $\mathcal{M}_{n,d}$ with $d=\omega(1)$. If $1\le s_t<C_1N/D$, for a sufficiently small constant $C_1>0$, and $w_t=\mathcal{O}(s_t/d)$, then after three rounds there are $\Omega(s_td)$ new strong red super nodes w.p. $1-\exp(\Omega(-ds_t))-o(1/\log n)$.
\end{lemma}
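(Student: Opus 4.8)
The plan is to track the strong red super nodes as a vertex set $S$ in the contracted $D$-regular graph $H$ (with $D=d\log^2 n$ and $N=n/\log^2 n$), show that $S$ has a large \emph{uncolored} boundary, and then argue that a $\Theta(1/\log^2 n)$ fraction of that boundary is infected within a single round — which, once each such clique fills in, yields $\Omega(s_t d)$ fresh strong red super nodes. First I would invoke Lemma~\ref{constantboundary} on $S$. Since $s_t < C_1 N/D$ with $C_1$ small, we have $|S|\le N/10$, and moreover $s_t C'D < C'C_1 N < 2N/5$, so the minimum in Lemma~\ref{constantboundary} is attained by the second term, giving $|\partial(S)| \ge C' s_t D = \Omega(s_t D)$. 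By definition of $S$ (the set of \emph{all} strong red super nodes), every super node in $\partial(S)$ is either uncolored or weak red; as there are only $w_t = O(s_t/d)$ weak red super nodes in total and $s_t D = s_t d\log^2 n \gg s_t/d$, the set $B$ of uncolored super nodes adjacent to $S$ still satisfies $|B| \ge |\partial(S)| - w_t = \Omega(s_t D)$.

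Next I would bound the per-super-node infection probability. For each $y\in B$, Observation~\ref{obvs1} supplies a unique edge $\{a_y,b_y\}$ joining $S$ to $y$, with $a_y$ in a strong red super node and $b_y\in y$. Because $a_y$ became red at most three rounds before round $t$, at round $t+1$ it is still red (as $k=5>4$) with $\mathcal{J}_{t+1}(a_y)\le 4$, hence $1/2^{\mathcal{J}_{t+1}(a_y)} \ge 1/16$. The Jaccard term satisfies $\mathcal{S}(a_y,b_y) \ge 2/|N(a_y)\cup N(b_y)| = \Omega(1/\log^2 n)$, using that $a_y,b_y$ both belong to $\hat N(a_y)\cap \hat N(b_y)$ (numerator $\ge 2$) while all degrees are $\Theta(\log^2 n)$. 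Thus the event $Z_y$ that $a_y$ turns $b_y$ red in round $t+1$ has probability $\Omega(1/\log^2 n)$. Crucially, the infection attempts along the distinct edges $\{a_y,b_y\}$, whose uncolored endpoints $b_y$ are pairwise distinct, are mutually independent in the model, so the $Z_y$ are independent and $\mu := \mathbb{E}\big[\sum_{y\in B} Z_y\big] = \Omega(|B|/\log^2 n) = \Omega(s_t D/\log^2 n) = \Omega(s_t d)$.

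Finally I would apply a multiplicative Chernoff lower-tail bound to conclude that $\sum_{y} Z_y \ge \mu/2 = \Omega(s_t d)$ fails with probability at most $\exp(-\Omega(\mu)) = \exp(-\Omega(s_t d))$. For each $y$ whose node $b_y$ became red in round $t+1$, Lemma~\ref{spreadfastEstar} (applicable since each super node is a clique of size $\log^2 n$ with every degree $\le 2\log^2 n$) guarantees $y$ is fully red by round $t+3$, each such event failing with probability $o(1/n)$; a union bound over the at most $N$ super nodes contributes $o(N/n) = o(1/\log^2 n) = o(1/\log n)$. Any such $y$ was uncolored at round $t$ and, at round $t+3$, has every node red for at most two rounds, so it is a \emph{new} strong red super node. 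Summing the two failure contributions yields the claimed success probability $1 - \exp(-\Omega(ds_t)) - o(1/\log n)$.

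The main obstacle is the boundary accounting: one must ensure that the expansion from Lemma~\ref{constantboundary} is not squandered on already-colored super nodes, which is precisely where the hypotheses $s_t < C_1 N/D$ (keeping the expander in its linear-growth regime rather than saturating at $2N/5$) and $w_t = O(s_t/d)$ (forcing the boundary to be overwhelmingly uncolored) are used. The secondary subtlety is the independence required for Chernoff: it holds only because we commit to a single incoming edge per boundary super node, so the relevant Bernoulli trials sit on edges with distinct uncolored endpoints, and because the model performs each red-edge infection independently.
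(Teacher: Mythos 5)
Your proposal is correct and follows essentially the same route as the paper's proof: Lemma~\ref{constantboundary} applied in the contracted $D$-regular graph, one designated infecting edge per uncolored boundary super node (with independence via Observation~\ref{obvs1} and the model's independent per-edge infections), a Chernoff lower tail giving the $\exp(-\Omega(ds_t))$ term, and Lemma~\ref{spreadfastEstar} plus a union bound over the $N$ super nodes giving the $o(1/\log n)$ term. The only cosmetic difference is that you apply the boundary lemma to $S$ alone and subtract the $w_t$ weak red super nodes, whereas the paper applies it to $S\cup W$ and subtracts the at most $w_tD$ boundary edges contributed by $W$; both accountings yield $\Omega(s_tD)$ uncolored boundary super nodes.
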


\begin{proof}
Let $\mathcal{E}^*$ be the event that every uncolored super node becomes strong red in two rounds once it has at least one red node. Based on Lemma~\ref{spreadfastEstar}, $\mathcal{E}^*$ holds w.p. at least $1-N\cdot o(1/n)\ge 1-o(1/\log n)$ since there are $N$ super nodes.

Furthermore, let $q^*$ denote the probability that a node $v$, in a strong red super node, makes a node $u$, in an uncolored super node, red where there is an edge between $v$ and $u$. Since $|\hat{N}(v)\cap \hat{N}(u)|\ge 2$, $|N(v)\cup N(u)|\le 2(d+\log^2 n)\le 2.5 \log^2 n$ (using the assumption that $d$ is significantly smaller than $n$), and $v$ has been red for at most three rounds, we get the following upper-bound:
\begin{equation}\label{eq-q*}
q^*\ge \frac{|\hat{N}(v)\cap \hat{N}(u)|}{2^3|N(v)\cup N(u)|}\ge \frac{2}{8\times 2.5 \log^2 n}= \frac{1}{10\log^2 n}.
\end{equation}

Let $S$, $W$, and $U$ be the set of strong red, weak red, and uncolored super nodes in round $t$. Let us label the nodes in $\partial(S)\cap U$ from $u_1$ to $u_b$, where $b$ is the size of $\partial(S)\cap U$. For each node $u_i$ consider one of its neighbors in $S$. Let Bernoulli random variable $y_i$ be 1 if and only if $u_i$ is made red by that neighbor in $S$ in the next round (i.e., $t+1$). For the random variable $Y:=\sum_{i=1}^{b}y_i$, we have $\mathbb{E}[Y]\ge bq^*\ge b/(10\log^2 n)$, where we used $\Pr[y_i=1]=q^*$ and Equation~\eqref{eq-q*}. Since $y_i$'s are independent, applying Chernoff bound (Lemma~\ref{Chernoff} (i) in Appendix~\ref{appendix-ineq}) yields
\begin{equation}\label{chernoff-Y}
\Pr \left[Y\le \frac{b}{20\log^2 n}\right]\le \exp\left(-\Theta\left(\frac{b}{\log^2 n}\right)\right).
\end{equation}
Note that $w_t=\mathcal{O}(s_t/d)=o(s_t)$ implies that $s_t+w_t\le 1.1 s_t$. Furthermore, $1.1 s_t\le N/10$ since $s_t\le C_1N/D=o(N)$. Thus, we can apply Lemma~\ref{constantboundary} for $A= S\cup W$ and the graph obtained from contracting each super node to a node.
Since $|A|=s_t+w_t\le 1.1 s_t\le 1.1 C_1N/D$, we get $2N/5\ge |A|C^{\prime}D$ by selecting $C_1$ to be sufficiently small. Thus, $|\partial(A)|\ge s_tC^{\prime}D$. Furthermore, note that $|\partial(A)|=|\partial(S)\cap U|+|\partial(W)\cap U|=b+|\partial(W)\cap U|$ and $|\partial(W)\cap U|\le w_tD$. Combining the last two statements gives $b\ge s_tC'D-w_tD$. Using $w_t=\mathcal{O}(s_t/d)=o(s_t)$ implies that $b=\Omega(Ds_t)$. Thus, Equation~\eqref{chernoff-Y} implies that w.p. $1-\exp(-\Omega(Ds_t/\log^2 n))=1-\exp(-\Omega(ds_t))$, there will be $\Omega(Ds_t/\log^2 n)=\Omega(ds_t)$ nodes in $U$ which become red in the next round. Note that all such nodes are in different super nodes (see Observation~\ref{obvs1}). If event $\mathcal{E}^*$ holds, then all such super nodes will be strong red in two more rounds. Since $\mathcal{E}^*$ holds w.p. $1-o(1/\log n)$ (as discussed above), there will be $\Omega(ds_t)$ new strong red super nodes after three rounds w.p. $1-\exp(\Omega(-ds_t))-o(1/\log n)$.
\end{proof}

\subsection{Experiments and Real-world Networks}\label{experiments1}
The outcome of our experiments in Figure~\ref{ExperimentsIntext}-(a) are consistent with our theoretical findings. In particular, the rumor does not spread in the flower graph and ER-low (i.e., $p=1/(4\sqrt{n})$) while it does for the moderate expander and ER-high (i.e., $p=4/\sqrt{n}$). Note that in this set-up, a node in the moderate expander is of degree $d+\log^2 n-1 \approx 100$ (actually, we observe in the experiments that for $D=64$ rather than $D = d\cdot 100 = 4\cdot 100$ the rumor already spreads), which indicates the rumor spreads even in very sparse graphs if they possess some level of expansion and community structure. Furthermore, we observe that the process on the moderate expander ends in around $50$ rounds, which indeed appears to be logarithmic rather than linear in $n=16000$ (this is aligned with the bound $\mathcal{O}(\log_d n)$ proven in Theorem~\ref{me-thm}). 

Figure~\ref{ExperimentsIntext}-(b) depicts the extent to which the rumor spreads in Twitter and Facebook graph and random graph model HRG with comparable parameters. (Please refer to Section~\ref{Preliminaries} for more details.) The plots for the other three studied SNs are given in Appendix~\ref{appendix-plots}. We observe that the rumor spreads to a large part of the graph very quickly. This can be explained by the fact that all these graphs have a decent level of expansion and community-like structure, which are the necessary properties for a fast and wide spread according to our theoretical results. As a by-product, our experiments also support that HRG is a suitable choice for modeling real-world SNs.

%\begin{figure*}
%\def\tabularxcolumn#1{m{#1}}
%\begin{tabularx}{\linewidth}{}
%\begin{tabular}{cc}
%\subfloat[]{\includegraphics[width=7cm]{images/nonSN_combined_plot.jpg}} 
%   & \subfloat[]{\includegraphics[width=7cm]{}}\\
%\end{tabular}
%\end{tabularx}

%\caption{Fraction of orange nodes starting from one randomly chosen red node in (a) flower, moderate expander (ME) with $d=4$, ER with $p=4/\sqrt{n}$ (ER-high) and $p=1/(4\sqrt{n})$ (ER-low), where $n=16000$ for all these graphs (b) FB and TW graphs and HRG with comparable parameters.}\label{experimentsnonCM}
%\end{figure*}

\section{How to Stop the Rumor Spreading?}\label{how-section}
We present six countermeasures (the first four are inspired by prior work as explained in Section~\ref{prior-work}, but the last two are completely novel) and then compare them. The outcome of our experiments on the countermeasures for Twitter and Facebook graphs and moderate expander are given in Figure~\ref{ExperimentsIntext} and for the other three SNs (T-GE, T-FR, and G+) in Appendix~\ref{appendix-plots}.

\begin{figure*}
\def\tabularxcolumn#1{m{#1}}
\begin{tabularx}{\linewidth}{X}
\begin{tabular}{cc}
\subfloat[]{\includegraphics[width=4.4cm]{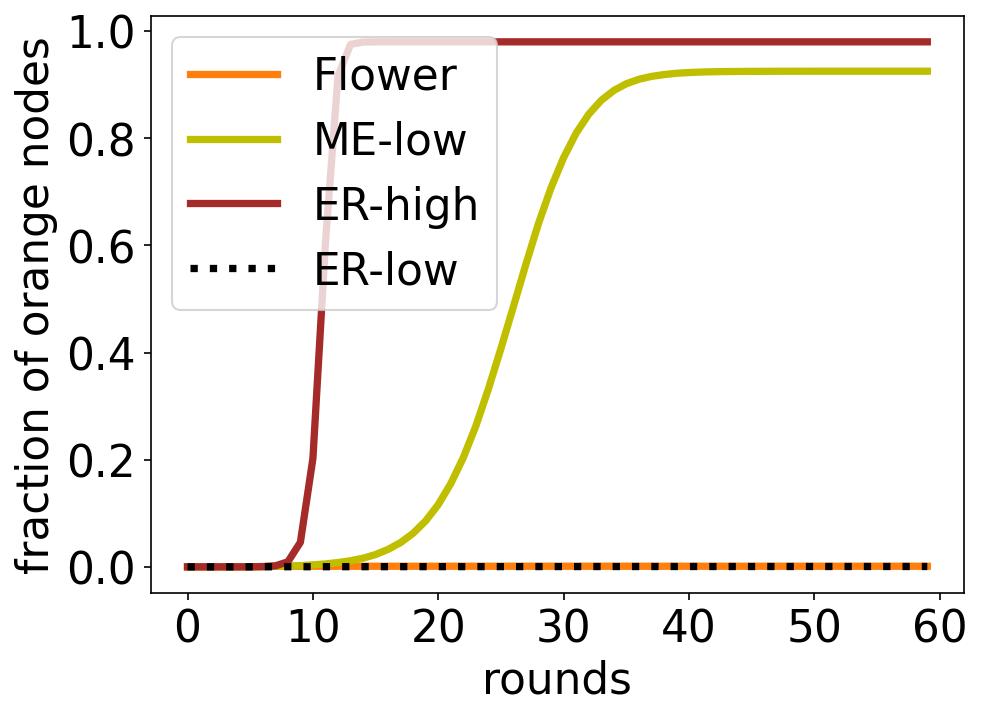}}
    \subfloat[]{\includegraphics[width=4.4cm]{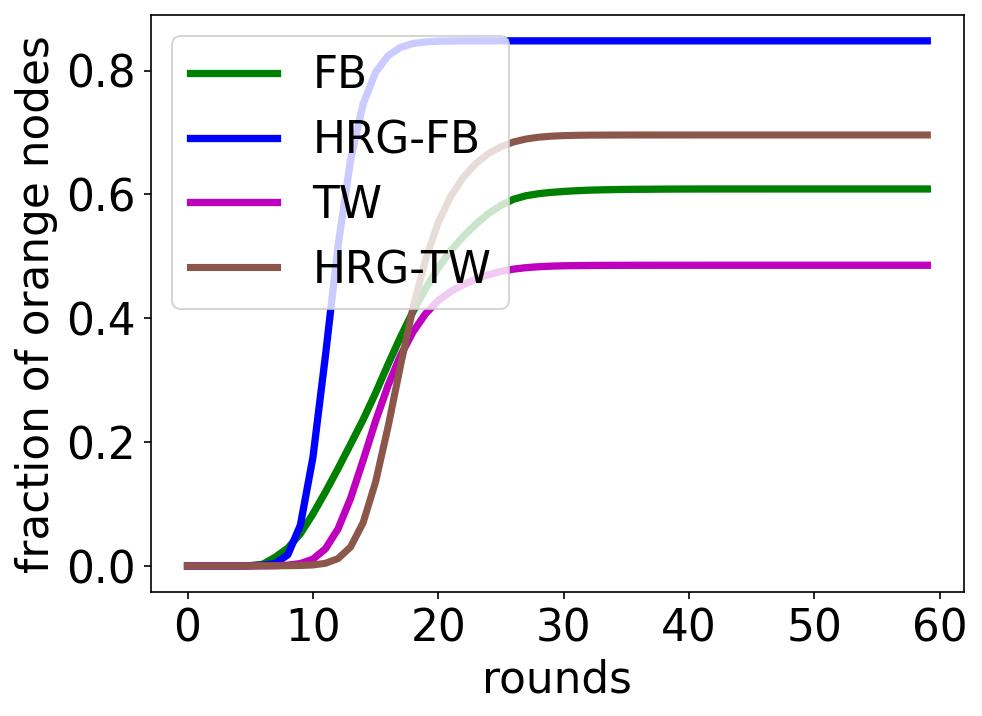}}
    \subfloat[]{\includegraphics[width=4.4cm]{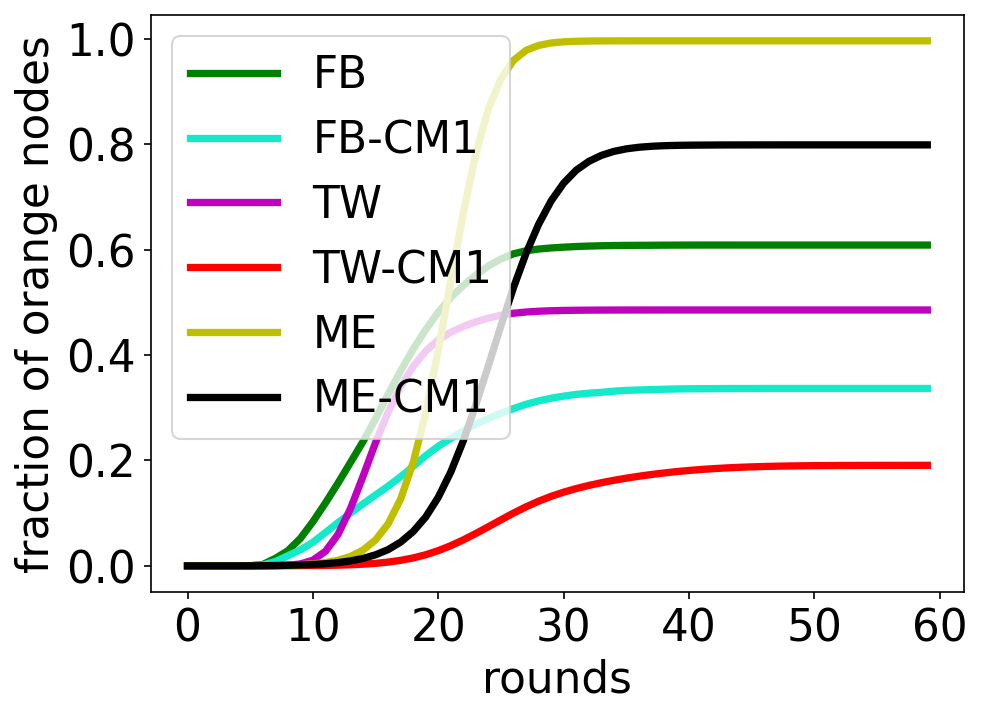}}
\subfloat[]{\includegraphics[width=4.4cm]{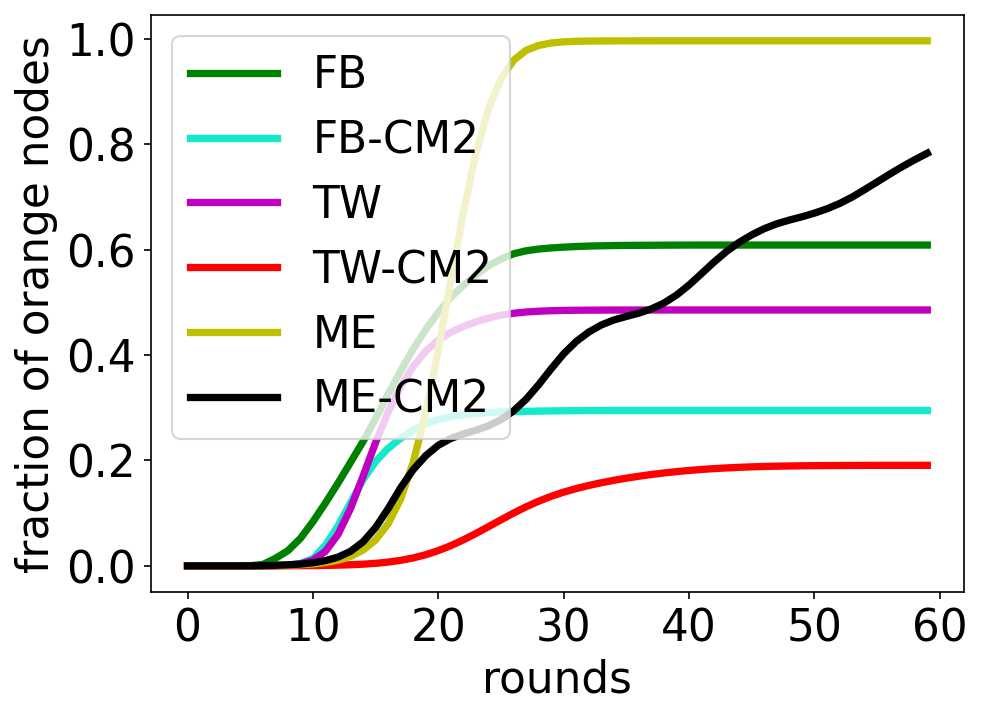}}\\
   \subfloat[]{\includegraphics[width=4.4cm]{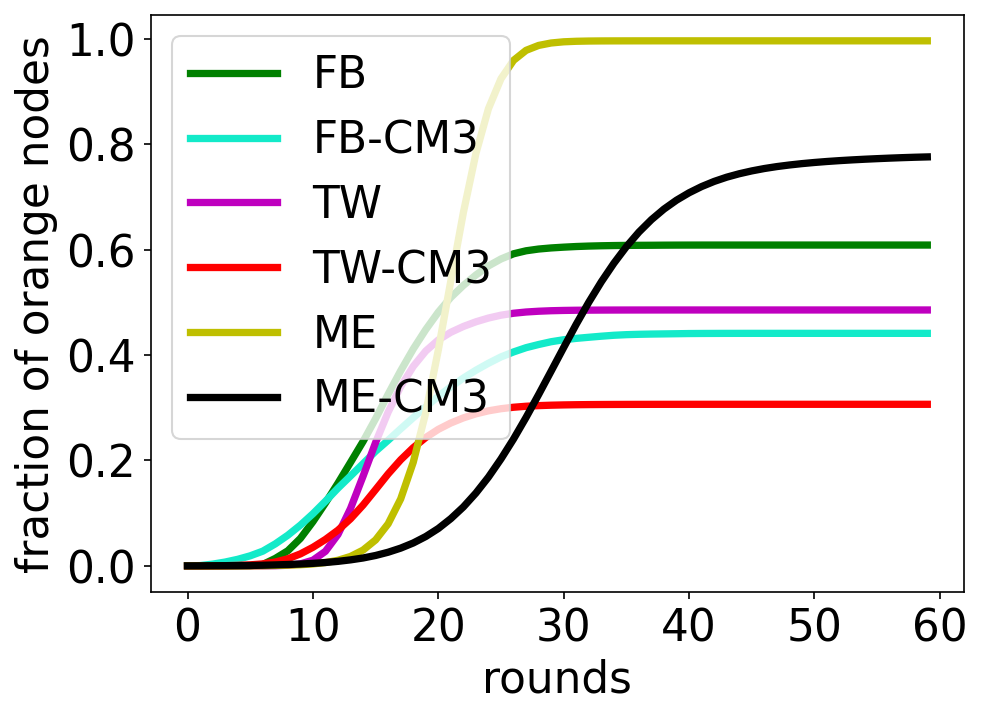}} 
    \subfloat[]{\includegraphics[width=4.4cm]{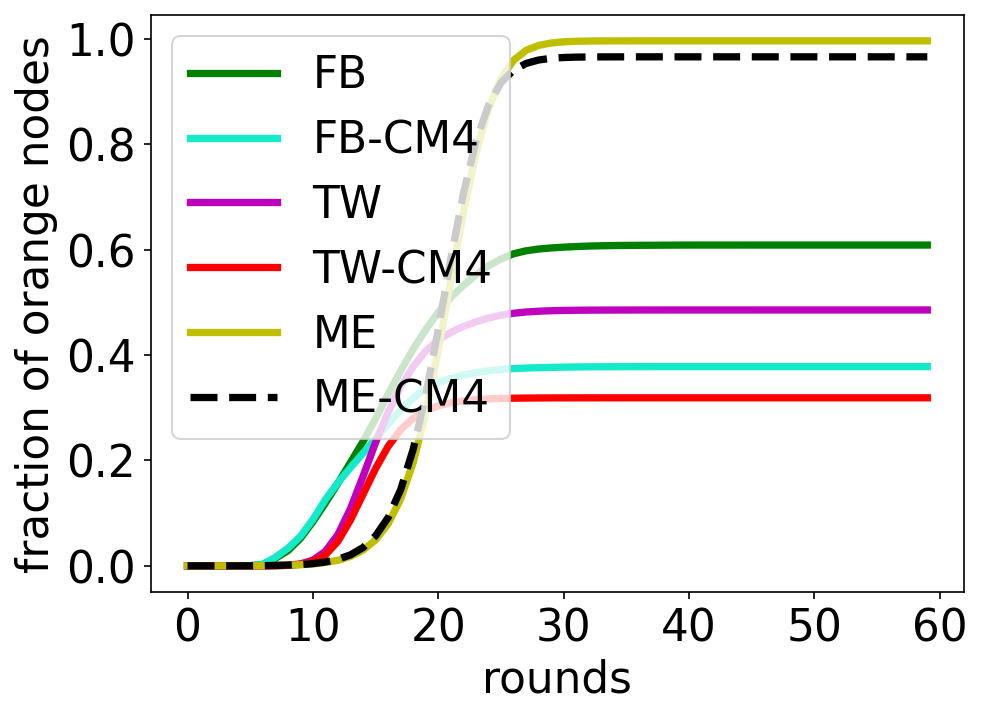}}
    \subfloat[]{\includegraphics[width=4.4cm]{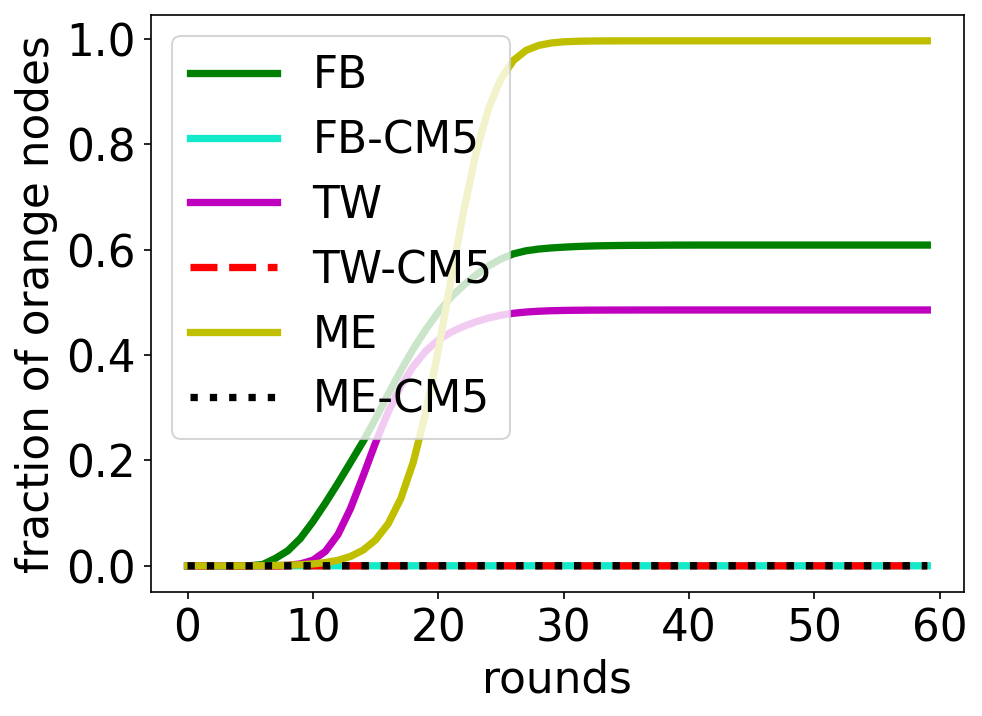}}
    \subfloat[]{\includegraphics[width=4.4cm]{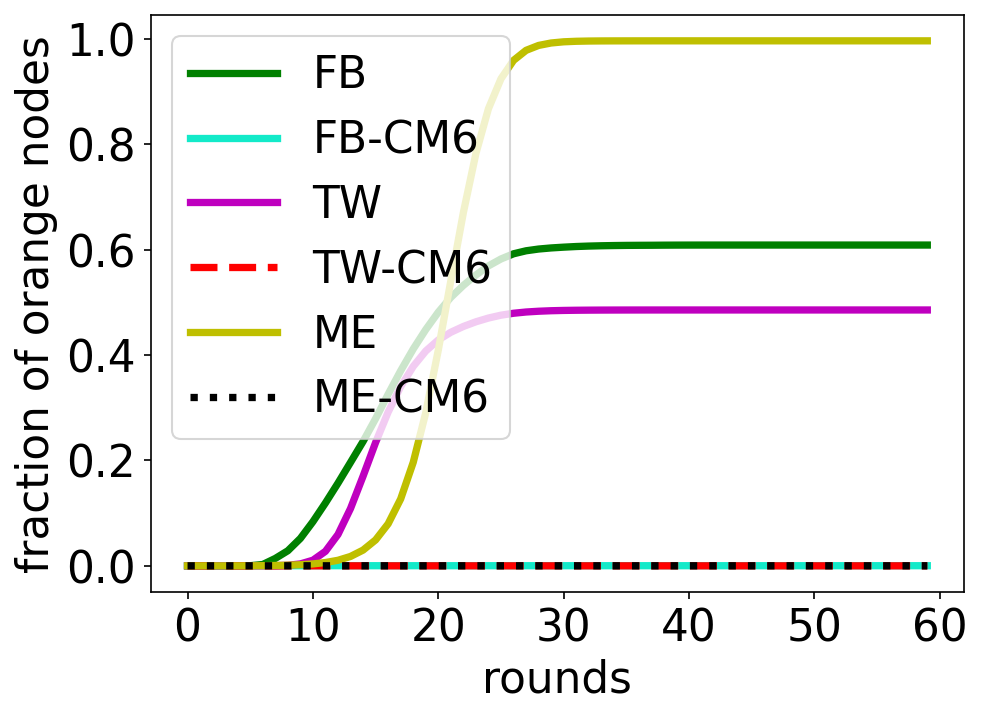}}
\end{tabular}
\end{tabularx}
\vspace{-0.2cm}
\caption{Fraction of orange nodes starting from one randomly chosen red node in (a) $n$-flower, ($n$, $d$)-moderate expander (ME-low) with $d=4$ and $D=64$ and super nodes of size $16$, ER with $p=4/\sqrt{n}$ (ER-high) and $p=1/(4\sqrt{n})$ (ER-low), where $n=16000$ (b) FB and TW graphs and HRG with comparable parameters (c-h) moderate expander (ME) for $n=22000$ and $d=12$ and TW and FB graphs before and after the implementation of countermeasures CM1 to CM6.}\label{ExperimentsIntext}
\vspace{-0.2cm}
\end{figure*}

\paragraph{CM1: Blocking Nodes.}
We assume that the $5\%$ highest degree nodes and $20\%$ randomly chosen nodes are blocked (i.e., do not receive/spread the rumor). As Figure~\ref{ExperimentsIntext}-(c) demonstrates, this countermeasure is not very effective. We believe blocking nodes according to the highest betweenness, closeness, or eigencentrality (instead of highest degree) would not improve the countermeasure significantly since in real-world SNs there is a large overlap between the highest degree nodes and nodes chosen by the mentioned parameters due to certain properties such as the power-law degree distribution.

\paragraph{CM2: Blocking Edges.}
The graph is partitioned into communities using the Louvain algorithm~\cite{Blondel_2008}. In each round of the process, if the fraction of red nodes is above a global threshold $\tau_g$, then we block all the edges which are on the boundary of the ``spreader'' communities. A community is a spreader if its fraction of red nodes is larger than a local threshold $\tau_c$. The blocked edges remain blocked until the community is not a spreader anymore. (Both threshold are set to $0.05$ in our set-up.) Figure~\ref{ExperimentsIntext}-(d) demonstrates while this countermeasure slows down the spread, the rumor still spreads to a large part of the graph. It is worth to mention that around $20-30\%$ of edges were blocked during the process in our experiments. (Unlike other experiments, this was executed only 10 times due to its high computational cost.)

\paragraph{CM3: Accuracy Flags.}
Assume that every time a node is supposed to become red, it rejects the rumor with some \textit{reject} probability $p_r$, and becomes orange directly. In practice, this countermeasure corresponds to for example accuracy flags in online social platforms, in which posts containing certain keywords (say hot controversial or polarizing topics) are automatically accompanied by a banner warning the user about the trustworthiness of the content. %and potentially linking to some reliable sources of information on that topic. %(As an example, consider a post on vaccination against Covid-19 which comes with a link to WHO website.)
The outcome of our experiments for $p_r=0.3$, depicted in Figure~\ref{ExperimentsIntext}-(e), demonstrates that the rumor still continues to spread to a significant portion of the community.

\paragraph{CM4: Let's Spread the Truth.}
Let the \textit{truth spreading} process be the same as the rumor spreading with the following two differences: (i) green and light green are used instead of red and orange, respectively (ii) the probability a node becomes green is one half of the probability of becoming red in the rumor spreading process (this is to account for the observation that rumors spread faster than facts, cf.~\cite{vosoughi2018spread}). After $\tau$ rounds into the rumor spreading process, we color an uncolored node green and the truth starts spreading simultaneously. (We assume that the rumor and truth spread only to uncolored nodes, that is, a red/orange node does not become green and vice versa.) The outcome of experiments, depicted in Figure~\ref{ExperimentsIntext}-(f), indicates that this countermeasure cannot stop the rumor effectively even when $\tau=4$ (which implies that there is a strong rumor detection algorithm in place) and the node which starts the truth is the node with the highest degree among the uncolored nodes. We depict the influence of the delay $\tau$ on the final fraction of orange nodes in Figure~\ref{CM4delays} in Appendix~\ref{appendix-delay}.

\paragraph{CM5: Fact Checkers.}
Consider a set of \textit{fact checker} nodes, who starts spreading the truth (i.e., anti-rumor) once exposed to the rumor, as the truth spreading process in CM4. These correspond to ``good citizens'' (e.g., credible news outlets or scientists on the topic) who are educated or incentivized to verify the received information and spread the truth if necessary. 
(In our experiments, we assume they include $10\%$ of the network and are distributed randomly.) This has some similarities to CM4, but instead of starting the spread of the truth by implementing a green node in the graph (which needs to be executed by a third entity), the fact checkers become green and trigger the spread of the truth once contacted by a rumor spreader. Furthermore, the fact checker spread the truth more aggressively: (i) the forgetting parameter $k$ is much larger for the fact checker nodes (say 20 rather than 5) (ii) fact checkers can make their red neighbors green as well (iii) the fact checkers are three times more active in spreading (you can think of each round as three sub-rounds, where all nodes (red/green) spread in the first sub-round while the green fact checker nodes continue to spread in the second and third sub-round too). Note that green nodes which are not fact checker behave as in the original truth spreading process. Our experiments (see Figure~\ref{ExperimentsIntext}-(g)) demonstrate that this countermeasure is very effective.

\paragraph{CM6: Let's Hear It Twice.} We require a node to hear a rumor from at least two of its neighbors before accepting and spreading it (i.e., becoming red), instead of once as in the original process.
%(You can think that a node becomes ``half-red'' the first time it is exposed to the rumor and red the second time.) 
Figure~\ref{ExperimentsIntext}-(h) demonstrates that this countermeasure is immensely effective, where in our experiments, initially two randomly chosen nodes are red. We formalize this observation in Theorem~\ref{2timesME}, whose proof is given in Appendix~\ref{appendix-2timesME}.

\begin{theorem}\label{2timesME}
Consider the ($n$,$d$)-moderate expander $\mathcal{M}_{n,d}$ with $d\le n^{\frac{1}{2}-\epsilon}$ for a constant $\epsilon>0$. If initially two super nodes $x$ and $y$, chosen uniformly at random, have red node(s) (and the rest is uncolored) and CM6 is in place, then w.h.p. the rumor does not spread.
\end{theorem}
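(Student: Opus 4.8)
The plan is to exploit two structural features of $\mathcal{M}_{n,d}$ together with the defining property of CM6. Under CM6 an uncolored node can turn red only if at least two of its red neighbors successfully transmit the rumor to it; in particular, a node with at most one red neighbor stays uncolored with probability $1$. By Observation~\ref{obvs1} there is at most one edge between any two distinct super nodes, so a node $u$ lying in an uncolored super node $z$ has at most one neighbor inside each super node adjacent to $z$, and its only other neighbors are the (uncolored) clique-mates in $z$. Consequently, for $u$ to acquire two red neighbors, two of $u$'s external neighbors must be red, which forces the two super nodes hosting them to be red. First I would record this reduction: since the initial red nodes are confined to $x\cup y$, in the first round this forces $z$ to be adjacent to both $x$ and $y$, i.e.\ a common neighbor of $x$ and $y$ in the contracted graph $H$. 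Thus the rumor can escape $x\cup y$ only through a common neighbor of $x$ and $y$.

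The heart of the argument is to show that, for $x,y$ chosen uniformly at random, they have no common neighbor super node w.h.p. Writing $c(x,y):=|N_H(x)\cap N_H(y)|$ and averaging over ordered pairs, using that $H$ is $D$-regular,
\[
\mathbb{E}[c(x,y)]=\frac{1}{N(N-1)}\sum_{z}d_H(z)\bigl(d_H(z)-1\bigr)=\frac{D(D-1)}{N-1}=(1+o(1))\frac{D^2}{N}.
\]
Substituting $D=d\log^2 n$ and $N=n/\log^2 n$ gives $D^2/N=d^2\log^6 n/n$, which is $o(1)$ precisely because $d\le n^{\frac12-\epsilon}$ yields $d^2\le n^{1-2\epsilon}$. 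Markov's inequality (Lemma~\ref{Markov}) then gives $\Pr[c(x,y)\ge 1]\le \mathbb{E}[c(x,y)]=o(1)$, so w.h.p.\ $x$ and $y$ share no common neighbor. I note that only the regularity of $H$ and the bound on $d$ are needed here; the expansion $\lambda\le C\sqrt{D}$ is not invoked for this step.

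To conclude I would run a short induction on the rounds, conditioned on the event that $x$ and $y$ have no common neighbor. Suppose at the start of round $t$ every red node lies in $x\cup y$. Take any $u$ in a super node $z\notin\{x,y\}$: its clique-mates are uncolored (as $z$ has no red node by the hypothesis), and among its external neighbors at most one is red, since $z$ is adjacent to at most one of $x,y$. Hence $u$ has at most one red neighbor and, by the CM6 rule, does not become red in round $t$; so after round $t$ the red nodes are still contained in $x\cup y$. Whatever happens \emph{inside} $x$ and $y$ (even a full within-clique cascade, should a super node start with several red nodes) remains confined, because leaving a super node again requires an external node with two red neighbors. As red nodes can only turn orange via the forgetting mechanism, the colored set has size at most $2\log^2 n=o(n)$, far below the $10\%$ threshold, so the rumor does not spread. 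Combining with the $o(1)$ failure probability of the common-neighbor event finishes the proof.

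The main obstacle, and the place demanding the most care, is the reduction in the first paragraph: one must verify that \emph{no} spreading pathway survives the two-neighbor requirement across all rounds, including within-clique effects, so that the entire stochastic dynamics collapses to the single deterministic question of whether $x$ and $y$ have a common neighbor. The hypothesis $d\le n^{\frac12-\epsilon}$ is exactly the threshold at which the expected number of such common neighbors drops below $1$, which is why it appears in this form.
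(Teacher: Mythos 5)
Your proposal is correct and follows essentially the same route as the paper's proof: both use Observation~\ref{obvs1} to reduce the dynamics to the single structural question of whether the rumor can ever leave $x\cup y$, and both answer it by a first-moment bound showing this escape route is absent w.h.p.\ when $d\le n^{1/2-\epsilon}$. The only (cosmetic) difference is that you count common neighbor super nodes of $x$ and $y$ in the contracted graph $H$ and apply Markov, whereas the paper fixes a node $v$ of $\mathcal{M}_{n,d}$, bounds the probability $\binom{d}{2}/\binom{N}{2}$ that both randomly chosen super nodes land among $v$'s $d$ adjacent super nodes, and union-bounds over all $n$ nodes.
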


\paragraph{Comparison of Countermeasures.}
We consider four fundamental criteria that a good countermeasure should possess. To the best of our knowledge, this is the first attempt to formalize such a list of criteria.

\vspace{1mm}

\noindent \textbf{C1: Effective.} A good countermeasure substantially reduces the extent that a rumor spreads.\\
\textbf{C2: Easy To Apply.} An acceptable countermeasure should be feasible and easily executable. If implemented by the agents of the network, it should not require full knowledge of the whole network or the complete history of the process. If it is administrated by a third entity, such as the government, it should not postulate a perfect rumor detection strategy or running algorithms which are computationally very costly.\\
\textbf{C3: Not Against Freedom of Expression.} A countermeasure ideally should not take away the freedom of expression and liberties of the agents.\\
\textbf{C4: Not Too Intrusive.} A countermeasure which demands fundamental changes in the mechanism of information spreading or the network structure is not desirable.\\

%\begin{itemize}
%\item \textbf{C1: Effective.} A good countermeasure substantially reduces the extent that a rumor spreads.
%    \item \textbf{C2: Easy To Apply.} An acceptable countermeasure should be feasible and easily executable. If it is implemented by the agents of the network, it should not require the full knowledge of the whole network or the complete history of the process. If it is administrated by a third entity, such as the government, it should not postulate a perfect rumor detection strategy or running algorithms which are computationally very costly.
%    \item \textbf{C3: Not Against Freedom of Expression.} A countermeasure ideally should not take away the freedom of expression and liberties of the agents.
%    \item \textbf{C4: Not Too Intrusive.} A countermeasure which demands fundamental changes in the mechanism of information spreading or the network structure is not desirable.
%\end{itemize}

\begin{table}[h]
\vspace{-0.2cm}
\begin{tabular}{ |p{0.8cm}||p{0.7cm}|p{0.7cm}|p{0.7cm}|p{0.7cm}|p{2cm}|}
 \hline
  & C1 & C2 & C3 & C4 & Decentralized\\
 \hline
 CM1   & no & jein &  no & no & no\\
  \hline
 CM2   & no & jein &  no & no & no\\
  \hline
 CM3   & no & jein &  yes & yes & no\\
  \hline
 CM4   & no & jein &  yes & yes & jein\\
  \hline
 CM5   & yes & jein &  yes & yes & yes\\
 \hline
 CM6   & yes & yes & yes & yes & yes\\
 \hline
\end{tabular}
\caption{Determining which criteria are satisfied by each countermeasure, where ``jein'' means both yes and no.}
\vspace{-0.2cm}
\label{table}
\end{table}

Table~\ref{table} indicates which criteria each of the proposed countermeasure satisfies. Note that it is inherently difficult to measure the above criteria in a strict quantitative manner. Thus, the entries in the table are relative and up to interpretation. The choices for C1 are according to the results depicted in Figure~\ref{ExperimentsIntext}. The entries for C2 are mostly set to jein since while they are not extremely difficult to implement, they need a smart rumor detection strategy or the full knowledge of the network. Furthermore, CM1 and CM2 violate C3 since they clearly intrude the freedom of expression and do not satisfy C4 since they change the network structure radically. The other countermeasures, arguably, satisfy the last two criteria. Please refer to Appendix~\ref{appendix-table} for a more comprehensive discussion on the entries of Table~\ref{table}.

We say a countermeasure is \textit{decentralized} if it is executed by the members of the network rather than being enforced by a third party such as the government or an online social platform management team. Summarizing the entries of Table~\ref{table} implies that, interestingly, the decentralized countermeasures, namely CM5 and CM6 (and CM4, up to some degree), satisfy most of the desired criteria while the centralized ones do not. Hence, instead of developing centralized countermeasures which need to be imposed by a forceful third entity, the focus should be devoted to the design and implementation of decentralized countermeasures which can be obtained through educating the members of the network. In short, educating is preferred over regulating.

\vspace{-0.3cm}
\section{Conclusion}
We introduced a rich rumor spreading model and building on our theoretical and experimental findings, we argued that the abundance of community structures and good expansion properties are two of the main driving forces behind the spread of rumors. A potential avenue for future research is to determine other graph parameters which govern the spread of rumors. We also investigated several countermeasures. We observed that the decentralized countermeasures (which do not require a direct and forceful interference of a third entity but rather the education of the network's members) outperform the centralized ones vigorously. Therefore, a natural suggestion for the future studies is the shift of focus from centralized countermeasures to decentralized ones, which have been scarcely investigated by the prior work.

\newpage 

\bibliographystyle{named}
\bibliography{ref}

\newpage

\appendix

\section{Visualization of Process on Facebook Network}
\label{appendix-visualize}

Figure~\ref{visualize-fig} visualizes the spread of the rumor on the Facebook SN starting from a randomly selected red node.

\FloatBarrier
\begin{figure*}[p]
\begin{tabular}{cccc}
\subfloat[Day 0]{\includegraphics[width=4cm, height=5cm]{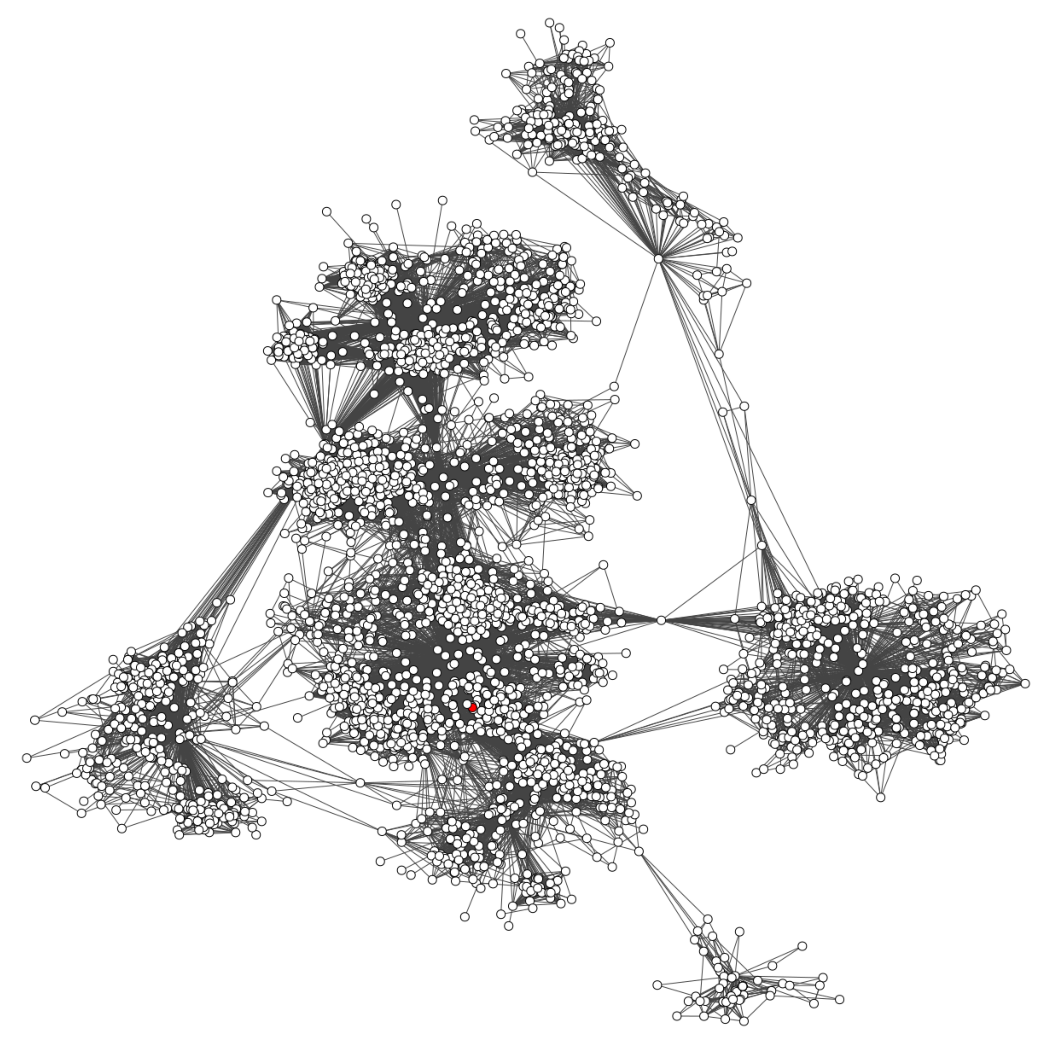}} 
    \subfloat[Day 2]{\includegraphics[width=4cm, height=5cm]{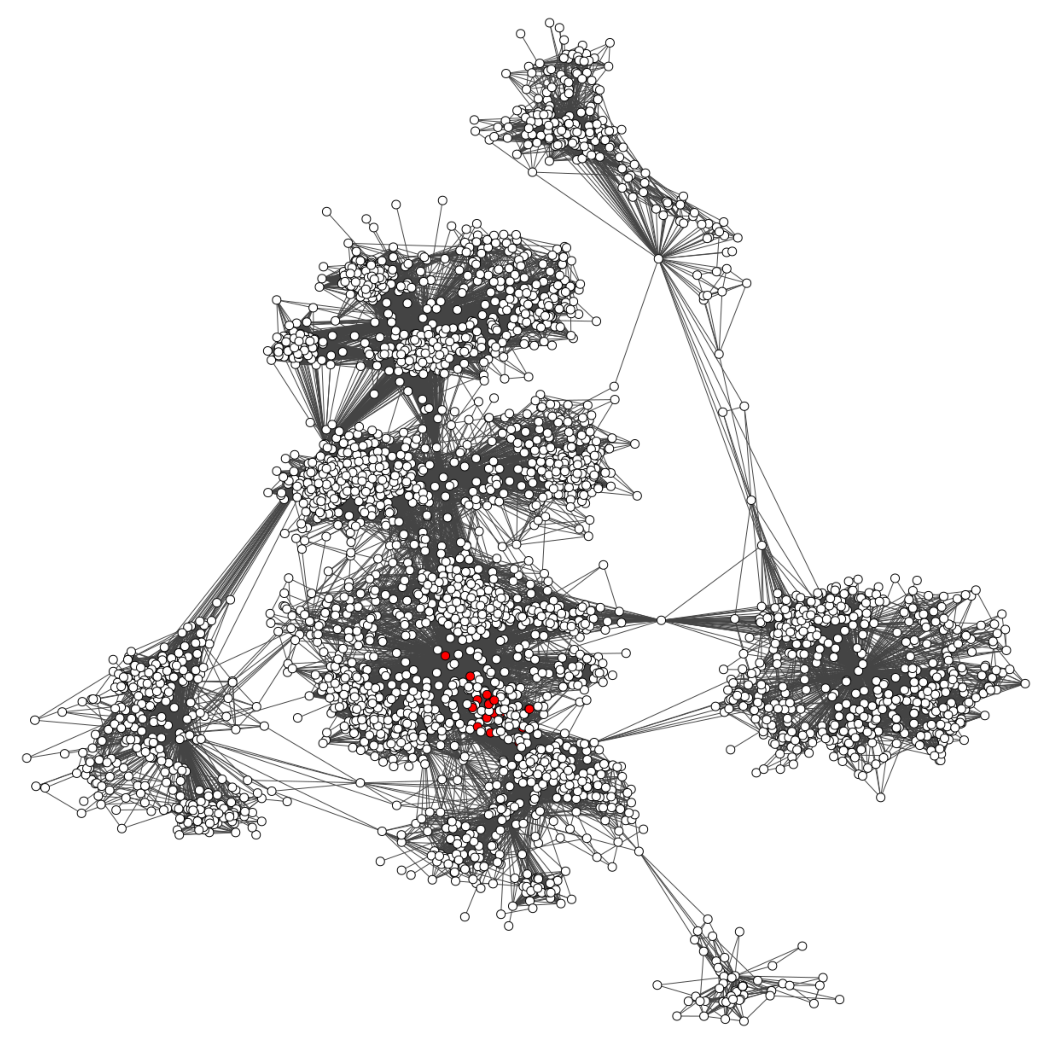}\label{day1}}
    \subfloat[Day 4]{\includegraphics[width=4cm, height=5cm]{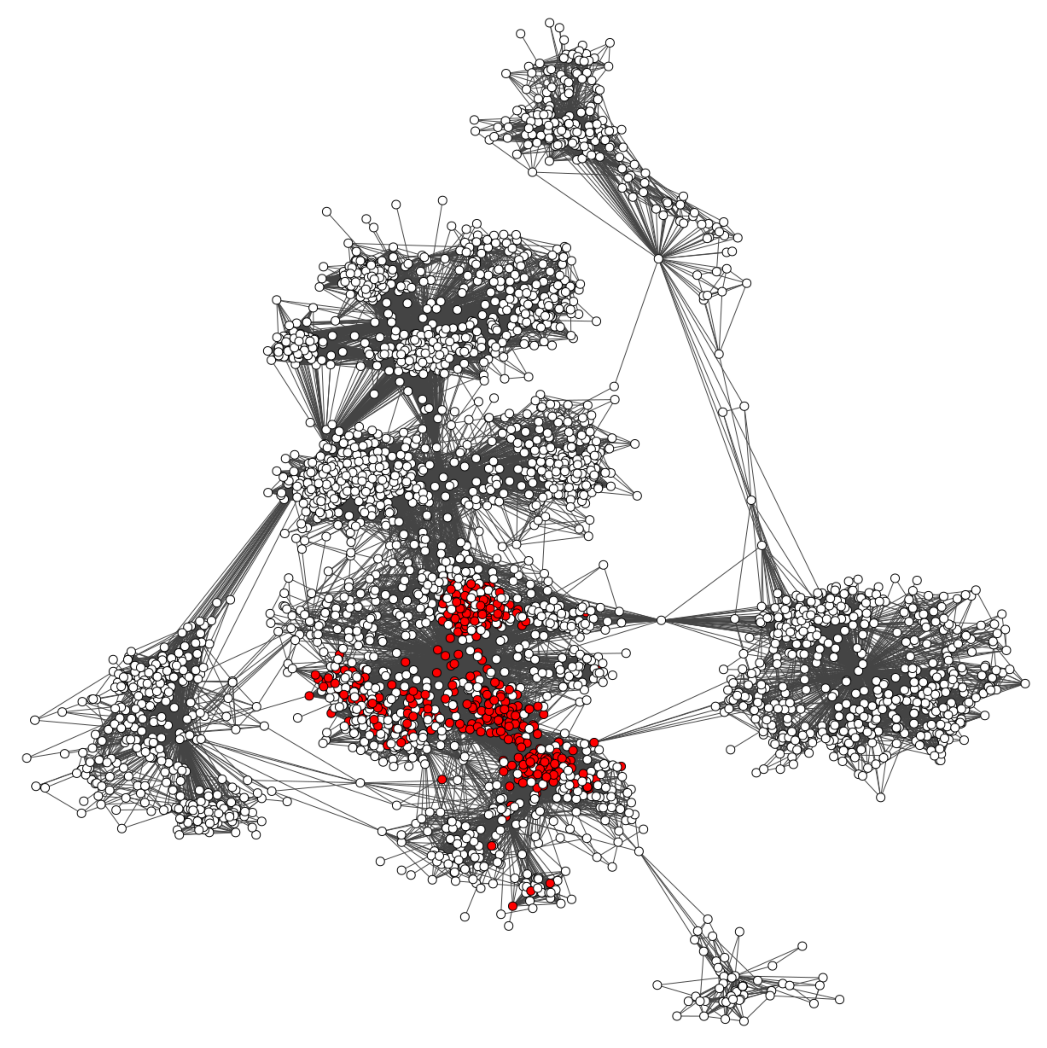}}
\subfloat[Day 6]{\includegraphics[width=4cm, height=5cm]{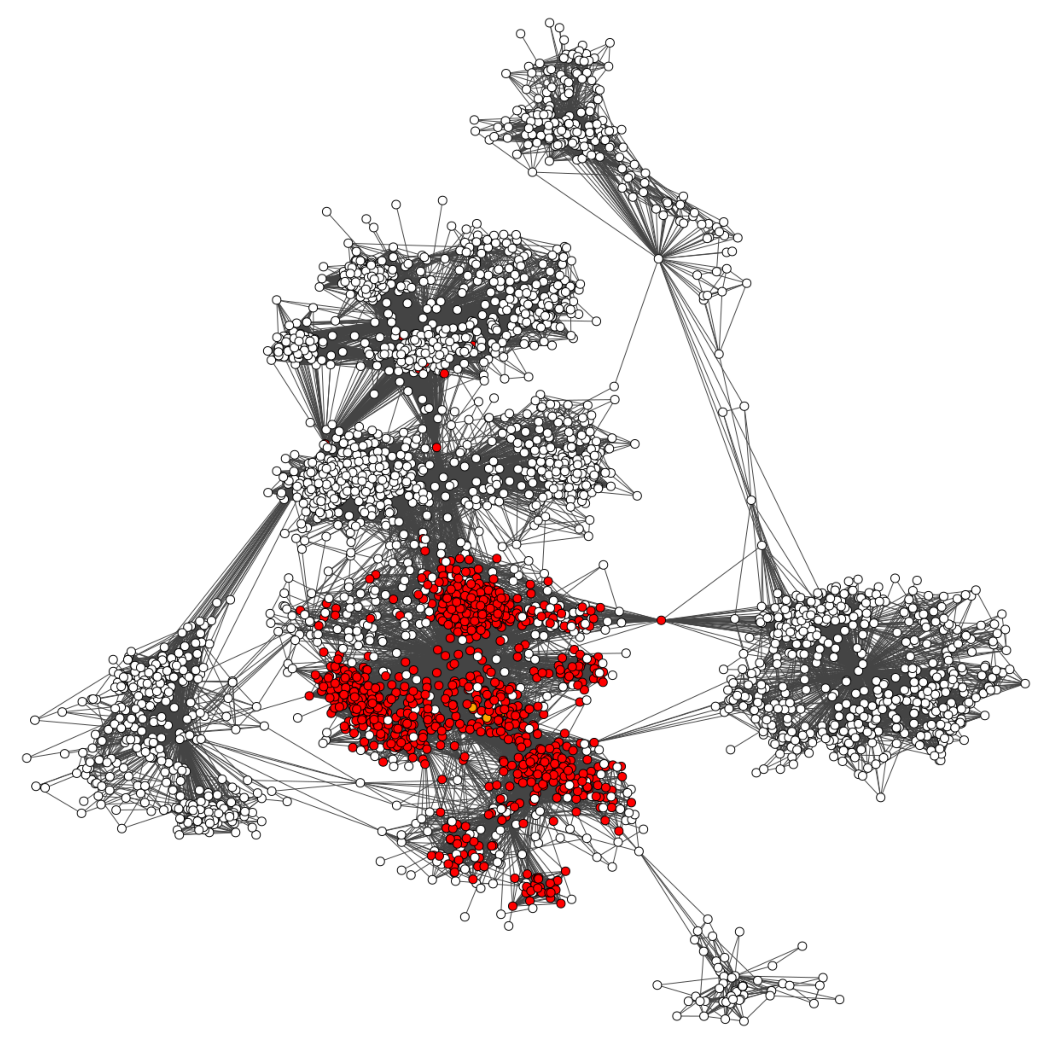}}\\
   \subfloat[Day 8]{\includegraphics[width=4cm, height=5cm]{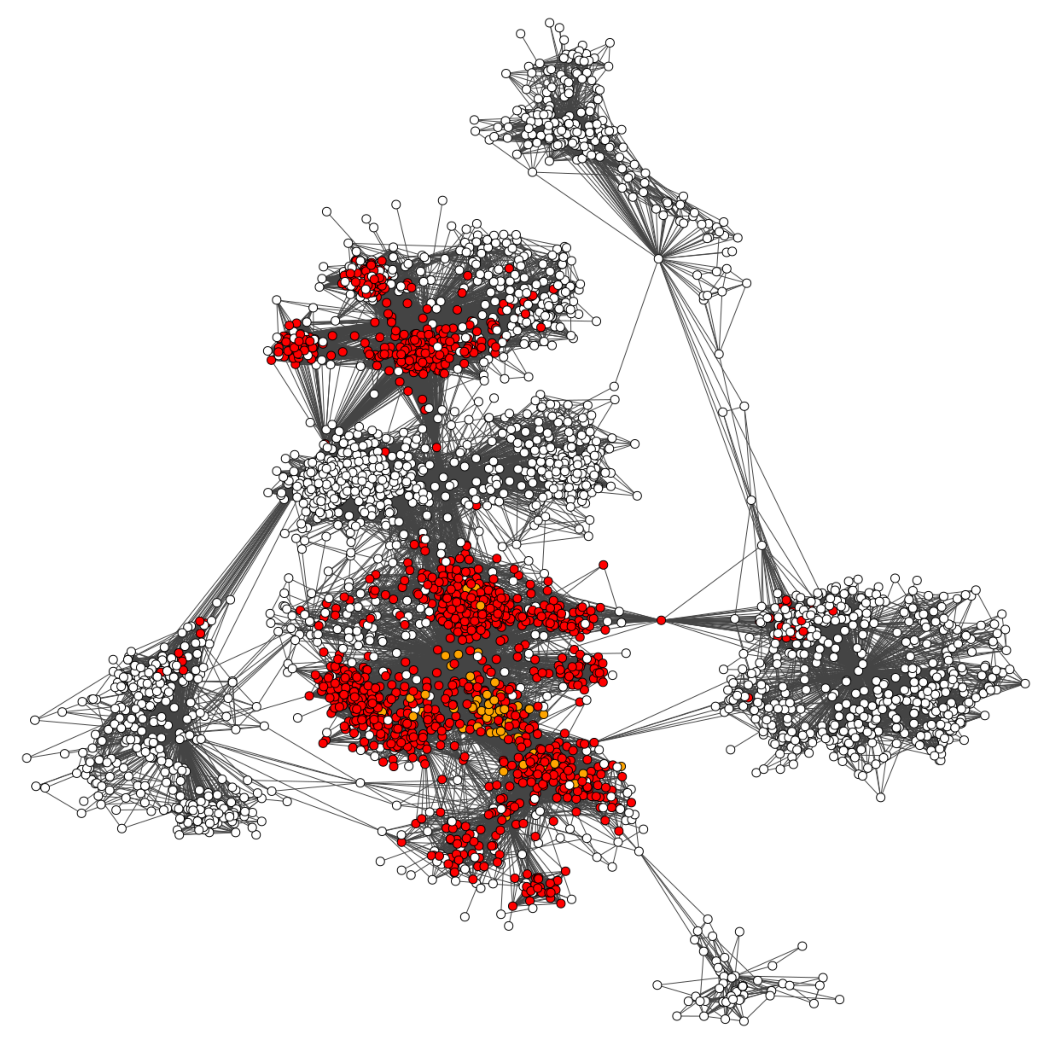}} 
    \subfloat[Day 10]{\includegraphics[width=4cm, height=5cm]{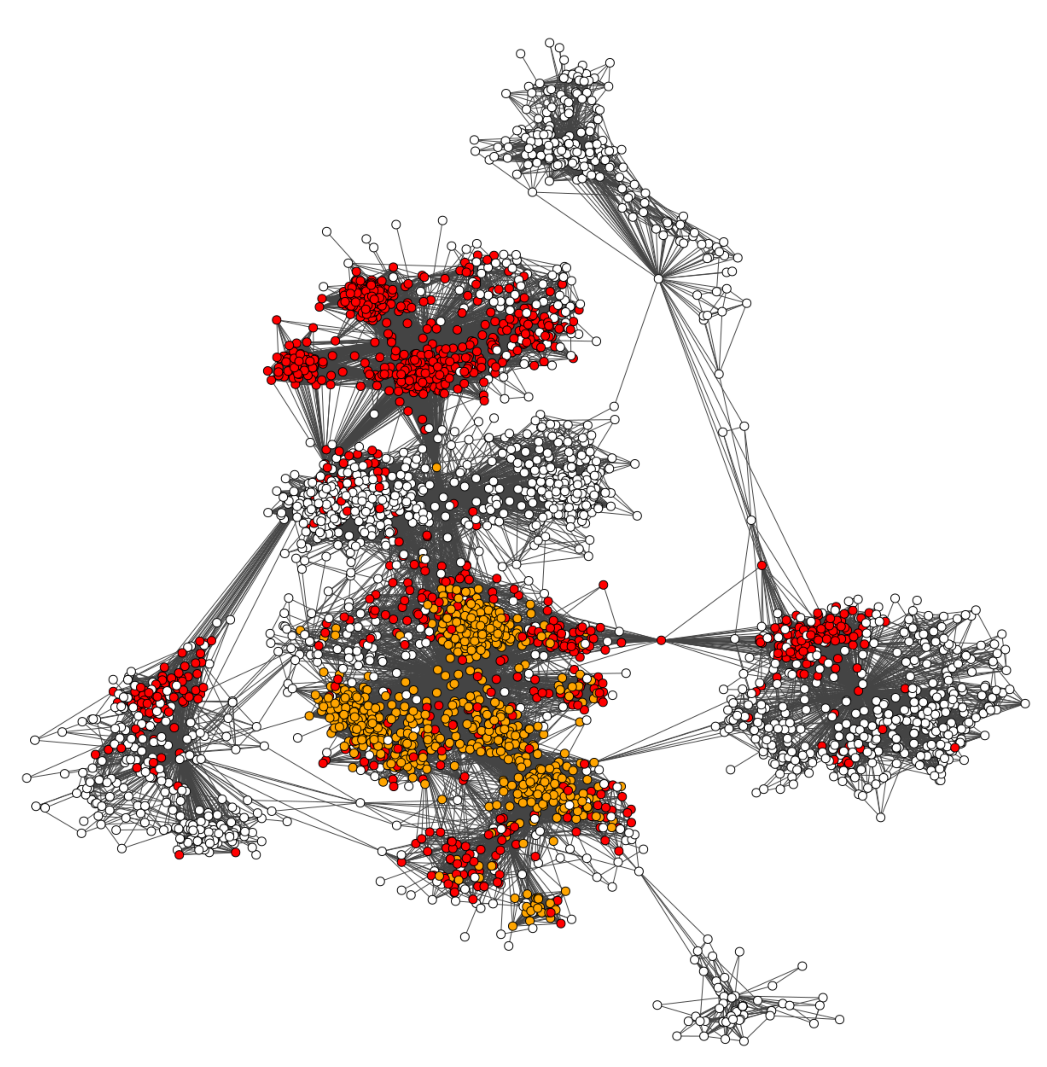}}
    \subfloat[Day 12]{\includegraphics[width=4cm, height=5cm]{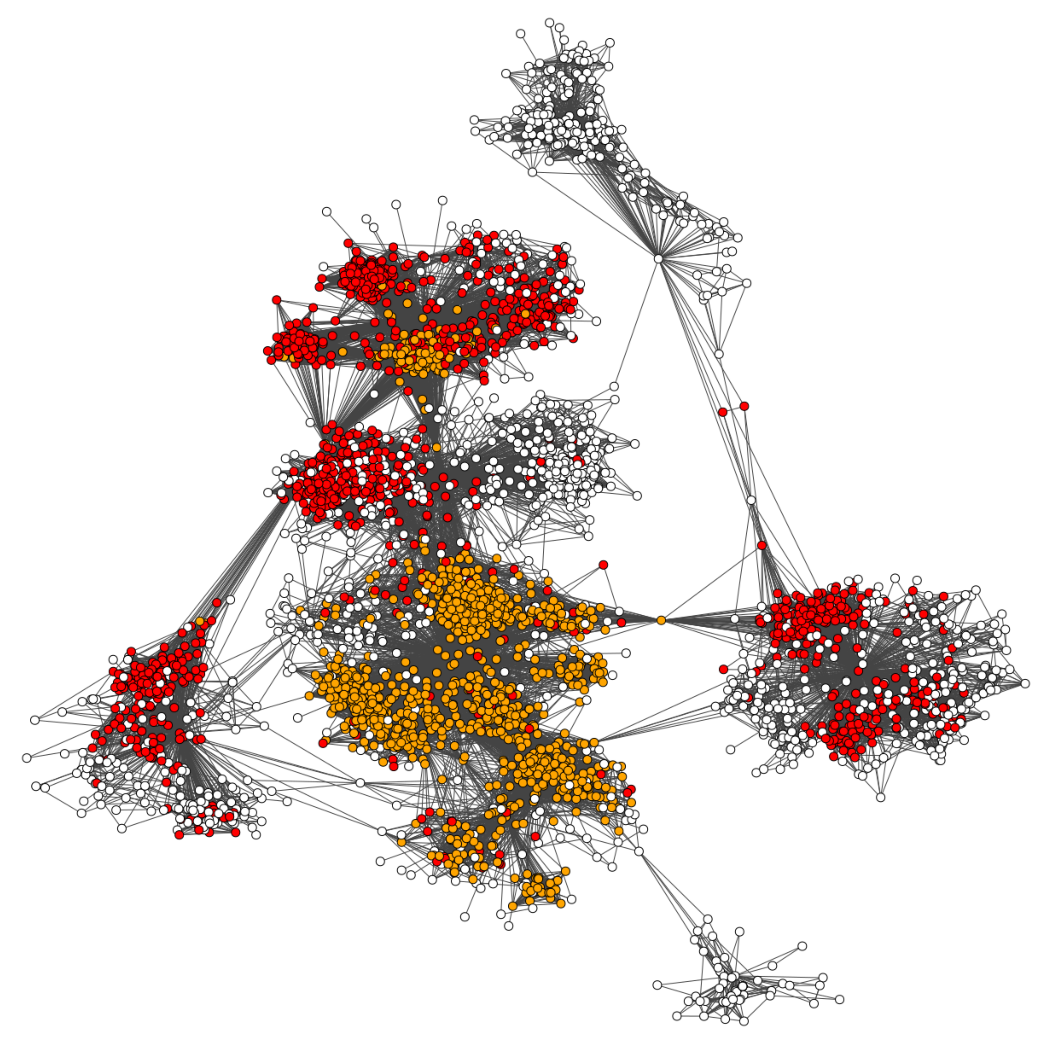}}
    \subfloat[Day 14]{\includegraphics[width=4cm, height=5cm]{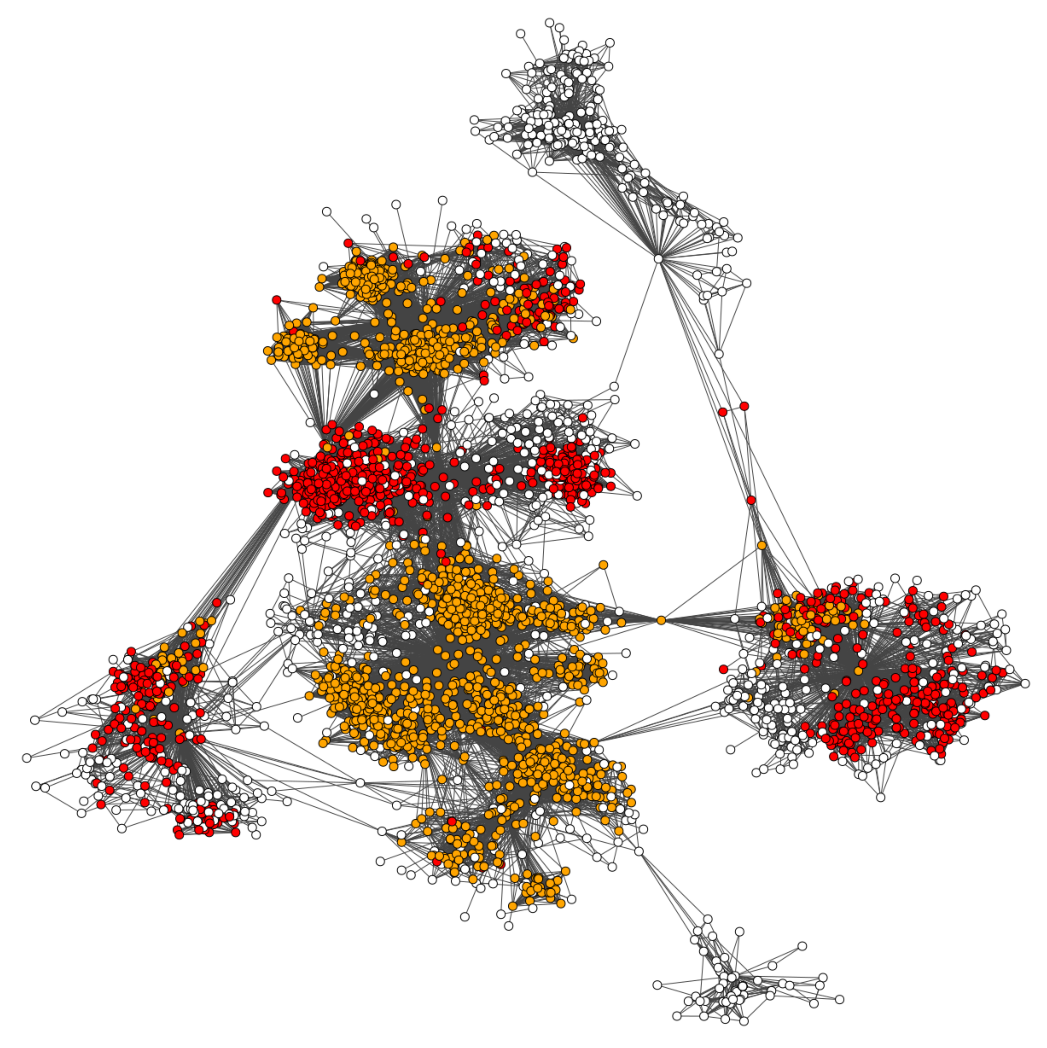}}\\
    \subfloat[Day 16]{\includegraphics[width=4cm, height=5cm]{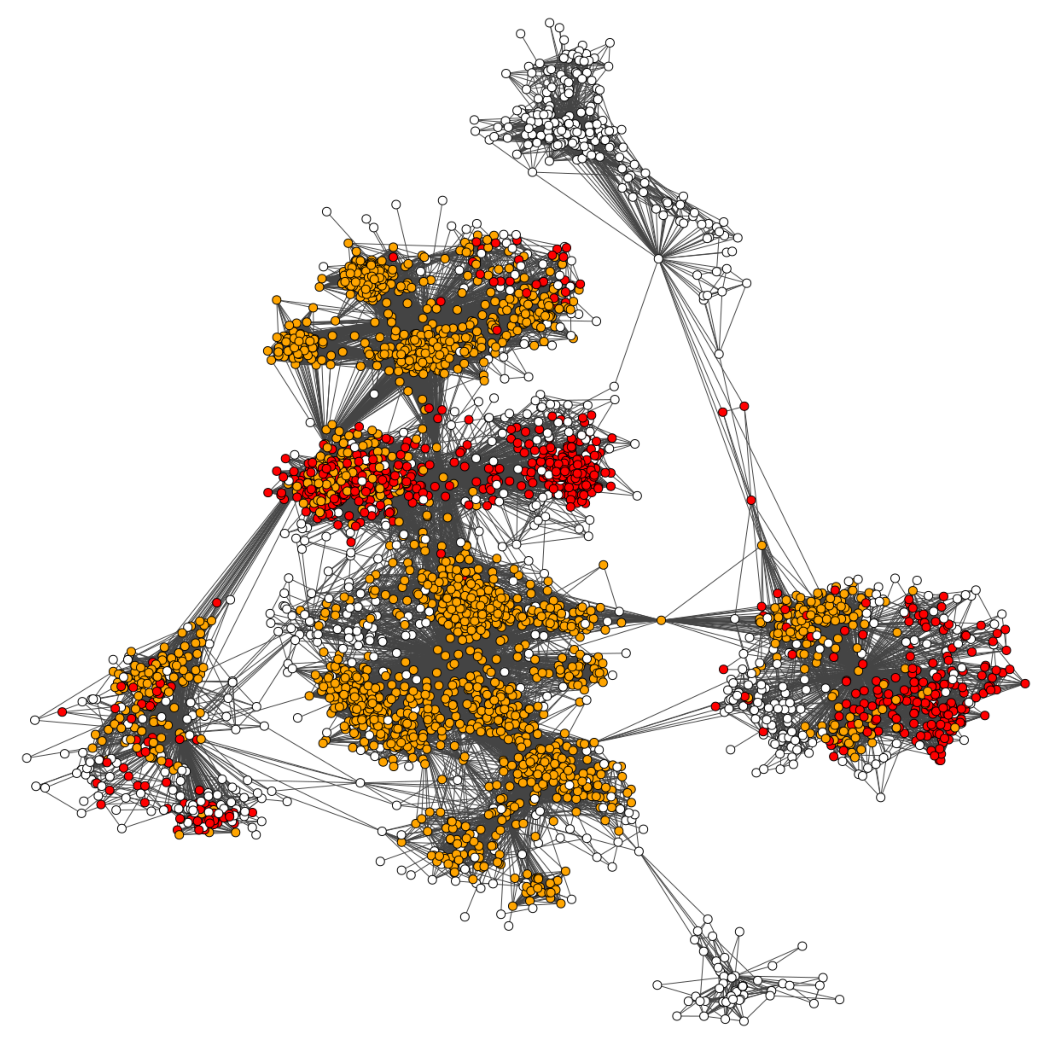}}
    \subfloat[Day 18]{\includegraphics[width=4cm, height=5cm]{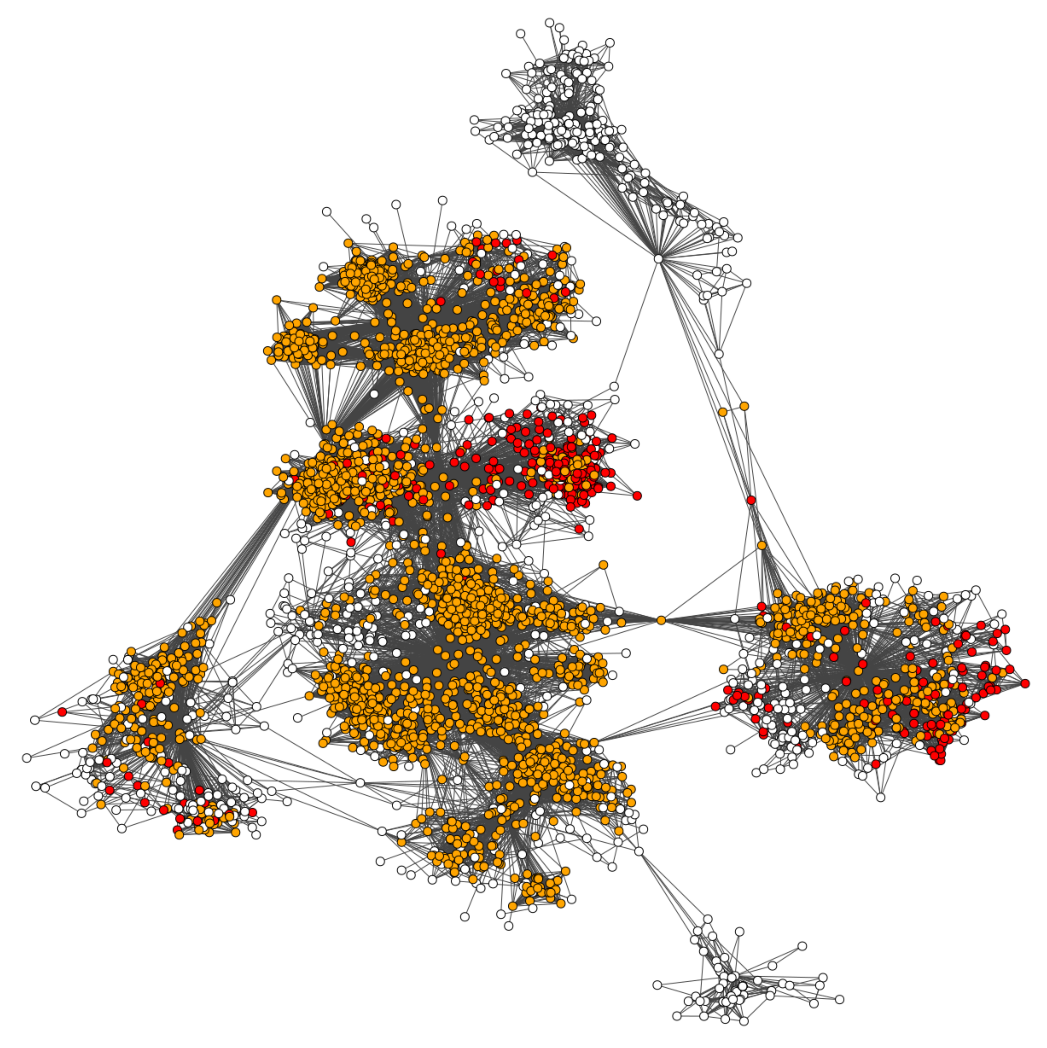}}
    \subfloat[Day 20]{\includegraphics[width=4cm, height=5cm]{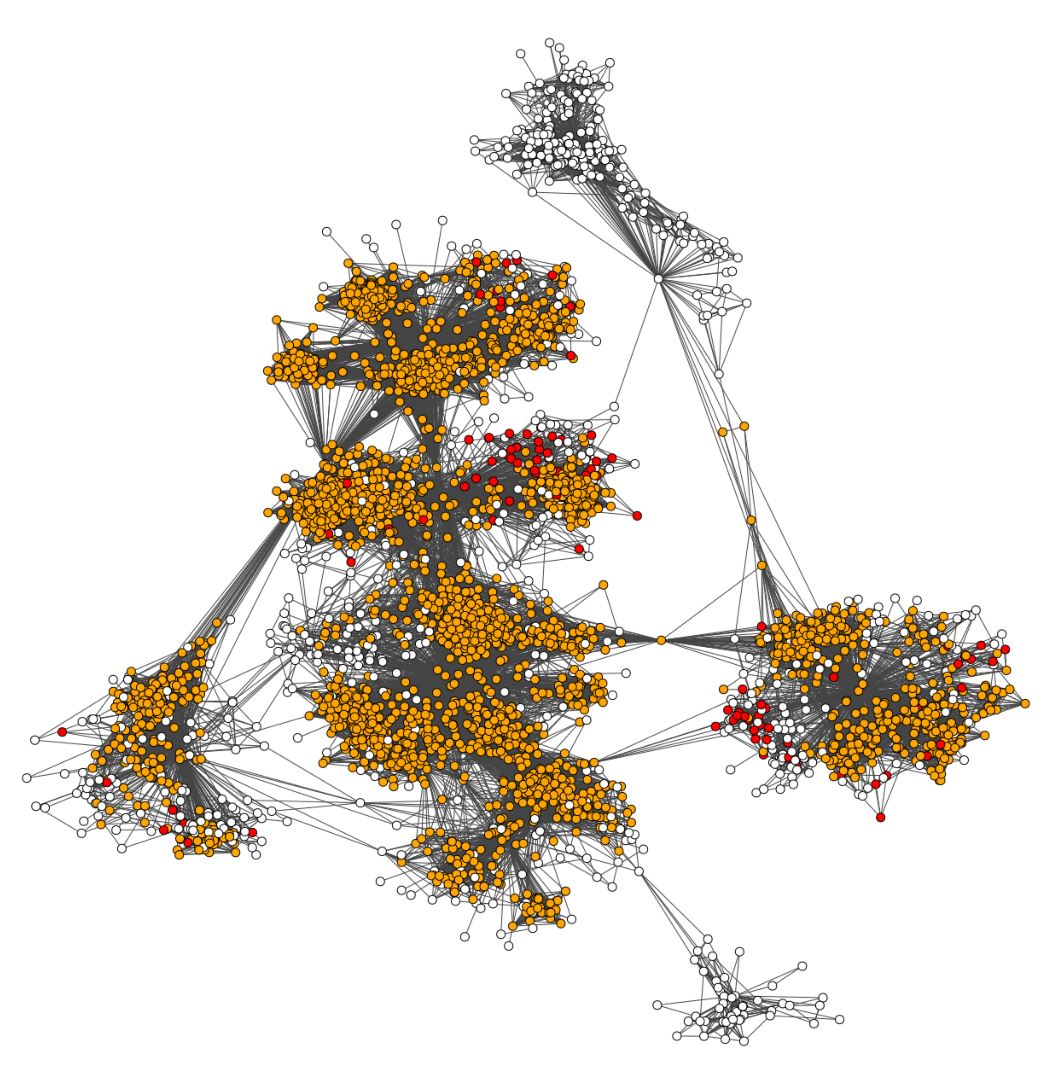}} 
    \subfloat[Day 22]{\includegraphics[width=4cm, height=5cm]{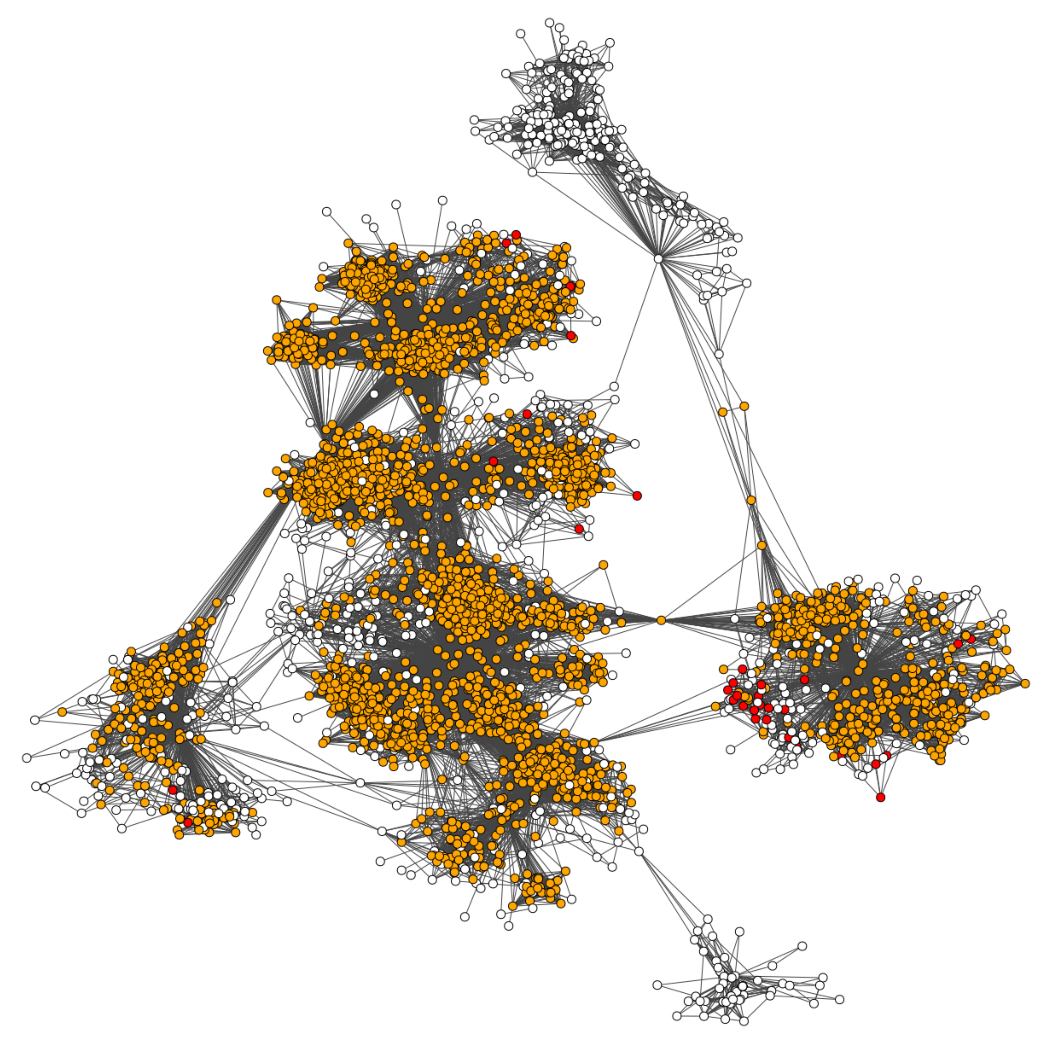}}\\
    \subfloat[Day 24]{\includegraphics[width=4cm, height=5cm]{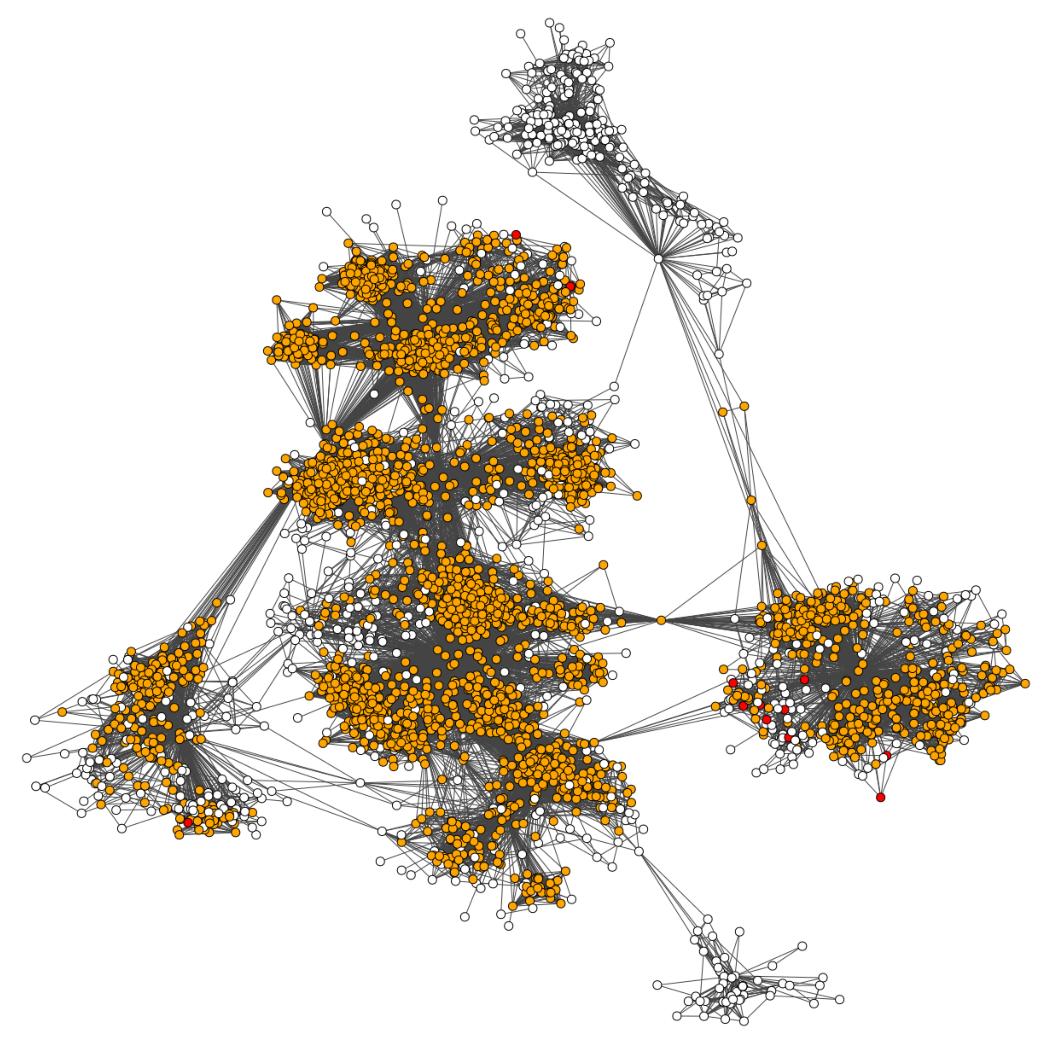}}
    \subfloat[Day 27]{\includegraphics[width=4cm, height=5cm]{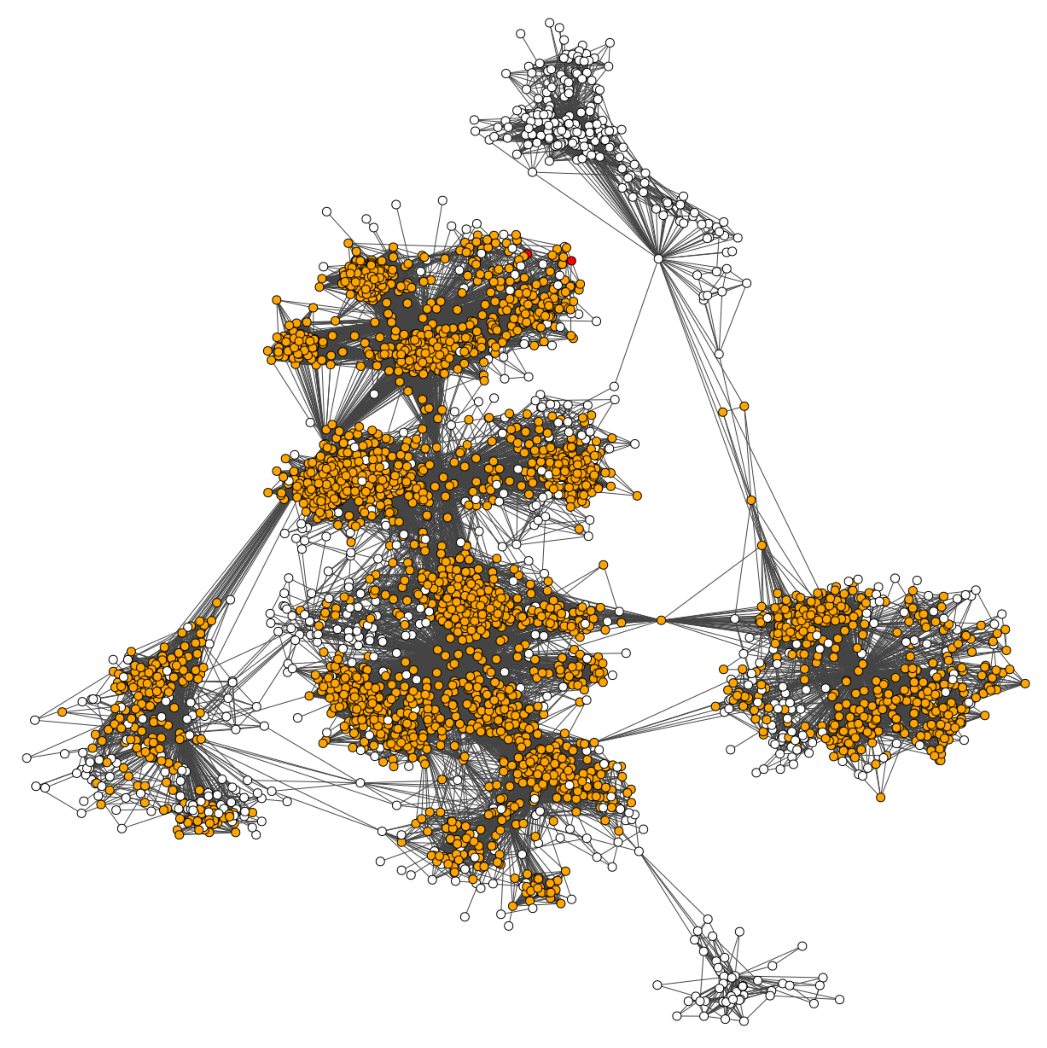}}
    \subfloat[Day 35]{\includegraphics[width=4cm, height=5cm]{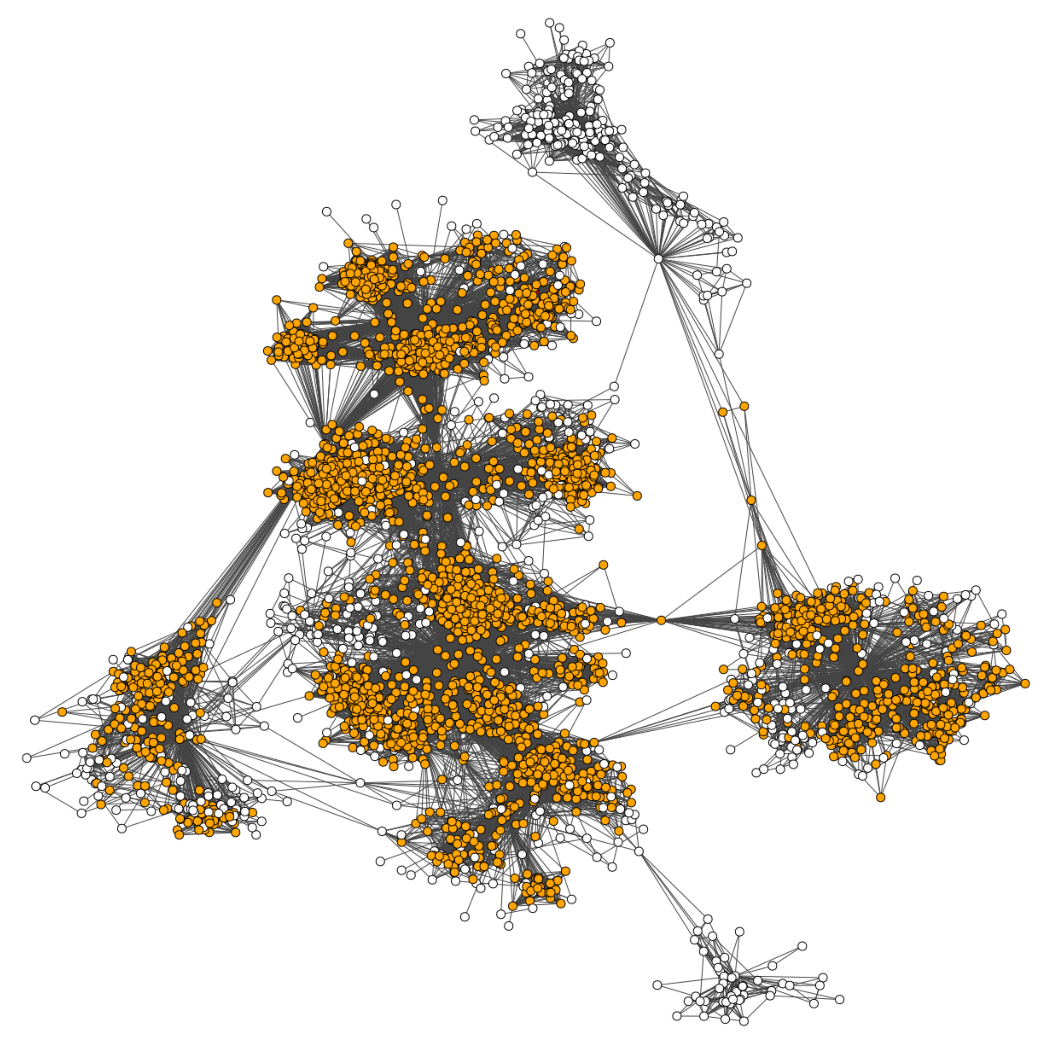}}
    \subfloat[Day 42]{\includegraphics[width=4cm, height=5cm]{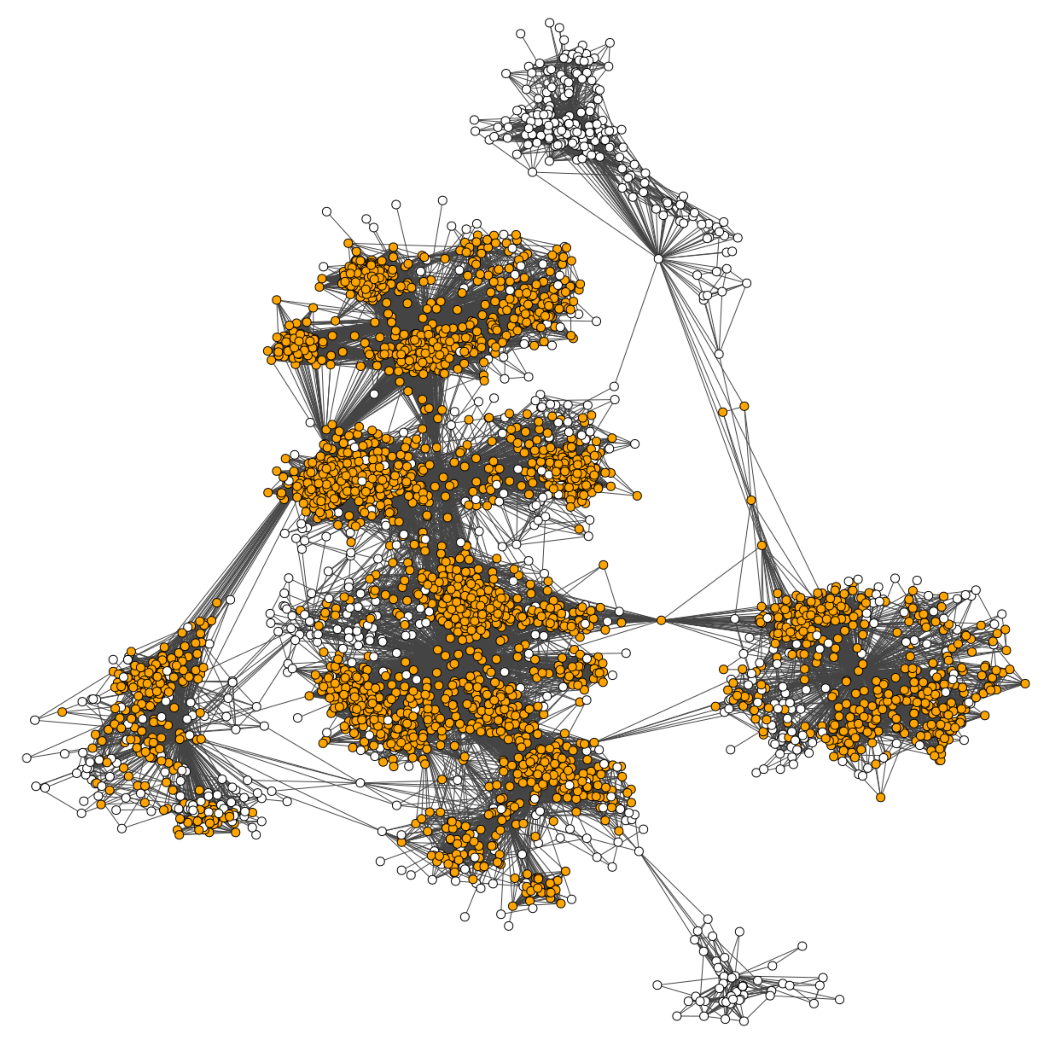}}\\
\end{tabular}

\caption{The visualization of spreading the rumor in the FB SN with 1 randomly chosen initial red node. The process ends after 42 rounds with almost $80\%$ of the nodes being orange.}
\label{visualize-fig}
\end{figure*}

\section{Inequalities}\label{appendix-ineq}

Here, we provide two standard probabilistic tools, Chernoff bound and Markov's inequality.

\begin{lemma}[Chernoff bound, cf.~\cite{dubhashi2009concentration}]
\label{Chernoff}
Suppose that $x_1,\cdots,x_n$ are independent Bernoulli random variables and let $X$ denote their sum, then for $0\leq \delta\leq 1$
\begin{itemize}
\item[(i)]$\textrm{Pr}[X\leq \left(1-\delta\right)\mathbb{E}[X]]\leq \exp\left({-\frac{\delta^2\mathbb{E}[X]}{2}}\right)$
\item[(ii)]$\textrm{Pr}[\left(1+\delta\right)\mathbb{E}[X]\leq X]\leq \exp\left({-\frac{\delta^2\mathbb{E}[X]}{3}}\right)$.
\end{itemize}
\end{lemma}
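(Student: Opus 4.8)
The plan is to prove both tail bounds via the standard exponential moment method (Chernoff's technique), which converts a tail probability into a statement about the moment generating function (MGF) and then optimizes over a free parameter. Throughout, write $\mu := \mathbb{E}[X] = \sum_{i=1}^{n} p_i$, where $p_i := \Pr[x_i = 1]$.

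For the upper tail (ii), I would fix a parameter $t > 0$ and apply Markov's inequality (Lemma~\ref{Markov}) to the nonnegative random variable $e^{tX}$:
\[
\Pr[X \geq (1+\delta)\mu] = \Pr[e^{tX} \geq e^{t(1+\delta)\mu}] \leq \frac{\mathbb{E}[e^{tX}]}{e^{t(1+\delta)\mu}}.
\]
The key step is to control the MGF using independence: since $X = \sum_i x_i$ and the $x_i$ are independent, $\mathbb{E}[e^{tX}] = \prod_{i=1}^n \mathbb{E}[e^{t x_i}]$. For a single Bernoulli variable, $\mathbb{E}[e^{t x_i}] = 1 + p_i(e^t - 1) \le \exp(p_i(e^t-1))$, where I use the elementary inequality $1 + y \le e^y$. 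Multiplying over $i$ gives $\mathbb{E}[e^{tX}] \le \exp((e^t-1)\mu)$, hence $\Pr[X \geq (1+\delta)\mu] \le \exp(\mu[(e^t-1) - t(1+\delta)])$. I would then choose $t$ to minimize the exponent, namely $t = \ln(1+\delta)$, which yields the clean form $\Pr[X \geq (1+\delta)\mu] \le \left(e^{\delta}/(1+\delta)^{1+\delta}\right)^{\mu}$. The lower tail (i) is symmetric: fix $t>0$, apply Markov to $e^{-tX}$ to get $\Pr[X \le (1-\delta)\mu] \le \mathbb{E}[e^{-tX}]/e^{-t(1-\delta)\mu}$, use $\mathbb{E}[e^{-t x_i}] \le \exp(p_i(e^{-t}-1))$ to obtain $\Pr[X \le (1-\delta)\mu] \le \exp(\mu[(e^{-t}-1)+t(1-\delta)])$, and set $t = -\ln(1-\delta)$ to produce $\Pr[X \le (1-\delta)\mu] \le \left(e^{-\delta}/(1-\delta)^{1-\delta}\right)^{\mu}$.

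The remaining, and genuinely most delicate, step is to simplify these two raw bounds into the stated exponential forms valid on the whole range $0 \le \delta \le 1$. For (ii) I must verify $e^{\delta}/(1+\delta)^{1+\delta} \le e^{-\delta^2/3}$, equivalently $g(\delta) := (1+\delta)\ln(1+\delta) - \delta - \delta^2/3 \ge 0$; I would do this by checking $g(0)=0$, $g'(\delta) = \ln(1+\delta) - 2\delta/3$ with $g'(0)=0$, and analyzing $g''(\delta) = (1+\delta)^{-1} - 2/3$, which shows $g'$ rises then falls yet stays nonnegative on $[0,1]$ (in particular $g'(1) = \ln 2 - 2/3 > 0$), so $g$ is nondecreasing and hence nonnegative. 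For (i) I must verify $e^{-\delta}/(1-\delta)^{1-\delta} \le e^{-\delta^2/2}$, equivalently $h(\delta) := (1-\delta)\ln(1-\delta) + \delta - \delta^2/2 \ge 0$; here $h(0)=0$, $h'(\delta) = -\ln(1-\delta) - \delta$ with $h'(0)=0$, and $h''(\delta) = \delta/(1-\delta) \ge 0$ immediately give that $h'$ is nondecreasing, so $h$ is increasing and nonnegative.

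I expect this final calculus lemma — rather than the MGF computation — to be the main obstacle, since the exponent optimization over $t$ is forced and mechanical, whereas the passage to the clean $\delta^2$-form requires the sign analyses above. Notably, the slightly weaker constant $3$ in the denominator of the upper tail (versus $2$ for the lower tail) is exactly what makes the corresponding inequality $g(\delta)\ge 0$ hold across the full interval $[0,1]$; a naive attempt to push the upper-tail constant down to $2$ would fail near $\delta = 1$, so the asymmetry between the two stated bounds is intrinsic to the method rather than an artifact.
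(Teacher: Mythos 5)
The paper does not prove this lemma at all: it is quoted as a standard tool with a citation to Dubhashi and Panconesi, so there is no in-paper argument to compare against, and your proposal is the canonical exponential-moment proof of exactly these bounds --- Markov's inequality applied to $e^{tX}$ and $e^{-tX}$, the MGF estimate $\mathbb{E}[e^{tx_i}] = 1+p_i(e^t-1)\le e^{p_i(e^t-1)}$, the forced optimal choices $t=\ln(1+\delta)$ and $t=-\ln(1-\delta)$, and the sign analyses of $g$ and $h$. It is correct as written; in particular your closing remark is accurate, since at $\delta=1$ one has $2\ln 2 - 1 \approx 0.386 < 1/2$, so the upper-tail constant cannot be improved from $3$ to $2$ over the full range $[0,1]$ by this route, whereas $g'(1)=\ln 2 - 2/3 > 0$ is precisely what makes the stated bound survive there.
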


\begin{lemma}[Markov’s inequality, cf.~\cite{dubhashi2009concentration}]
\label{Markov}
Let $X$ be a non-negative random variable with finite expectation and $a>0$, then
\[
\Pr[X\ge a]\le \frac{\mathbb{E}[X]}{a}.
\]
\end{lemma}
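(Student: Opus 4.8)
\emph{Proof proposal.} The plan is to show that, with CM6 in place, the set of super nodes that ever contain a red node cannot grow beyond the two seeds $x$ and $y$, so that only $o(n)$ nodes are ever colored. First I would isolate the structural bottleneck that CM6 imposes on the moderate expander. Consider any super node $z$ at the first moment one of its nodes turns red. Since $z$ was entirely uncolored an instant before, that node $u\in z$ has no red clique-neighbour, so \emph{all} of its red neighbours are external. By Observation~\ref{obvs1} there is at most one edge between $z$ and any other super node, hence $u$ has at most one neighbour in each external super node. CM6 requires $u$ to have at least two red neighbours, so $u$ must be adjacent to at least two \emph{distinct} super nodes that currently contain a red node. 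Equivalently, in the contracted $D$-regular graph $H$ obtained by collapsing each super node, the newly infected super node $z$ must be adjacent to two distinct currently-red super nodes. This conclusion is insensitive to how many nodes of $x,y$ are initially red.

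Next I would run a monotonicity argument over the set $R_t$ of super nodes that have contained a red node up to round $t$, with $R_0=\{x,y\}$. If $R_t=\{x,y\}$, then the only currently-red super nodes lie in $\{x,y\}$, so by the previous paragraph any super node newly entering $R$ must be adjacent in $H$ to both $x$ and $y$, i.e.\ a common neighbour of $x$ and $y$. Consequently, if $x$ and $y$ have no common neighbour in $H$, then $R_t=\{x,y\}$ for all $t$ by induction; at most $2\log^2 n=o(n)$ nodes are ever colored, which is well below the $0.1n$ threshold, and so the rumour does not spread.

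The quantitative heart of the argument is to show that two uniformly random super nodes have no common neighbour in $H$ w.h.p. Since $H$ is $D$-regular on $N$ nodes, averaging over the random distinct pair $(x,y)$ gives
\[
\mathbb{E}\big[\,|N_H(x)\cap N_H(y)|\,\big]=\frac{1}{N(N-1)}\sum_{z}d_H(z)\big(d_H(z)-1\big)=\frac{D(D-1)}{N-1}=O\!\left(\frac{D^2}{N}\right).
\]
Plugging in $D=d\log^2 n$, $N=n/\log^2 n$ and the hypothesis $d\le n^{\frac12-\epsilon}$,
\[
\frac{D^2}{N}=\frac{d^2\log^6 n}{n}\le \frac{n^{1-2\epsilon}\log^6 n}{n}=\frac{\log^6 n}{n^{2\epsilon}}=o(1).
\]
Because $|N_H(x)\cap N_H(y)|$ is a non-negative integer, Markov's inequality (Lemma~\ref{Markov}) yields $\Pr[\,|N_H(x)\cap N_H(y)|\ge 1\,]=o(1)$, so w.h.p.\ $x$ and $y$ share no common neighbour, which combined with the previous paragraph completes the proof.

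I expect the main obstacle to be conceptual rather than computational: correctly reducing the event ``a fresh super node acquires its first red node'', under the simultaneous clique-internal and external CM6 dynamics, to the clean requirement of adjacency to two distinct already-red super nodes, where the single-edge property of Observation~\ref{obvs1} is what rules out a node collecting its two red neighbours inside one other super node. Once this reduction is in place, the degree hypothesis $d\le n^{\frac12-\epsilon}$ is precisely what forces the expected common-neighbour count to vanish, and the rest is a one-line Markov bound. I would also note that the expander property $\lambda\le C\sqrt{D}$ is not actually needed for this direction; only the regularity of $H$ is used.
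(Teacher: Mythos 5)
Your proposal does not prove the statement it was asked to prove. The statement is Lemma~\ref{Markov}, Markov's inequality itself --- a standard textbook fact that the paper states with a citation and does not prove. What you have written instead is a proof of Theorem~\ref{2timesME} (the CM6 result on moderate expanders), in which Markov's inequality is merely invoked as a tool. Using a lemma as an ingredient in a downstream argument is not a proof of that lemma, so as an answer to the given task this is a category error: nothing in your text establishes $\Pr[X\ge a]\le \mathbb{E}[X]/a$. The expected content here is the one-line argument: for non-negative $X$ and $a>0$,
\[
\mathbb{E}[X]\ \ge\ \mathbb{E}\!\left[X\cdot \mathbf{1}_{\{X\ge a\}}\right]\ \ge\ a\cdot \Pr[X\ge a],
\]
and dividing by $a$ gives the claim.

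As a side remark, the argument you did write is essentially correct and close in spirit to the paper's Appendix~\ref{appendix-2timesME} proof of Theorem~\ref{2timesME}: both reduce CM6 spreading to the requirement (via Observation~\ref{obvs1}) that a fresh node needs neighbours in two distinct red super nodes, and both show two random super nodes w.h.p.\ admit no common neighbour. The paper bounds, for each fixed node $v$, the probability $\binom{d}{2}/\binom{N}{2}=o(1/n)$ that the random pair $\{x,y\}$ lands among $v$'s $d$ adjacent super nodes and then union bounds over all $n$ nodes, whereas you compute the expected number of common neighbours of $x,y$ in the contracted graph and apply Markov; the two calculations are equivalent up to bookkeeping. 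But none of this answers the question posed, which concerns Lemma~\ref{Markov} alone.
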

\section{Tightness of Theorem~\ref{er-thm}}
\label{appendix-ER-tightness}

Here, we argue that the conditions of Theorem~\ref{er-thm} cannot be relaxed. In Theorem~\ref{er-tight}, we prove that on $\mathcal{G}_{n,p}$ if $p \geq 1/n^{(1/2) - \epsilon}$ for some $\epsilon>0$, then after one round there will be $\Omega(n^{2\epsilon})$ red nodes w.h.p. This implies that the bound on $p$ in Theorem~\ref{er-thm} is tight. Afterward, we show that if we replace ``constant probability'' with ``w.h.p.'' in Theorem~\ref{er-thm}, the statement of the theorem is no longer true.

\begin{theorem}\label{er-tight}
Consider the coloring where only one node is red (and the rest is uncolored) on $\mathcal{G}_{n,p}$ with $p \geq 1/n^{(1/2) - \epsilon}$ for an arbitrary constant $\epsilon > 0$. Then, after one round there are $\Omega(n^{2\epsilon})$ red nodes w.h.p.
\end{theorem}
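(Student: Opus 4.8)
The plan is to separate the two sources of randomness: first the random graph $\mathcal{G}_{n,p}$ itself, and then, conditioned on the realized graph $G$, the independent coin flips that decide which neighbors of the initial red node $v$ accept the rumor in round $1$. Since $v$ is the only red node in $\mathcal{C}_0$ and $\mathcal{J}_1(v)=1$, each neighbor $u\in N(v)$ turns red independently with probability $S(v,u)/2=|\hat{N}(v)\cap\hat{N}(u)|/(2|N(v)\cup N(u)|)$. Writing $X$ for the number of newly red nodes, we have $X=\sum_{u\in N(v)}B_u$ with the $B_u$ independent $\mathrm{Bernoulli}(S(v,u)/2)$ given $G$, so $\mathbb{E}[X\mid G]=\tfrac12\sum_{u\in N(v)}S(v,u)$. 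The goal is to show this conditional mean is $\Omega(n^{2\epsilon})$ on a good graph event, and then concentrate $X$ around it.

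First I would record the structural facts about $G$ that hold w.h.p. Because $np\ge n^{1/2+\epsilon}=\omega(\log n)$, a Chernoff bound gives $d(v)=\Theta(np)$, and a union bound over all nodes gives maximum degree at most $2np$. Consequently $|N(v)\cup N(u)|\le d(v)+d(u)\le 3np$ for every $u\in N(v)$, which controls the denominator of $S(v,u)$ uniformly, so that $S(v,u)\ge|N(v)\cap N(u)|/(3np)$.

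The crucial step is to lower bound the numerators in aggregate rather than one neighbor at a time. Revealing the edges incident to $v$ fixes $N(v)$; conditioned on this, the edges inside $N(v)$ are still independent and present with probability $p$. Since $u\in N(v)$, a common neighbor of $v$ and $u$ is exactly a node of $N(v)\setminus\{u\}$ adjacent to $u$, so $\sum_{u\in N(v)}|N(v)\cap N(u)|=2\,|E(N(v))|$, where $|E(N(v))|$ denotes the number of edges within $N(v)$ and is distributed as $\mathrm{Bin}\big(\binom{d(v)}{2},p\big)$ with mean $\Theta(n^2p^3)$. As $n^2p^3\ge n^{1/2+3\epsilon}=\omega(\log n)$, a Chernoff bound yields $|E(N(v))|=\Omega(n^2p^3)$ w.h.p., and hence
\[
\mathbb{E}[X\mid G]=\tfrac12\sum_{u\in N(v)}S(v,u)\ \ge\ \frac{1}{6np}\sum_{u\in N(v)}|N(v)\cap N(u)|=\frac{|E(N(v))|}{3np}=\Omega\!\left(\frac{n^2p^3}{np}\right)=\Omega(np^2).
\]
Since $p\ge n^{-1/2+\epsilon}$ gives $np^2\ge n^{2\epsilon}$, we get $\mathbb{E}[X\mid G]=\Omega(n^{2\epsilon})$ on this graph event.

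Finally I would apply the lower-tail Chernoff bound (Lemma~\ref{Chernoff}(i)) conditionally on $G$: as $\mathbb{E}[X\mid G]=\Omega(n^{2\epsilon})=\omega(\log n)$, taking $\delta=1/2$ gives $\Pr[X\le\tfrac12\mathbb{E}[X\mid G]\mid G]\le\exp(-\Omega(n^{2\epsilon}))=o(1)$, so $X=\Omega(n^{2\epsilon})$ w.h.p.; intersecting with the graph events completes the argument. The main obstacle is exactly the dependency in the third paragraph: the per-neighbor counts $|N(v)\cap N(u)|$ are correlated because they share the edges inside $N(v)$, which is why I route through the single clean Binomial $|E(N(v))|$ rather than union-bounding over neighbors, and why the uniform maximum-degree bound is needed to translate that aggregate estimate back into a bound on $\sum_{u}S(v,u)$.
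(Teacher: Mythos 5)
Your proof is correct, and its skeleton matches the paper's: establish good graph events w.h.p., lower-bound the conditional expectation of the number of newly red nodes by $\Omega(np^2)=\Omega(n^{2\epsilon})$, and finish with a conditional Chernoff bound on the independent acceptance coins. Where you diverge is in how the numerators $|N(v)\cap N(u)|$ are controlled. The paper proves a \emph{uniform per-pair} lower bound: for each fixed pair, $|N(v)\cap N(u)|$ is a $\mathrm{Bin}(n-2,p^2)$ variable with mean $\Theta(n^{2\epsilon})$, so a Chernoff bound plus a union bound over all $\binom{n}{2}$ pairs shows every pair has at least $\tfrac12(n-2)p^2$ common neighbors w.h.p.; each neighbor of $v$ then accepts with probability at least $p/32$. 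You instead bound the \emph{aggregate} $\sum_{u\in N(v)}|N(v)\cap N(u)|=2|E(N(v))|$ through the single binomial counting edges inside $N(v)$. Both are valid; your route needs only one concentration event beyond the degree bounds and is arguably cleaner, while the paper's route yields the stronger statement that every individual acceptance probability is $\Omega(p)$ (which it reuses in the ``constant probability'' discussion following the theorem). One remark: the ``correlation obstacle'' you cite as the reason for avoiding per-neighbor bounds is not actually an obstacle --- the paper never concentrates the correlated sum directly, it concentrates each $|N(v)\cap N(u)|$ separately and union-bounds, which is unproblematic since $n^{2\epsilon}\gg\log n$. (Also, with maximum degree $2np$ your bound on $|N(v)\cup N(u)|$ should read $4np$ rather than $3np$, but this only affects constants inside the $\Omega(\cdot)$.)
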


\begin{proof}
Let us first define two events and bound their probability.
\begin{itemize}
    \item $\mathcal{A}:=$ The event that there is no node $v$ such that $d(v)< \frac{1}{2}(n-1)p$ or $d(v)> 2(n-1)p$.
    \item $\mathcal{B}:=$ The event that there are no two distinct nodes $v,u$ such that $|N(v)\cap N(u)|\le\frac{1}{2}(n-2)p^2$.
\end{itemize}

Consider an arbitrary node $v$. Label the other nodes from $u_1$ to $u_{n-1}$. Let Bernoulli random variable $x_i$, for $1\le i\le n-1$, be 1 if and only if the edge $\{v,u_i\}$ is present. Note that $d(v)=\sum_{i=1}^{n-1}x_i$ and $\mathbb{E}[d(v)]=(n-1)p$. Since $x_i$'s are independent, using Chernoff bound (Lemma~\ref{Chernoff} in Section~\ref{appendix-ineq}) gives
\begin{align*}
\Pr & \left[ \frac{1}{2}(n-1)p \le  d(v)\le 2(n-1)p \right]
\\ \ge & 1-\exp\left(-\frac{(n-1)p}{8}\right)-\exp\left(-\frac{(n-1)p}{3}\right)
\\ \ge & 1-\exp\left(-\frac{(n-1)}{8n^{(1/2)-\epsilon}}\right)-\exp\left(-\frac{(n-1)}{3n^{(1/2)-\epsilon}}\right)
\\ \ge & 1-\exp\left(-\Theta(\sqrt{n})\right).
\end{align*}
Since we have $n$ nodes, we get
\begin{equation}
    \label{event-A}
    \Pr[\mathcal{A}]\ge 1- n \exp\left(-\Theta(\sqrt{n})\right)=1-o(1).
\end{equation}
Now, we bound the probability of event $\mathcal{B}$. Consider two arbitrary distinct nodes $v$ and $u$. Label the remaining nodes from $w_1$ to $w_{n-2}$. Define Bernoulli random variable $y_i$, for $1\le i\le n-2$, to be 1 if and only if the edges $\{v,w_i\}$ and $\{u,w_i\}$ are present. Note that $|N(v)\cap N(u)|=\sum_{i=1}^{n-2}y_i$ and $\mathbb{E}[|N(v)\cap N(u)|]=(n-2)p^2$. Since $y_i$'s are independent, using Chernoff bound (Lemma~\ref{Chernoff} in Section~\ref{appendix-ineq}) yields

\begin{align*}
    \Pr\left[\left|N(v)\cap N(u)\right|\le \frac{1}{2}(n-2)p^2\right]\le & \exp\left(-\frac{(n-2)p^2}{8}\right) \\ \le & \exp\left(\frac{n-2}{8n^{1-2\epsilon}}\right) \\ = & \exp\left(-\Theta\left(n^{2\epsilon}\right)\right).
\end{align*}

Since there are ${n \choose 2}$ ways to select two distinct nodes, we have
\begin{equation}
    \label{event-B}
    \Pr[\mathcal{B}]\ge 1-{n \choose 2}\cdot \exp\left(-\Theta\left(n^{2\epsilon}\right)\right)=1-o(1).
\end{equation}

Let $v$ be the node which is red in $\mathcal{C}_0$ and $u$ be a neighbor of $v$. We are interested in the probability of $\mathcal{C}_1(u)=r$ conditioning on $\mathcal{A}$ and $\mathcal{B}$. Note that event $\mathcal{A}$ implies that $|N(v)\cup N(u)|\le 4(n-1)p$ and event $\mathcal{B}$ asserts that $|N(v)\cap N(u)|\ge \frac{1}{2}(n-2)p^2$. Therefore, we have
\[
\Pr[\mathcal{C}_1=r|\mathcal{A}\land \mathcal{B}]\ge \frac{(1/2)(n-2)p^2}{2\cdot 4(n-1)p}\ge \frac{p}{32}.
\]
Let $Z$ denote the number of nodes that $v$ makes red in the first round. Since $\mathcal{A}$ implies that $d(v)\ge \frac{1}{2}(n-1)p$, we get
\[
\mathbb{E}[Z|\mathcal{A}\land \mathcal{B}]\ge \frac{1}{2}(n-1)p\cdot \frac{p}{32}\ge \frac{n-1}{64}\cdot \frac{1}{n^{1-2\epsilon}}\ge \frac{n^{2\epsilon}}{128}
\]
where we used $p\ge 1/n^{(1/2)-\epsilon}$ and $n-1\ge n/2$. Now, applying Chernoff bound yields
\begin{equation}
    \label{z-bound}
    \begin{split}
           \Pr\left[Z\le \frac{1}{2}\cdot\frac{n^{2\epsilon}}{128}\Big|\mathcal{A}\land \mathcal{B}\right] \le \exp\left(-\frac{n^{2\epsilon}}{4\cdot 128}\right)  = \\ \exp\left(-\Theta\left(n^{2\epsilon}\right)\right) 
    \end{split}
\end{equation}

Now, combining
\[
\Pr\left[Z> \frac{n^{2\epsilon}}{256}\right]\ge \Pr\left[Z > \frac{n^{2\epsilon}}{256}\Big|\mathcal{A}\land \mathcal{B}\right]. \Pr[\mathcal{A}\land \mathcal{B}]
\]
and Equations~\eqref{event-A},~\eqref{event-B},~\eqref{z-bound}, we can conclude that
\[
\Pr\left[Z> \frac{n^{2\epsilon}}{256}\right]=1-o(1).
\]
Therefore, after one round there exist $\Omega(n^{2\epsilon})$ red nodes w.h.p.
\end{proof}

\paragraph{Constant Probability.} Theorem~\ref{er-thm} asserts that there is a constant probability that no node becomes red during the process, except one node which is initially red. We claim that if we replace ``constant probability'' with ``w.h.p.'', the statement is no longer true. Let $p=1/n^{(1/2)+\epsilon}$ for some $\epsilon>0$ and assume that $v$ is the only node which is red initially. We prove that with a constant probability at least one node becomes red after one round.

Consider the event $\mathcal{A}$ as defined in the proof of Theorem~\ref{er-tight}. Let $u$ be a neighbor of $v$. If $\mathcal{A}$ holds, then $|N(v)\cup N(u)|\le 4(n-1)p$. Thus, we have
\[
\Pr[\mathcal{C}_1(v)=r|\mathcal{A}]\ge \frac{2}{2\cdot 4(n-1)p}=\frac{1}{4(n-1)p}.
\]
Let $\mathcal{Q}$ be the event that $v$ does not make any node red in the first round. Using the fact that if $\mathcal{A}$ holds, then $d(v)\ge \frac{1}{2}(n-1)p$, we get
\begin{align*}
    \Pr[\mathcal{Q}|\mathcal{A}]\le & \left(1-\frac{1}{4(n-1)p}\right)^{\frac{1}{2}(n-1)p} \\ \le & \exp\left(-\frac{(n-1)p}{2\cdot 4(n-1)p}\right) \\ = & \exp\left(-\frac{1}{8}\right)
\end{align*}
where we used the estimate $1-x\le \exp(-x)$.

Recall that according to Equation~\eqref{event-A}, we have that $\Pr[\mathcal{A}]=1-o(1)$. Combining this with the above inequality, we get
\begin{align*}
    \Pr[\mathcal{Q}] & =\Pr[\mathcal{Q}|\mathcal{A}]\cdot \Pr[\mathcal{A}]+ \Pr[\mathcal{Q}|\bar{\mathcal{A}}]\cdot \Pr[\bar{\mathcal{A}}] \\ \le & \exp(-1/8) (1-o(1))+ 1\cdot o(1)\le C
\end{align*}
for some constant $0<C<1$. Thus, we have $\Pr[\bar{\mathcal{Q}}]\ge 1-C>0$, i.e., there is a constant probability that at least one node becomes red in the first round.

\section{Proof of Theorem~\ref{flower-thm}}
\label{appendix-flower}
A path of super nodes is a sequence of super nodes which form a path in the cycle obtained from collapsing each super node into a node. A path is \textit{uncolored} if all its super nodes are uncolored. In a \textit{reddish path}, there are no two adjacent uncolored super nodes and the endpoints are colored. We note that for any coloring of the ($n$,$r$)-flower graph, there is a set of maximal uncolored and reddish paths which partition the nodes in the graph. In the rest of this proof, any time we refer to a path, it is a path in this unique set of paths, where the coloring is clear from the context. Now, let us make the following observation, which comes in handy later.

\begin{obs}\label{obs-reddish-path}
Initially there are at most $s(n)$ reddish paths, and during the process the number of reddish paths stays the same or decreases. (This is because the reddish paths can join each other, but cannot split into smaller paths.)
\end{obs}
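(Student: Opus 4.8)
The plan is to derive both assertions from two structural facts about the dynamics on the flower graph: colors are \emph{monotone} (a red or orange node never returns to uncolored), and the rumor can only enter a super node through its boundary node, which in turn needs a red boundary node in an \emph{adjacent} super node. Granting the partition into maximal uncolored and reddish paths, the counting assertion is then almost immediate and the monotonicity assertion reduces to showing that uncolored stretches can only shrink from their ends.

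For the first claim I would argue by a simple counting bound: every reddish path has colored endpoints and therefore contains at least one colored super node, whereas every uncolored path contains none. Since the paths partition the super nodes, the number of reddish paths is at most the number of colored super nodes. At the start the only colored super nodes are the $s(n)$ red ones, so there are at most $s(n)$ reddish paths.

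For the second (temporal) claim I would first record monotonicity: by the updating rule a node that is red or orange stays colored forever, so once a super node is colored it remains colored and the set of uncolored super nodes can only shrink; in particular no new uncolored super node is ever created. Next I would establish locality: if an uncolored super node $x$ has both of its cycle-neighbours uncolored, then every clique node of $x$ has all its neighbours inside $x$ (all uncolored), and the boundary node of $x$ has, outside $x$, only the two neighbouring boundary nodes, again uncolored; hence no node of $x$ has a red neighbour and $x$ cannot be colored in the next round. Consequently a maximal run of at least two consecutive uncolored super nodes can never be punctured in its interior---it can only shrink from the two ends where it borders colored super nodes, and possibly vanish. Since exactly these runs (of length $\ge 2$) are the boundaries separating consecutive reddish paths, and such runs can neither appear (monotonicity) nor be cut in two (locality), two reddish paths can merge when the run between them shrinks to length $\le 1$, but no reddish path can ever split. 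Therefore the number of reddish paths is non-increasing, and this holds even though super nodes update in parallel, since every run independently only shrinks.

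The delicate step is the locality claim, namely that an uncolored super node flanked by two uncolored super nodes cannot be colored. This is precisely where the flower construction is essential: because distinct super nodes communicate only through their single boundary node, the rumor cannot ``jump'' into the middle of an uncolored stretch, which is exactly what forbids splitting a reddish path. Everything else is bookkeeping on the partition, so I would spend the proof's care on stating and verifying this locality property for the boundary node.
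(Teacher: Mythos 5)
Your proposal is correct and follows essentially the same route as the paper, which justifies the observation only by the parenthetical remark that reddish paths can merge but never split; your counting of colored super nodes for the initial bound and your monotonicity-plus-locality argument (an uncolored super node flanked by uncolored super nodes has no red neighbor, so uncolored gaps only shrink from their ends) are exactly the details the paper leaves implicit. Nothing further is needed.
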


Let a \textit{phase} be a sequence of $k$ rounds. To analyze the process, we break it into phases rather than rounds. Let $\mathcal{C}$ be the coloring at the beginning of phase $i\in \mathbb{N}$. Consider all the endpoints of the uncolored paths and let $U$ be the boundary nodes in these endpoints. Let $\mathcal{E}_i$ be the event that no node in $U$ becomes red during the whole phase $i$.

Note that if the event $\mathcal{E}_i$ occurs, then all boundary nodes on the endpoints of the reddish paths become orange. Therefore, all nodes which are not on any reddish path will remain uncolored forever since they have no red neighbor. Our goal is to prove that w.h.p. this happens while still most of the super nodes are not on any reddish path, which implies that the rumor does not spread.

Let us calculate the probability $\Pr[\bar{\mathcal{E}}_i|\bar{\mathcal{E}}_{i-1}\land \cdots \land \bar{\mathcal{E}}_1]$.
% Consider a node $v$ in $R$ which is not red. By definition, there is another node $v'$ in the same super node as $v$ which is red. We have $|N(v)\cup N(v')|= r+2$ and $|\hat{N}(v)\cap \hat{N}(v')|= r$. Thus, the probability that $v'$ makes $v$ red is at least $\frac{r}{(r+2)2^k}\ge C_1$ for some constant $0<C_1<1$. Furthermore,
Consider a node $u$ which is in $U$. By definition, it has at most one red neighbor $u'$ and $|\hat{N}(u)\cap \hat{N}(u')|=2$, $|N(u)\cup N(u')|= 2r+2$. Thus, the probability that $u'$ does not make $u$ red during the whole phase is at least $(1-\frac{2}{(2r+2)\cdot 2})^{k}\ge (\frac{3}{4})^{k}$, where we used $r\ge 1$. Since $k$ is a constant, this probability can be bounded by a constant $0<C<1$. Using Observation~\ref{obs-reddish-path}, we have that $|U|\le 2s(n)$. Hence, we can conclude that

% Since $k$ is a constant, this probability also can be bounded by a constant $0<C_2<1$. Furthermore, using Observation~\ref{obs-reddish-path}, we have that both $|R|$ and $|U|$ are at most $2s(n)$. Combining the statements in this paragraph and letting $C:= C_1\cdot C_2$ (note that $0<C<1$), we can conclude that

\begin{align*}
    \Pr[\bar{\mathcal{E}}_i|\bar{\mathcal{E}}_{i-1}\land \cdots \land \bar{\mathcal{E}}_1]\le 1-C^{2s(n)}.
\end{align*}

Let us define $t^*:= (1/C)^{2s(n)} \log(n)$, then using the estimate $1-x\le \exp(-x)$, we get

\begin{align*}
\Pr[\land^{t^*}_{i=1}\bar{\mathcal{E}}_{i}] = &\Pr[\bar{\mathcal{E}}_{t^*} |\land^{t^*-1}_{i=1}\bar{\mathcal{E}_i}]\cdots \Pr[\bar{\mathcal{E}}_1]\\
& \leq \left(1 - C^{2s(n)} \right)^{t^*}\\ &\leq \exp\left(-t^*\cdot C^{2s(n)}\right) \\ & = \frac{1}{n}.
\end{align*}

Therefore, w.h.p. after at most $t^*$ phases (i.e., $kt^*$ rounds), we reach a coloring where all nodes which are not on any reddish path remain uncolored forever. To finish the proof, it only remains to show that the number of nodes which are on reddish paths after $kt^*$ rounds is small.

We note that we initially have at most $2s(n)$ super nodes on the reddish paths (since from every two adjacent super nodes on a reddish path, at least one is not uncolored). Each reddish path can potentially grow from both sides in each round. Thus, after $t^*$ phases (i.e., $kt^*$ rounds), the number of nodes on the reddish paths is at most $r \cdot \left(2s(n) \cdot kt^* + 2s(n) \right)$. Since $s(n)=o(\log n)$, we have $(1/C)^{2s(n)}=\mathcal{O}(n^{\epsilon}/\log^3 n)$. Now, using $r\le n^{1-\epsilon}$ and $k$ being a constant, we conclude that the number of such nodes is upper-bounded by 
$\mathcal{O}(n/\log n)$. Thus, the number of nodes on the reddish paths is sublinear (i.e., the rumor does not spread).

\section{Proof of Lemma~\ref{spreadfastEstar}}
\label{appendix-lemma-clique}

We split the proof into two parts. First, we prove that after one round at least $\log^2 n/260$ nodes are red in $\mathcal{K}$ and then, we show that one round after that all nodes are red or orange. w.p. $1-o(1/n)$.

Consider two nodes $v,w\in \mathcal{K}$ such that $v$ is red and $w$ is uncolored. Let $q^*$ be the probability that $v$ makes $w$ red in the next round. We have $|\hat{N}(v)\cap \hat{N}(w)|\ge \kappa$ and $|N(v)\cup N(w)|\le 4\kappa$. Thus, using $k=5$, we have
\begin{equation}
    \label{q*}
    q^*\ge \frac{|\hat{N}(v)\cap \hat{N}(w)|}{2^k|N(v)\cup N(w)|}\ge \frac{\kappa}{2^5\cdot 4\kappa}=\frac{1}{128}.
\end{equation}

\paragraph{Part I.} Let $v_1$ be the node which is colored red in round $t$ and let us label the other nodes in $\mathcal{K}$ from $v_2$ to $v_{\kappa}$. Define the Bernoulli random variable $x_i$ for $2 \leq i \leq \kappa$ to be 1 if and only if $v_i$ is colored red by $v_1$ in round $t+1$. Let $X:=\sum_{i=2}^{\kappa}x_i$ be the sum of independent random variables $x_i$'s. We have $\mathbb{E}[X]\ge (\kappa-1)q^*\ge (\log^2 n -1)/128\ge \log^2 n/130$ using Equation~\eqref{q*}, $\kappa\ge \log^2 n$, and $n$ being large. Now, applying Chernoff bound (Lemma~\ref{Chernoff} in Appendix~\ref{appendix-ineq}), we get $\Pr\left[X\le \frac{\log^2 n}{260}\right]\le \exp\left(-\frac{\log^2 n}{1040}\right)=o\left(\frac{1}{n}\right).$
%\[
%\Pr\left[X\le \frac{\log^2 n}{260}\right]\le \exp\left(-\frac{\log^2 n}{1040}\right)=o\left(\frac{1}{n}\right).
%\]

\paragraph{Part II.} Let $U$ and $R$ be the set of uncolored and red nodes in $\mathcal{K}$ in round $t+1$ and $|R|\ge (\log^2 n)/260$. Consider an arbitrary node $u$ in $U$. Since each node in $R$ makes $u$ red independently w.p. $q^*$, the probability that $u$ is not colored red in the next round is at most $(1-q^*)^{|R|}\le (127/128)^{(\log^2 n)/260}$, where we used Equation~\eqref{q*} and $|R|\ge (\log^2 n)/260$. With a union bound, the probability that there exists a node in $U$ which does not become red in the next round is at most $ |U|\cdot (127/128)^{(\log^2 n)/260} \le n/C^{\log^2 n}$ for some constant $C>1$. Thus, this probability is at most $o(1/n)$.Combining Parts I and II implies our desired statement.

\section{Proof of Lemma~\ref{constantboundary}}
\label{appendix-lemma-constantboundary}

To prove Lemma~\ref{constantboundary}, we first need to prove Lemma~\ref{nodesonboundary}, which builds on two well-known Lemmas~\ref{expander mixing lemma} and~\ref{min edges}.

For two node set $A,B\subset V$, we define $e(A,B):=|\{(v,v')\in A \times B:\{v,v'\}\in E\}|$, where $A\times B$ is the Cartesian product of $A$ and $B$.

\begin{lemma}[\cite{friedman2003proof_ALONEIGVALUE}]
\label{expander mixing lemma}
For any two node sets $A,B$ in an $N$-node $D$-regular graph, $\left|e(A,B) - \frac{|A||B|D}{N}\right| \leq \lambda \sqrt{|A||B|}$.
\end{lemma}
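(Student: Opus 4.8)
The plan is to prove the bound through a spectral decomposition of the adjacency matrix $M$ of the $D$-regular graph $G$. Since $G$ is $D$-regular, the all-ones vector $\mathbf{1}$ is an eigenvector of $M$ with eigenvalue $D$, and because $M$ is real and symmetric there is an orthonormal eigenbasis $v_1, v_2, \dots, v_N$ with $v_1 = \mathbf{1}/\sqrt{N}$ and eigenvalues $D = \lambda_1$, $\lambda_2, \dots, \lambda_N$, where $\lambda := \max_{i \ge 2} |\lambda_i|$ is exactly the second-largest absolute eigenvalue used in the statement. The first step is to recast the counting quantity as a bilinear form: if $\mathbf{1}_A$ and $\mathbf{1}_B$ are the indicator vectors of $A$ and $B$, then $e(A,B) = \mathbf{1}_A^{\top} M \mathbf{1}_B$, which the spectral data can directly control.

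Next I would expand both indicator vectors in the eigenbasis, writing $\mathbf{1}_A = \sum_i \alpha_i v_i$ and $\mathbf{1}_B = \sum_i \beta_i v_i$. The key observation is that the components along the top eigenvector are $\alpha_1 = \langle \mathbf{1}_A, v_1 \rangle = |A|/\sqrt{N}$ and $\beta_1 = |B|/\sqrt{N}$. Substituting into the bilinear form gives $e(A,B) = \sum_i \lambda_i \alpha_i \beta_i = \frac{D|A||B|}{N} + \sum_{i \ge 2} \lambda_i \alpha_i \beta_i$, where the leading term $D\alpha_1\beta_1$ is precisely the ``expected'' edge count $\frac{|A||B|D}{N}$. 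Hence the deviation $e(A,B) - \frac{|A||B|D}{N}$ equals the tail sum $\sum_{i \ge 2} \lambda_i \alpha_i \beta_i$.

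The final step is to bound this tail. Using $|\lambda_i| \le \lambda$ for every $i \ge 2$ and the Cauchy--Schwarz inequality yields $\left| \sum_{i \ge 2} \lambda_i \alpha_i \beta_i \right| \le \lambda \sqrt{\sum_{i\ge 2}\alpha_i^2}\,\sqrt{\sum_{i \ge 2}\beta_i^2}$. Because the eigenbasis is orthonormal, Parseval's identity gives $\sum_i \alpha_i^2 = \|\mathbf{1}_A\|^2 = |A|$ and similarly $\sum_i \beta_i^2 = |B|$; dropping the nonnegative $i=1$ terms leaves $\sum_{i \ge 2}\alpha_i^2 \le |A|$ and $\sum_{i \ge 2}\beta_i^2 \le |B|$. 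Combining these yields the claimed bound $\left| e(A,B) - \frac{|A||B|D}{N}\right| \le \lambda\sqrt{|A||B|}$.

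I do not expect a genuine obstacle, since this is the standard argument; the only points requiring care are that $\lambda$ is the second-largest eigenvalue in \emph{absolute} value, so the estimate $|\lambda_i| \le \lambda$ must cover possibly large negative eigenvalues, and that cleanly separating the top eigenvector from the rest relies on $D$-regularity making $\mathbf{1}$ an \emph{exact} eigenvector of $M$.
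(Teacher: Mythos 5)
The paper does not prove this lemma itself; it imports it verbatim from the literature (it is the classical expander mixing lemma, cited to Friedman), so there is no in-paper argument to compare against. Your spectral proof is the standard and correct one: the identity $e(A,B)=\mathbf{1}_A^{\top}M\mathbf{1}_B$ matches the paper's definition of $e(A,B)$ as ordered pairs in $A\times B$ (so edges inside $A\cap B$ are counted twice, which is exactly what the bilinear form computes), the top-eigenvector term yields $\frac{|A||B|D}{N}$ by $D$-regularity, and the Cauchy--Schwarz plus Parseval step bounds the tail by $\lambda\sqrt{|A||B|}$. No gaps.
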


\begin{lemma}[\cite{friedman2003proof_ALONEIGVALUE}]
\label{min edges}
In an $N$-node $D$-regular graph $G$, for any two disjoint node sets $A,B$

\begin{equation*}
    e(A,B) \geq \frac{(D-\lambda)|A||B|}{N}
\end{equation*}
\end{lemma}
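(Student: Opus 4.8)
The plan is to argue spectrally, reading $e(A,B)$ off the adjacency matrix $M$ of $G$ and isolating the contribution of the trivial eigenvector. Since $G$ is $D$-regular, the all-ones vector $\mathbf{1}$ is an eigenvector of $M$ with eigenvalue $D$, and every other eigenvalue $\mu_i$ satisfies $|\mu_i|\le\lambda$ on the orthogonal complement $\mathbf{1}^{\perp}$. I would first write the cut count as the bilinear form $e(A,B)=\mathbf{1}_A^{\top}M\mathbf{1}_B$ and decompose each indicator along $\mathbf{1}$ and its complement, $\mathbf{1}_A=\frac{|A|}{N}\mathbf{1}+\mathbf{x}$ and $\mathbf{1}_B=\frac{|B|}{N}\mathbf{1}+\mathbf{y}$ with $\mathbf{x},\mathbf{y}\in\mathbf{1}^{\perp}$. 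Using $M\mathbf{1}=D\mathbf{1}$ and $\mathbf{x},\mathbf{y}\perp\mathbf{1}$, the mixed terms vanish and I obtain the identity $e(A,B)=\frac{D|A||B|}{N}+\mathbf{x}^{\top}M\mathbf{y}$, which separates the ``expected'' count $D|A||B|/N$ from a purely spectral remainder.

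The second ingredient is the pair of norm computations $\|\mathbf{x}\|^2=|A|(1-|A|/N)$ and $\|\mathbf{y}\|^2=|B|(1-|B|/N)$, each following by subtracting the squared length of the $\mathbf{1}$-component from $\|\mathbf{1}_A\|^2=|A|$. The crux is then to bound the remainder $\mathbf{x}^{\top}M\mathbf{y}$ from below sharply. The route I would take is to pass to the Laplacian $L=DI-M$, whose second-smallest eigenvalue equals $D-\mu_2\ge D-\lambda$, and to exploit the identity $\mathbf{x}^{\top}L\mathbf{x}=\sum_{\{u,v\}\in E}(x_u-x_v)^2$; for the centered indicator of $A$ this sum counts exactly the edges leaving $A$, so the cut size is directly a Rayleigh quotient. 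The variational bound $\mathbf{x}^{\top}L\mathbf{x}\ge(D-\lambda)\|\mathbf{x}\|^2$, combined with the fact that the centered-indicator norm takes the value $\|\mathbf{x}\|^2=|A||B|/N$ in the relevant complementary configuration, then delivers the stated inequality. An alternative is to invoke the refined form of Lemma~\ref{expander mixing lemma}, in which the error factor $\sqrt{|A||B|}$ is sharpened to $\sqrt{|A|(1-|A|/N)\,|B|(1-|B|/N)}$, and to observe that this sharpened factor collapses to $|A||B|/N$, turning the two-sided mixing estimate into the one-sided lower bound $\frac{(D-\lambda)|A||B|}{N}$.

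The hard part will be the lower bound on the off-diagonal form $\mathbf{x}^{\top}M\mathbf{y}$. A naive operator-norm estimate gives only $\mathbf{x}^{\top}M\mathbf{y}\ge-\lambda\|\mathbf{x}\|\|\mathbf{y}\|$, and substituting the norm values leaves a $\sqrt{|A||B|}$-type slack that is too lossy to reach the sharp coefficient $(D-\lambda)/N$. The argument must therefore be routed through the quadratic-form (Rayleigh quotient) identity for the cut, where the spectral gap $D-\lambda$ enters tightly, rather than through the generic mixing inequality; making that slack collapse exactly to the required $|A||B|/N$ is the step I expect to demand the most care.
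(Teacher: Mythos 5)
The paper offers no proof of Lemma~\ref{min edges} at all --- it is quoted from \cite{friedman2003proof_ALONEIGVALUE} --- so your argument has to be judged on its own terms. On those terms it is essentially correct, but \emph{only} for the configuration $B=V\setminus A$, and you should be aware that this restriction is not a fixable gap in your argument: the lemma as printed, for \emph{arbitrary} disjoint $A,B$, is false. Take a $D$-regular graph with $D-\lambda>0$ (e.g.\ a Ramanujan graph) and let $A=\{u\}$, $B=\{v\}$ for two non-adjacent nodes; then $e(A,B)=0<(D-\lambda)/N$. Indeed, since $|A||B|/N\le\sqrt{|A||B|}$ always holds, the claimed bound would be \emph{stronger} than the lower bound extracted from the mixing lemma (Lemma~\ref{expander mixing lemma}), so no spectral estimate can deliver it for general disjoint sets --- which is exactly why your attempt to ``collapse the slack'' in the last paragraph cannot succeed in that generality. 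The inequality is true precisely in the complementary case, and that is the only way the paper ever uses it: Lemma~\ref{nodesonboundary} invokes it for $e(A,\bar{A})$ with $\bar{A}=V\setminus A$.

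Within the complementary case your mechanics are sound, and your closing worry in fact dissolves: writing $\mathbf{1}_A=\frac{|A|}{N}\mathbf{1}+\mathbf{x}$ and $\mathbf{1}_{\bar{A}}=\frac{|\bar{A}|}{N}\mathbf{1}+\mathbf{y}$ forces $\mathbf{y}=-\mathbf{x}$, so the off-diagonal form is
\[
\mathbf{x}^{\top}M\mathbf{y}=-\mathbf{x}^{\top}M\mathbf{x}\ge-\lambda\|\mathbf{x}\|^{2}=-\lambda\,\frac{|A||\bar{A}|}{N},
\]
using only $\mathbf{x}\perp\mathbf{1}$ and the fact that $\lambda$ bounds $M$ on $\mathbf{1}^{\perp}$, together with your (correct) computation $\|\mathbf{x}\|^{2}=|A|(1-|A|/N)=|A||\bar{A}|/N$. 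The ``naive'' operator-norm step is thus already exact here --- Cauchy--Schwarz loses nothing when $\mathbf{y}=-\mathbf{x}$ --- so neither the Laplacian detour nor the refined mixing lemma is actually needed, although both routes you sketch are valid: $\mathbf{x}^{\top}L\mathbf{x}=e(A,\bar{A})$ and $D-\mu_{2}\ge D-\lambda$ do hold on $\mathbf{1}^{\perp}$. In summary: your proposal is a correct and standard proof of the inequality for $B=V\setminus A$; the one thing to record explicitly is that the unrestricted statement of Lemma~\ref{min edges} is an over-claim (harmless in this paper, since every application takes $B=\bar{A}$), rather than something a more careful bound on $\mathbf{x}^{\top}M\mathbf{y}$ could rescue.
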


\begin{lemma}
\label{nodesonboundary}
Consider an $N$-node $D$-regular graph $G$ with $\lambda < D$. For every node set $A \subset V$, $|\partial(A)|$ is at least

\begin{equation*}
\min\left( \frac{(N-|A|)(D-\lambda)}{2D}, \frac{|A|}{4}\left(1-\frac{|A|}{N}\right)^2\left(\frac{D}{\lambda}-1\right)^2\right)
\end{equation*}
\end{lemma}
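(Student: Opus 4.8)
The plan is to sandwich the number of edges leaving $A$ between a lower bound from Lemma~\ref{min edges} and an upper bound from Lemma~\ref{expander mixing lemma}, and then read off two regimes from the resulting inequality. Write $B:=\partial(A)$, $a:=|A|$, $b:=|B|$, and observe that every edge with one endpoint in $A$ and one endpoint outside $A$ has its outside endpoint in $\partial(A)$ by definition of the node boundary; hence $e(A,V\setminus A)=e(A,B)$. Applying Lemma~\ref{min edges} to the disjoint pair $A,\,V\setminus A$ gives the lower bound $e(A,B)\ge \frac{(D-\lambda)a(N-a)}{N}$, while Lemma~\ref{expander mixing lemma} applied to $A,\,B$ gives the upper bound $e(A,B)\le \frac{abD}{N}+\lambda\sqrt{ab}$. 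Chaining these yields the single key inequality
\[
\frac{(D-\lambda)a(N-a)}{N}\;\le\;\frac{abD}{N}+\lambda\sqrt{ab}.
\]

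The heart of the argument is a case distinction on which of the two summands on the right dominates. If $\frac{abD}{N}\ge \lambda\sqrt{ab}$, I bound the right-hand side by $\frac{2abD}{N}$, cancel the common factor $a/N$ (assuming $a>0$), and solve for $b$ to obtain $b\ge \frac{(D-\lambda)(N-a)}{2D}$, the first term of the minimum. Otherwise $\lambda\sqrt{ab}$ dominates, and I bound the right-hand side by $2\lambda\sqrt{ab}$, isolate $\sqrt{ab}$, square (both sides being nonnegative since $D>\lambda$), and divide by $a$ to get $b\ge \frac{a}{4}\left(1-\frac{a}{N}\right)^2\left(\frac{D}{\lambda}-1\right)^2$, the second term. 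Since at least one of the two cases always holds, $|\partial(A)|=b$ is at least the smaller of the two expressions, which is exactly the claimed minimum. The degenerate cases $a=0$ and $a=N$ are covered automatically, because the second term vanishes when $a=0$ and the first vanishes when $a=N$, matching $|\partial(A)|=0$.

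I do not expect a serious obstacle: the only points requiring care are verifying the identity $e(A,V\setminus A)=e(A,\partial(A))$ so that the lower and upper bounds refer to the same quantity, and keeping the squaring step in the second case clean. The genuine conceptual crux — and the reason the final bound is a minimum of two terms rather than one closed expression — is precisely this case split: the mixing-lemma upper bound splits into a \emph{volume} term $\frac{abD}{N}$ and an \emph{error} term $\lambda\sqrt{ab}$, and which of the two controls $b$ depends on how large $A$ is relative to $\lambda^2 N^2/D^2$, producing the linear-in-$(N-a)$ bound for large $A$ and the quadratic-in-$(D/\lambda)$ bound for small $A$.
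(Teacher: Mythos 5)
Your proof is correct and follows essentially the same route as the paper: both chain the lower bound from Lemma~\ref{min edges} on $e(A,V\setminus A)=e(A,\partial(A))$ against the expander-mixing upper bound, then split into cases according to which of the volume term and the $\lambda\sqrt{|A|\,|\partial(A)|}$ error term dominates (your case condition is the paper's condition multiplied through by $|A|/N$). The algebra in both branches checks out, so nothing further is needed.
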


\begin{proof}
Let $\bar{A} := V\setminus A$ and $\partial :=|\partial(A)|$. Then, according to Lemma \ref{min edges},
\begin{equation}\label{eq1-partial}
\frac{|A|(N-|A|)}{N}(D-\lambda) \leq e(A,\bar{A}).   
\end{equation}
Furthermore, according to Lemma \ref{expander mixing lemma} we have
\begin{equation}\label{eq2-partial}
    e(A, \partial(A)) \leq \frac{|A| \partial D}{N} + \lambda \sqrt{|A| \partial}.
\end{equation}

Combining Equations~\eqref{eq1-partial},~\eqref{eq2-partial} and using the fact that $e(A,\bar{A}) = e(A, \partial(A))$, we conclude that
\begin{align*}
    \frac{|A|(N-|A|)}{N}(D-\lambda) \leq \frac{|A| \partial D}{N} + \lambda\sqrt{|A|\partial} \iff \\ \left(N-|A|\right)(D-\lambda)\le \partial D+ N\lambda \sqrt{\frac{\partial}{|A|}}.
\end{align*}

We note that if $\partial D \geq N\lambda \sqrt{\frac{\partial}{|A|}}$, then
\[
(N-|A|)(D-\lambda) \leq 2 \partial D \iff \frac{(N-|A|)(D-\lambda)}{2D} \leq \partial.
\]

If $\partial D < N\lambda \sqrt{\frac{\partial}{|A|}}$, then
\begin{align*}
(N-|A|)(D-\lambda) \leq 2N\lambda \sqrt{\frac{\partial}{|A|}} \iff \\ \frac{|A|}{4}\left(1-\frac{|A|}{N}\right)^2\left(\frac{D}{\lambda}-1\right)^2 \leq \partial.
\end{align*}
\end{proof}

\textit{Proof of Lemma~\ref{constantboundary}.} Lemma~\ref{nodesonboundary} implies the two following inequalities. Firstly, we note that
\begin{align*}
    \frac{(N-|A|)(D-\lambda)}{2D} \geq & \frac{(\frac{9}{10}N)(D-C\sqrt{D})}{2D} \\ \geq & \frac{(\frac{9}{10}N)(\frac{9}{10}D)}{2D} \geq \frac{2}{5}N
\end{align*}
where we used $|A|\le N/10$, $\lambda\le C \sqrt{D}$, and $D=\omega(1)$. Secondly, we have
\begin{align*}
    \frac{|A|}{4}\left(1 - \frac{|A|}{N}\right)^2\left(\frac{D}{\lambda}-1\right)^2 \geq & \frac{|A|}{4}\left(\frac{9}{10}\right)^2\left(\frac{\sqrt{D}}{C} -1\right)^2
   \\  \geq & C'|A|D
\end{align*}

for some constant $C' > 0$, where we again used $|A|\le N/10$, $\lambda\le C \sqrt{D}$, and $D=\omega(1)$. \qed

\section{Proof Sketch of Theorem~\ref{me-thm}}
\label{appendix-me-thm}
Here, we provide a proof sketch for Theorem~\ref{me-thm}. Let the super node containing the initially red node be $x$. According to Lemma~\ref{spreadfastEstar}, after two rounds, all nodes in $x$ are red w.h.p.

We want to apply Lemma~\ref{me-lemma} repeatedly until we reach at least $C_1N/D$ strong red super nodes. As the base case, we can apply the lemma for $t=2$ since the super node $x$ is strong red. Assume that we have applied the lemma for some $t_0$, and now want to apply it for $t_1=t_0+3$ to show that $s_{t_2}=\Omega(s_{t_1}d)$ for $t_2=t_0+6$. To apply the lemma, the condition $w_{t_1}=\mathcal{O}(s_{t_1}/d)$ needs to be satisfied. We know that $w_{t_1}\le s_{t_0}+w_{t_0}\le s_{t_0}+\mathcal{O}(s_{t_0}/d)\le 2s_{t_0}$. Furthermore, $s_{t_1}=\Omega(s_{t_0}d)$ implies that $s_{t_0}=\mathcal{O}(s_{t_1}/d)$. Combining the last two statements gives $w_{t_1}\le \mathcal{O}(s_{t_1}/d)$.

Therefore, after $3t^*$ rounds for some $t^*=\mathcal{O}(\log_d n)$, the process reaches at least $C_1N/D$ strong red super nodes with the error probability smaller than
\begin{multline*}
\sum_{i=1}^{t^*} \exp(-\Omega(d^i))+\sum_{i=1}^{t^*}o\left(\frac{1}{\log n}\right)  \\ \le  \sum_{i=1}^{t^*}\frac{1}{\Omega(d^i)}+\mathcal{O}(\log_d n)\cdot o\left(\frac{1}{\log n}\right) \\ \le  \mathcal{O}\left(\frac{1}{d}\right)+o(1) = o(1)
\end{multline*}
where we used that the first sum is a geometric series and $d=\omega(1)$. Hence, we can conclude that after $\mathcal{O}(\log_d n)$ rounds, there will be at least $C_1N/D$ strong red super nodes w.h.p.

There is one detail which was left out in the above argument. In addition to $\Omega(s_td)$ newly generated strong red nodes (according to Lemma~\ref{me-lemma}), some super nodes might get red but not fully red during a three-round phase (i.e., only a strict subset of their nodes become red). Intuitively speaking, such red nodes will contribute to the spread of the rumor, which is what we are after. However, to be completely accurate, we need to take such super nodes into account in our calculations, but we did not for the sake of simplicity.

Finally, one can prove that from a coloring with $\Omega(N/D)$ strong red super nodes, the process reaches at least $N/10$ strong red super rounds in a few more rounds w.h.p. This can be proven using Lemma~\ref{min edges} and an argument similar to the one in the proof of Lemma~\ref{spreadfastEstar} or the tightness of Theorem~\ref{er-thm}, which is left out to avoid redundancy.

\paragraph{Tightness.} We prove that the condition $d=\omega(1)$ is necessary by proving that if $d$ is a constant, then there is a constant probability that the rumor does not spread. Assume that all nodes in a super node $x$ are red (and the rest is uncolored). Consider a node $v$ in $\partial(x)$. The probability that $v$ is made red by its neighbor, say $u$, in $x$ (note that according to Observation~\ref{obvs1}, it has exactly one neighbor in $x$) is at most $d/\log^2$ since $|\hat{N}(v)\cap \hat{N}(u)|\le d$ and $|N(v)\cup N(u)|\ge d(v)\ge \log^2 n$. The probability that $v$ does not become red during the next $k$ rounds is at least $(1-d/\log^2 n)^k$. The probability that none of the nodes in $\partial(x)$ becomes red during the next $k$ round is at least $(1-d/\log^2 n)^{kD}\ge 4^{-(kDd)/\log^2 n}=4^{-kd^2}$ which is a non-zero constant probability, when $d$ is a constant. (We used the estimate $1-x\ge 4^{-x}$ for $0\le x\le 1/2$ and $D=d\log^2 n$.) Hence, with a constant probability we reach the configuration where only the nodes in $x$ are orange and the rest of nodes are uncolored (i.e., the rumor does not spread).

\section{Additional Experimental Results}
\label{appendix-plots}
We presented the outcome of our experiments on FB and TW SNs In Figure~\ref{ExperimentsIntext}. In Figure~\ref{appendixExperiments}, we provide similar results for G+, T-GE, and T-FR. (The countermeasure CM2 is only run on T-GE and T-FR due to the computational costs of running this countermeasure on G+.) Additionally, we provide the maximum and average standard deviation of the experiments in Table~\ref{tablestd}. 
\FloatBarrier
\begin{figure}[!h]
\def\tabularxcolumn#1{m{#1}}

\begin{tabularx}{\linewidth}{X}
\begin{tabular}{cc}
\subfloat[]{\includegraphics[width=4.5cm]{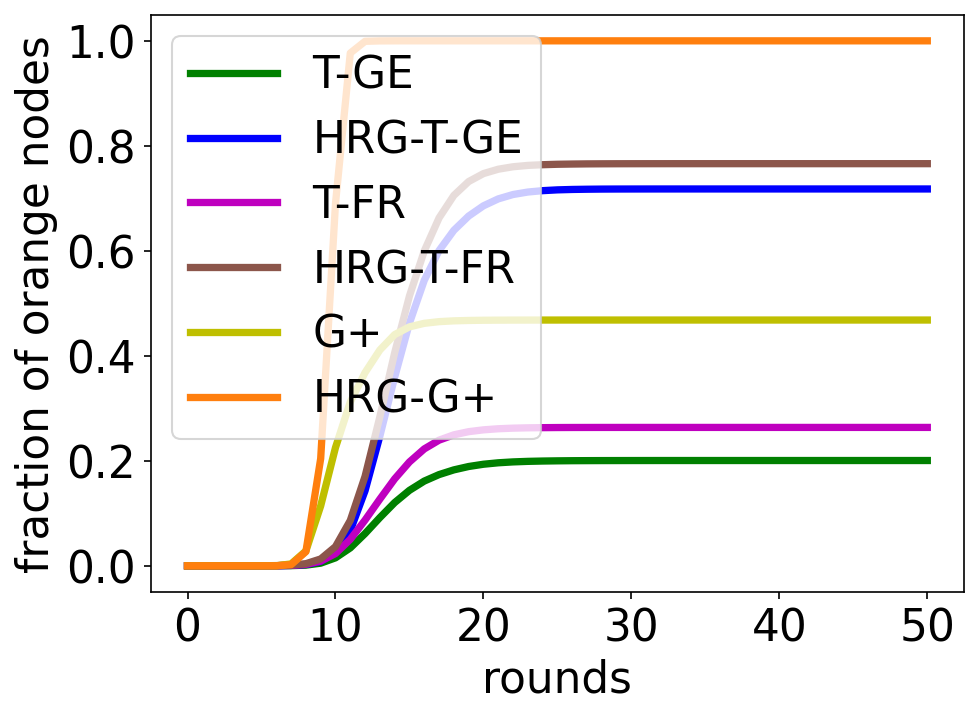}} 
\subfloat[]{\includegraphics[width=4.5cm]{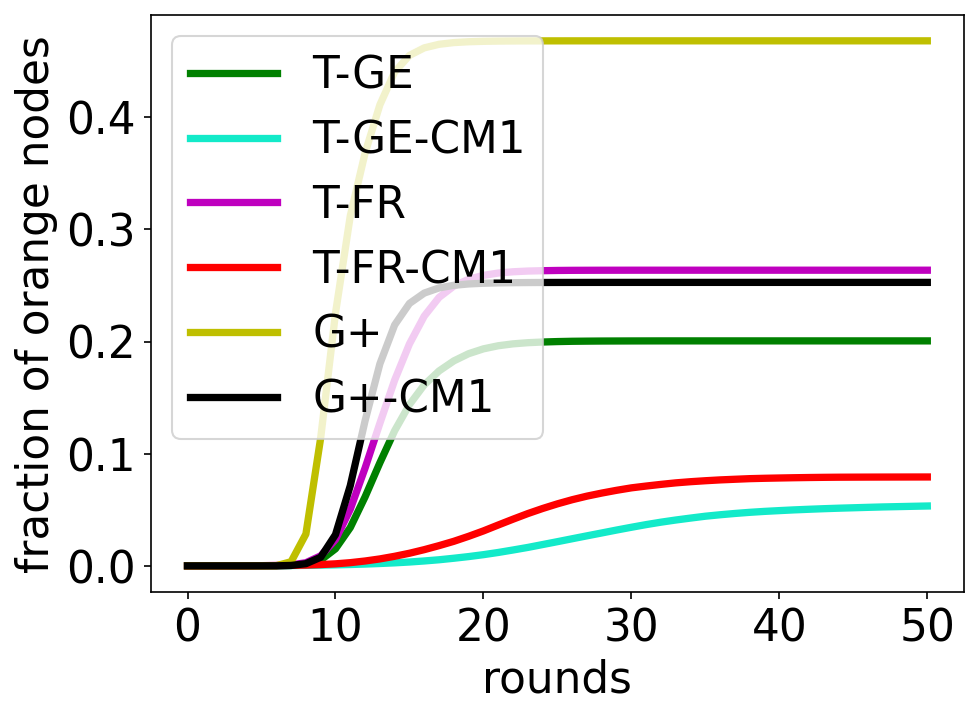}}\\
\subfloat[]{\includegraphics[width=4.5cm]{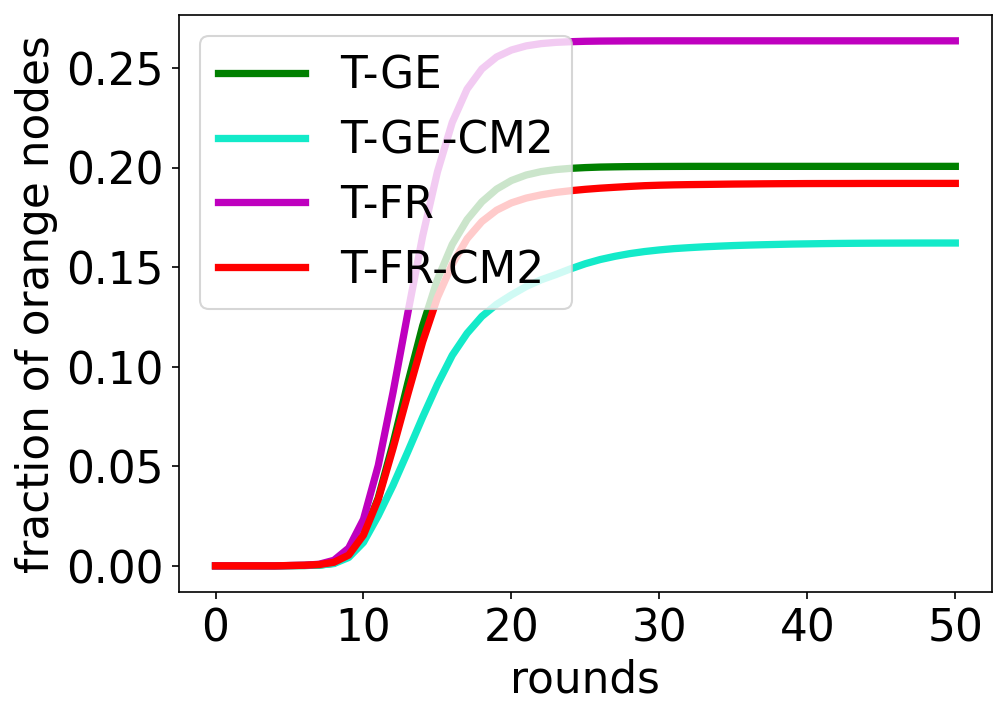}}
   \subfloat[]{\includegraphics[width=4.5cm]{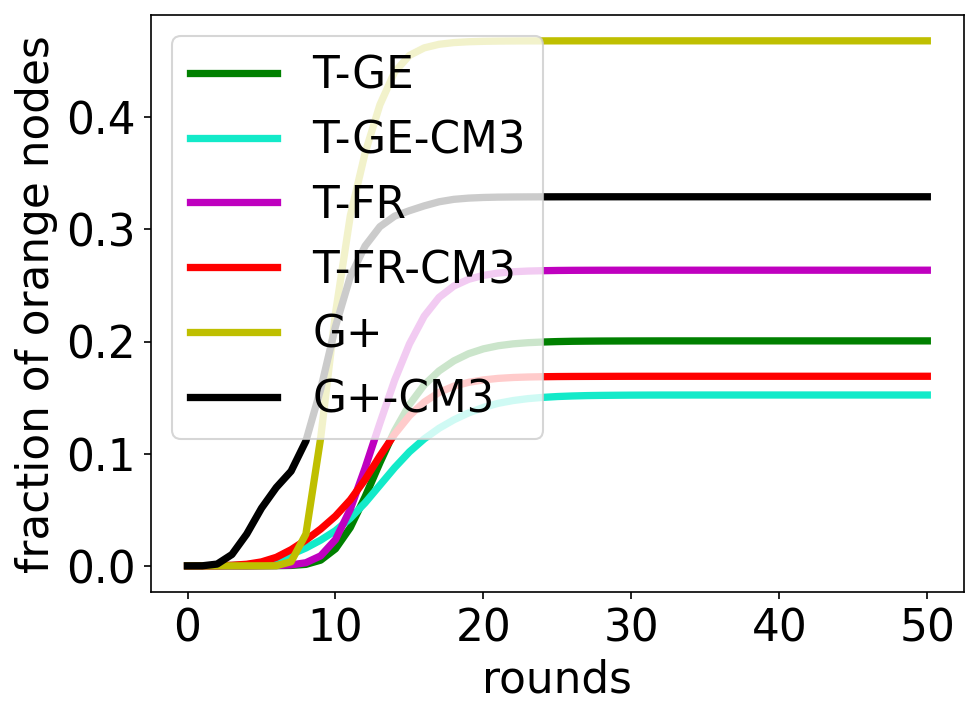}} \\
    \subfloat[]{\includegraphics[width=4.5cm]{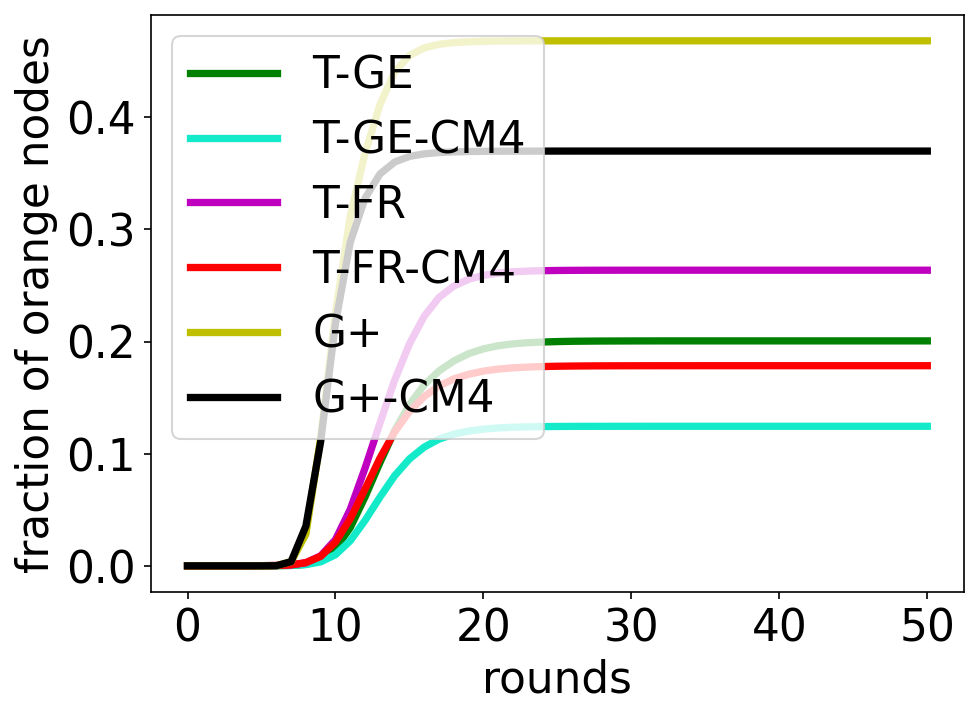}}
    \subfloat[]{\includegraphics[width=4.5cm]{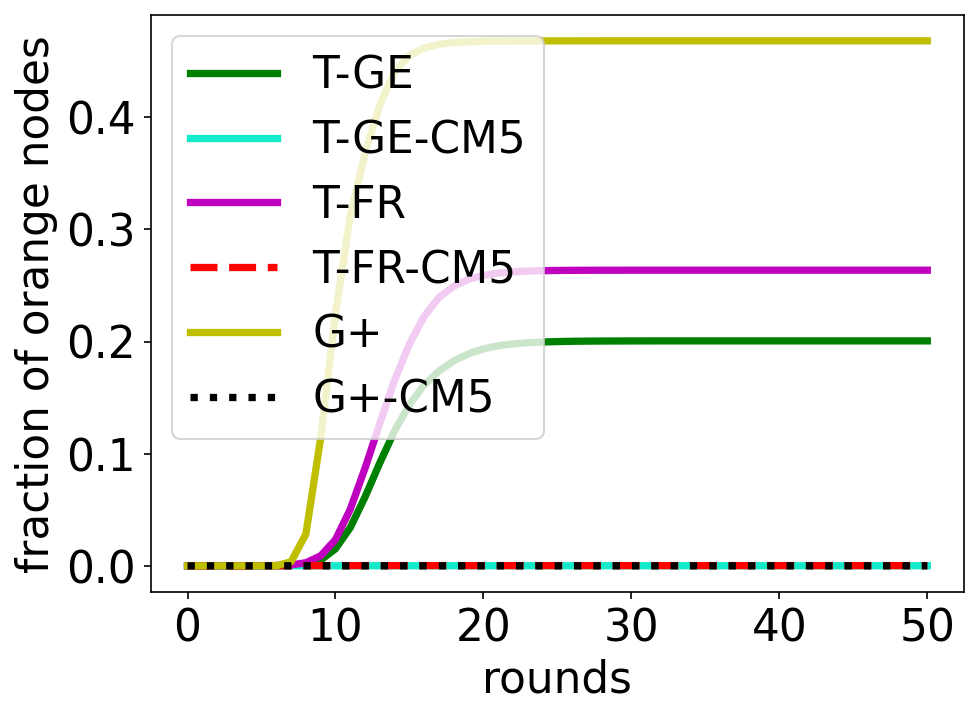}}\\
    \subfloat[]{\includegraphics[width=4.5cm]{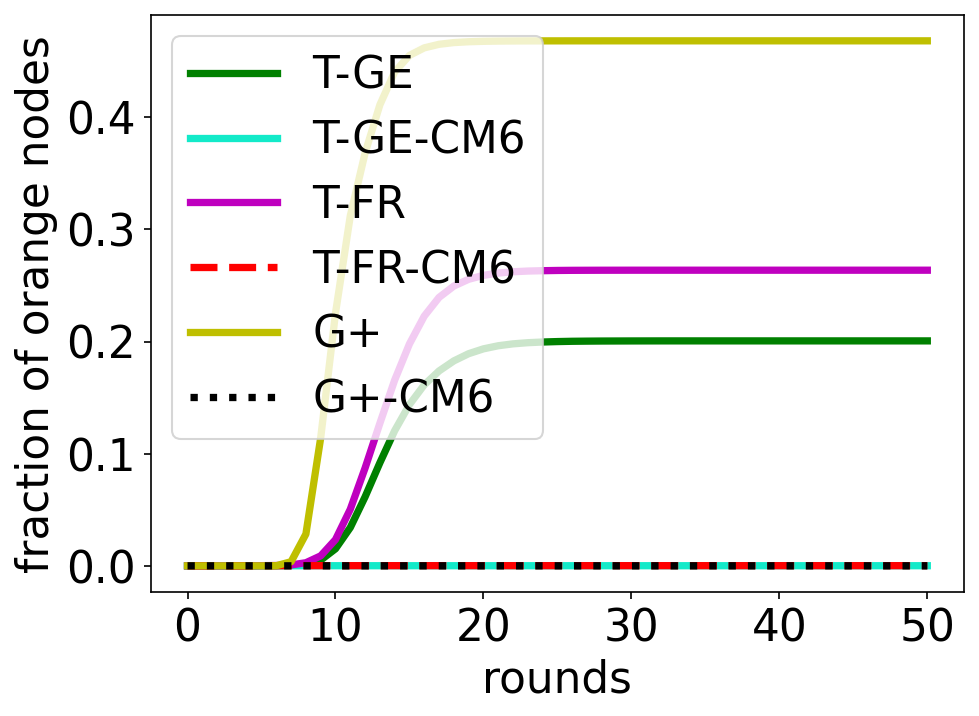}}
\end{tabular}
\end{tabularx}

\caption{Fraction of orange nodes, from one randomly chosen red node, on G+, T-GE, T-FR and (a) HRG with comparable parameters and (b-g) after the implementation of countermeasures CM1 to CM6.}\label{appendixExperiments}
\end{figure}

\section{Impact of Delay in CM4}
\label{appendix-delay}

In Figure~\ref{ExperimentsIntext}-(f), the outcome of our experiments are depicted for the countermeasure CM4 when the delay parameter $\tau$ is equal to 4. In Figure~\ref{CM4delays}, the outcome of the experiments for different values of $\tau$ are visualized. We observe, as one might expect, the final fraction of orange nodes increases in $\tau$ since the later we start spreading the anti-rumor, the more the rumor spreads. However, even for $\tau=1$ (which implies that a very robust rumor detection strategy is in place that can spot the rumor immediately), the rumor spreads to a large body of the network.

\FloatBarrier
\begin{figure}[ht]
\def\tabularxcolumn#1{m{#1}}
\hspace*{-1cm}
\begin{tabularx}{\linewidth}{cc}
\begin{tabular}{cc}
\subfloat[FB]{\includegraphics[width=4.5cm]{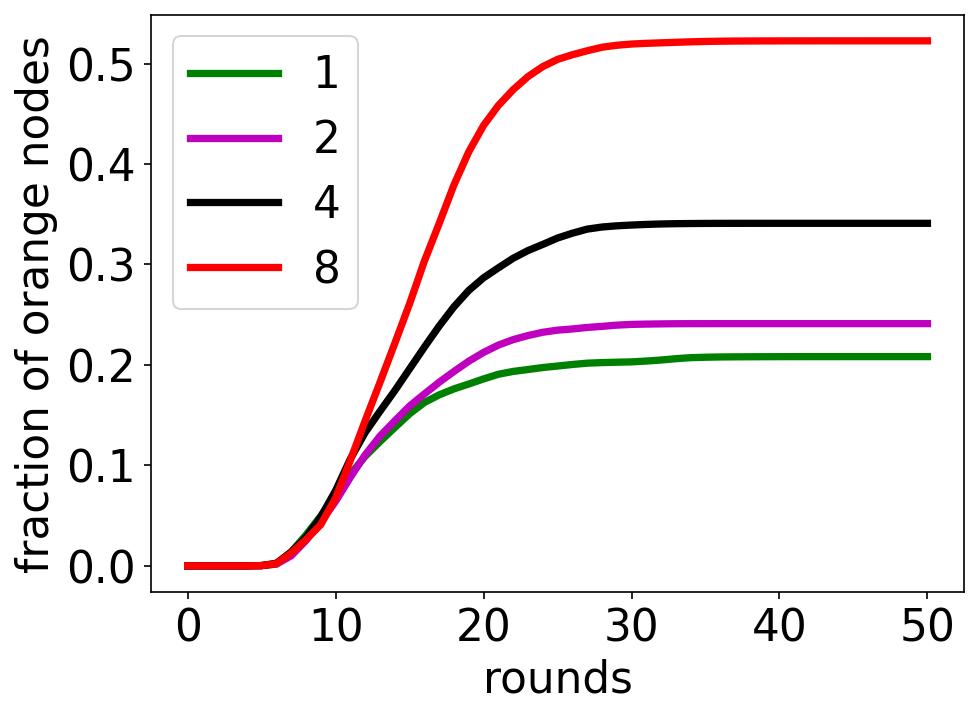}} 
\subfloat[TW]{\includegraphics[width=4.5cm]{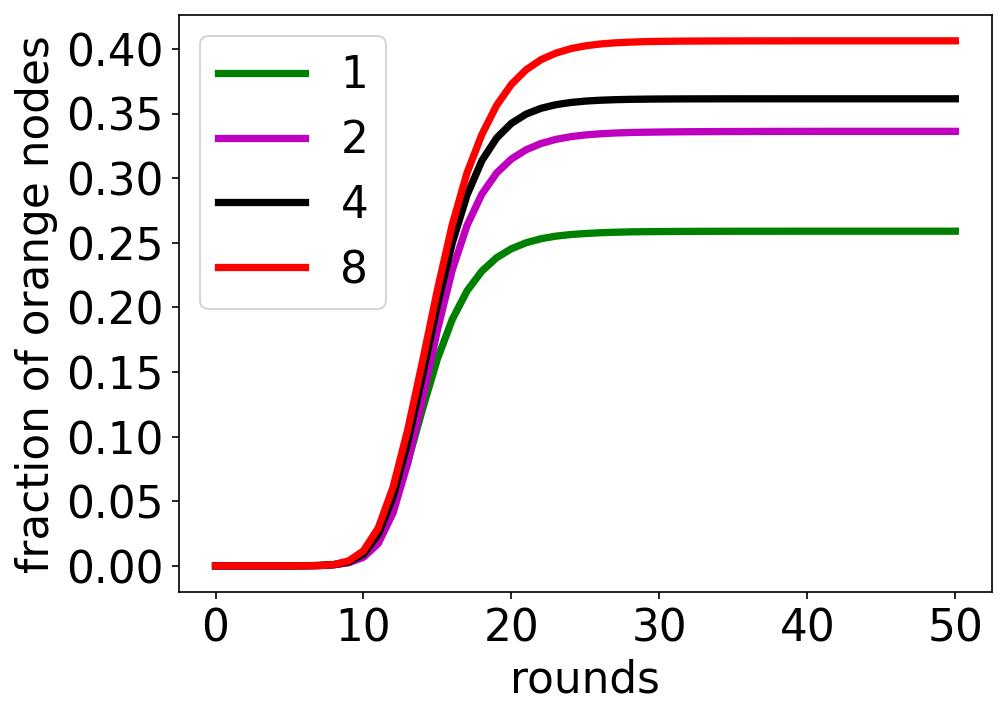}}
    % \subfloat[ME]{\includegraphics[width=4.5cm]{}}
\end{tabular}
\end{tabularx}

\caption{Countermeasure CM4 with different values of delay parameter $\tau$ applied to (a) FB, (b) TW.}\label{CM4delays}
\end{figure}
\FloatBarrier
\begin{table}[t]
\begin{tabular}{ |p{1.4cm}||p{3.5cm}|p{3.6cm}|}
 \hline
  & max std & av std \\
 \hline
 Flower   &  0.0003570714214271427 & 0.00019604979682331548  \\
  \hline
 ME-low   & 0.17915726335837603 & 0.09321301350557022\\
  \hline
 ER-high   & 0.19308936724674927 & 0.0859734160271039\\
  \hline
 ER-low   & 0.0019283692106285047 & 0.0011524607771197107\\
  \hline
 FB   & 0.3337923725486565 & 0.2461821571260035\\
 \hline
 HRG-FB  & 0.2829449495964113 & 0.18586474914826873\\
 \hline
 TW & 0.2877416333455981 & 0.2264068652466434\\
 \hline
 HRG-TW & 0.32001708571888393 & 0.24086638967425303 \\
 \hline
 FB-CM1 & 0.23726334289357218 & 0.1621547695015237 \\
 \hline 
 TW-CM1 & 0.2287326883235253 & 0.16046318390388867 \\
 \hline 
 ME & 0.1987321404868875 & 0.03284228340220624 \\
 \hline
 ME-CM1 & 0.20278712246755684 & 0.1183214121824717 \\
 \hline 
 FB-CM2 & 0.14736503430973327 & 0.0952178187923975 \\
 \hline 
 % TW-CM2 & & \\
 % \hline 
 ME-CM2 & 0.12251631531622152 & 0.09461957732147125 \\
 \hline 
 FB-CM3 & 0.27388679494130697 & 0.2006883240028243 \\
 \hline 
 TW-CM3 & 0.2828668855500441 & 0.21307654532259354 \\
 \hline 
 ME-CM3 & 0.3198130728449042 & 0.2258139637381856 \\
 \hline 
 FB-CM4 & 0.27594897712149935 & 0.21358225073276074 \\
 \hline 
 TW-CM4 & 0.3002198524587406 &0.2351432839236312 \\
 \hline 
 ME-CM4 & 0.18863712063443397 & 0.04379009270170083 \\
 \hline 
 FB-CM5 & 0.000340338505123834 & 0.00032744174880541094 \\
 \hline 
 TW-CM5 & 1.9105792826203032e-05
 & 1.8823414951313245e-05\\
 \hline 
 ME-CM5 & 0.0004172154719087009 & 0.0003354160313820336 \\
 \hline 
 FB-CM6 & 0.0 & 0.0 \\
 \hline 
 TW-CM6 & 0.0 & 0.0 \\
 \hline 
 ME-CM6 & 0.0 & 0.0 \\
 \hline 
 
\end{tabular}
\caption{The standard deviations in our experiments.}
\label{tablestd}
\end{table}

\section{Proof of Theorem~\ref{2timesME}}
\label{appendix-2timesME}
We prove that w.h.p. there is no node (outside $x$ and $y$) which has two neighbors in $x$ and $y$. This implies that no node outside $x$ and $y$ will ever become red. Thus, at most $2\log^2 n$ nodes (the nodes in $x$ and $y$) become red (and then orange) during the process, i.e., the rumor does not spread.

Note that by construction of moderate expander graphs, a node outside $x$ and $y$ cannot have more than one edge to $x$ (or to $y$). Thus, for a node to become red, it must have a neighbor in $x$ and a neighbor in $y$. An arbitrary node $v$ is adjacent to $d$ super nodes. The probability that the chosen $x$ and $y$ are in its neighborhood is
\[
\frac{{d \choose 2}}{{N \choose 2}}\le \frac{2d^2}{N^2}\le\frac{2n^{1-2\epsilon}\log^4 n}{n^2}=o\left(\frac{1}{n}\right).
\]
A union bound over all $n$ nodes implies that w.h.p. there is no node outside $x$ and $y$ which has two neighbors in the union of $x$ and $y$.

\section{Entries of Table~\ref{table}}
\label{appendix-table}
Let us start with C1. According to Figure~\ref{ExperimentsIntext}, it is easy to observe that CM1, CM2, CM3, and CM4 are not very effective. While they reduce the extent that the rumor spreads, it still spreads to a large fraction of the network. On the other hand, countermeasures CM5 and CM6 stop the spread of the rumor very effectively.

CM6 satisfies the criterion C2 since it simply requires the agents to spread a piece of information only when they have heard it twice. We have set ``jein'' for other countermeasures since they are not extremely hard to execute, but definitely require smart and careful implementation of some strategies. Most of them require a functional rumor detection procedure to be in place. While several algorithms, using techniques from NLP, have been proposed~\cite{dharod2021trumer}, the rumor detection is an infamously difficult and complex task. Furthermore, the algorithms to block nodes and edges in CM1 and CM2 need the full knowledge of the network.

CM1 and CM2 clearly do not support C3 since they require blocking agents or their connections. CM3 attempts to provide the users with extra relevant information, but would not intervene with their freed of expression. CM4 and CM5 do not intervene with the freedom of expression either but rather use that to spread the truth. CM6 requires educating the agents to express their opinions more wisely and patiently, but does not forbid them from doing so.

CM1 and CM2 clearly do not satisfy C4 since they change the network structure radically by removing a substantial number of edges/nodes. CM4 and CM5 manipulate the process to some extent, but they are not as intrusive as the first two countermeasures. One can argue CM3 and CM6 are even less intrusive.
\end{document}